\def\@tocline#1#2#3#4#5#6#7{\relax
  \ifnum #1>\c@tocdepth 
  \else
    \par \addpenalty\@secpenalty\addvspace{#2}%
    \begingroup \hyphenpenalty\@M
    \@ifempty{#4}{%
      \@tempdima\csname r@tocindent\number#1\endcsname\relax
    }{%
      \@tempdima#4\relax
    }%
    \parindent\z@ \leftskip#3\relax \advance\leftskip\@tempdima\relax
    \rightskip\@pnumwidth plus4em \parfillskip-\@pnumwidth
    #5\leavevmode\hskip-\@tempdima
      \ifcase #1
       \or\or \hskip 1em \or \hskip 2em \else \hskip 3em \fi%
      #6\nobreak\relax
      \dotfill
      \hbox to\@pnumwidth{\@tocpagenum{#7}}
    \par
    \nobreak
    \endgroup
  \fi}
\numberwithin{equation}{section}
\newtheorem{theorem}{Theorem}[section]
\newtheorem{lemma}[theorem]{Lemma}
\newtheorem{asumption}[theorem]{Assumption}
\newtheorem{corollary}[theorem]{Corollary}
\newtheorem{proposition}[theorem]{Proposition}
\theoremstyle{definition}
\newtheorem{definition}[theorem]{Definition}
\theoremstyle{remark}
\newtheorem{remark}[theorem]{Remark}
\newcommand{\bdm}{\begin{displaymath}}
\newcommand{\edm}{\end{displaymath}}
\newcommand{\bdn}{\begin{eqnarray}}
\newcommand{\edn}{\end{eqnarray}}
\newcommand{\bay}{\begin{array}{c}}
\newcommand{\eay}{\end{array}}
\newcommand{\ben}{\begin{enumerate}}
\newcommand{\een}{\end{enumerate}}
\newcommand{\beq}{\begin{equation}}
\newcommand{\eeq}{\end{equation}}
\newcommand{\eps}{\varepsilon}
\newcommand{\norm}[1]{\left\lVert #1 \right\rVert}
\newcommand{\cZ}{\mathcal{Z}}
\newcommand{\R}{\mathbb{R}}
\newcommand{\N}{\mathbb{N}}
\newcommand{\C}{\mathbb{C}}
\newcommand{\cE}{\mathcal{E}}
\newcommand{\cH}{\mathcal{H}}
\newcommand{\bH}{\mathbb{H}}
\newcommand{\cW}{\mathcal{W}}
\newcommand{\cL}{\mathcal{L}}
\newcommand{\one}{{\ensuremath {\mathds 1} }}
\newcommand{\dd}{\partial}
\newcommand{\half}{\frac{1}{2}}
\newcommand{\intR}{\int_{\R ^2}}
\newcommand{\intRN}{\int_{\R ^{2N}}}
\newcommand{\LLL}{\mathfrak{H}}
\newcommand{\cLau}{c _{\rm Lau}}
\newcommand{\PsiLau}{\Psi_{\rm Lau}}
\newcommand{\rhoF}{\rho_F}
\newcommand{\muF}{\mu_F}
\newcommand{\TFM}{\mathcal{M} ^{\rm TF}}
\newcommand{\ETF}{\mathcal{E} ^{\rm TF}}
\newcommand{\TFmin}{\sigma ^{\rm TF}}
\newcommand{\TFpot}{\Phi ^{\rm TF}}
\newcommand{\STF}{\Sigma ^{\rm TF}}
\newcommand{\Vnuc}{V_{\rm nuc}}
\newcommand{\rhonuc}{\rho_{\rm nuc}}
\newcommand{\mubf}{\boldsymbol{\mu}}
\newcommand{\nubf}{\boldsymbol{\nu}}
\newcommand{\rhoM}{\varrho_{\mathrm{max}}}
\newcommand{\Phiav}{\bar{\Phi}}
\newcommand{\rb}{\bar{r}}
\newcommand{\cEt}{\tilde{\mathcal{E}}}
\title{Local incompressibility estimates for the Laughlin phase}
\author[E. H. Lieb]{Elliott H. Lieb}
\address{Departments of Mathematics and Physics, Princeton University,
Princeton, NJ 08544, USA}
\email{lieb@princeton.edu}
\author[N. Rougerie]{Nicolas Rougerie}
\address{Universit\'e Grenoble 1 \& CNRS ,  LPMMC (UMR 5493), B.P. 166, F-38 042 Grenoble, France}
\email{nicolas.rougerie@grenoble.cnrs.fr}
\author[J. Yngvason]{Jakob Yngvason}
\address{Fakult\"at f\"ur Physik, Universit{\"a}t Wien, Boltzmanngasse 5, 1090 Vienna, Austria \&
Erwin Schr{\"o}dinger Institute for Mathematical Physics, Boltzmanngasse 9, 1090 Vienna, Austria.}
\email{jakob.yngvason@univie.ac.at}
\date{January, 2018}
\begin{document} 

\begin{abstract}
We prove sharp density upper bounds on optimal length-scales for the ground states of classical 2D Coulomb systems and generalizations thereof. Our method is new, based on an auxiliary Thomas-Fermi-like variational model. Moreover, we deduce density upper bounds for the related low-temperature Gibbs states. Our motivation comes from fractional quantum Hall physics, more precisely, the perturbation of the Laughlin state by  external potentials or impurities. These give rise to a class of many-body wave-functions that have the form of a product of the Laughlin state and an analytic function of many variables. This class is related via Laughlin's plasma analogy to Gibbs states of the generalized classical Coulomb systems we consider. Our main result shows that the perturbation of the Laughlin state cannot increase the particle density anywhere, with implications for the response of FQHE systems to external perturbations.
\end{abstract}

\maketitle

\tableofcontents

\section{Introduction}\label{sec:intro}

The fractional quantum Hall effect (FQHE)~\cite{Girvin-04,Jain-07,Laughlin-99,StoTsuGos-99} is a remarkable feature of the transport properties of 2D electron gases under strong perpendicular magnetic fields and at low temperatures. Soon after its experimental discovery~\cite{TsuStoGos-82}, it was recognized in the seminal works of Laughlin~\cite{Laughlin-83,Laughlin-87} that the origin of the effect lies in the emergence of a new, strongly correlated, phase of matter. The latter has been argued to host elementary excitations with fractional charge, a fact that was later experimentally confirmed~\cite{SamGlaJinEti-97,MahaluEtal-97,YacobiEtal-04}. Even more fascinating, but still lacking an experimental confirmation, is the possibility that these excitations (quasi-particles) are anyons~\cite{AroSchWil-84,ZhaSreGemJai-14,LunRou-16}, i.e. have quantum statistics different from those of bosons and fermions, the only known types of 
fundamental particles. Due to these exciting prospects, it is an ongoing quest in condensed matter physics to generalize FQHE physics to other, more flexible, contexts than the 2D electron gas~\cite{ParRoySon-13,BerLiu-13,Cooper-08,BloDalZwe-08,Viefers-08}. 

One of the distinctive features of strongly correlated FQH states is that they are incompressible liquids, rendering them very robust against perturbations and external fields. Incompressibility, in the form relevant to the FQHE, has in fact two aspects:

\smallskip

\noindent \textbf{1.}~The proposed strongly correlated FQHE wave functions are approximate ground states for the many-body Hamiltonian of the system at hand. Their energy is separated from the rest of the spectrum by a gap independent of volume and particle number.

\smallskip 

\noindent \textbf{2.}~Modifications of model FQHE ground states that stay within the (highly degenerate) ground eigenspace of the many-body Hamiltonian cannot increase the local one-particle density beyond a fixed value.

\smallskip

These two aspects are very non-trivial to check, particularly since we are talking about strongly correlated states of matter for which mean-field descriptions in terms of independent particles are not adequate. The main evidence so far in favor of Properties~\textbf{1} and~\textbf{2} has been experimental and numerical and it is an important theoretical challenge to improve on this.
\medskip 

The main paradigms of the FQHE are encoded in Laughlin's wave-function, the simplest FQH state:
\begin{equation}\label{eq:laufunc}
\PsiLau (z_1,\ldots,z_N) =\cLau \prod_{i<j}(z_i-z_j)^{\ell} e^{-B\sum_{i=1}^N|z_i|^2/4}.
\end{equation}
Here $z_1,\ldots,z_N\in\mathbb C$ are the positions of $N$ particles moving in $\mathbb R^2$, identified with the complex plane, and the constant $\cLau$ is a normalization factor (setting the $L^2$-norm equal to $1$). For fermions, $\ell$ is odd and $\geq 3$ (the case $\ell=1$ corresponds to noninteracting fermions), while for bosons $\ell\geq 2$ is even. The function~\eqref{eq:laufunc} has originally been proposed as a variational ansatz for the ground state of the many-body magnetic Schr\"odinger Hamiltonian\footnote{We use the label QM to distinguish the quantum mechanical Hamiltonian~\eqref{eq:mag hamil} from the classical Hamiltonians $H_N$ that will be introduced later in Section~\ref{sec:proof out}}
\begin{equation}\label{eq:mag hamil}
H_N^{\rm QM} = \sum_{j=1} ^N \left[\left( -i \nabla_j - \frac{B}{2} x_j ^\perp \right) ^2 + V (x_j) \right]+ \sum_{1\leq i < j \leq N} w(x_i-x_j) 
\end{equation}
acting on $L ^2 (\R ^{2N}) $, the Hilbert space for $N$ 2D particles. Here $x^{\perp}$ denotes the vector $x\in \R^2$ rotated by $\pi/2$ counter-clockwise, so that 
$$ \mathrm{curl} \frac{B}{2} x ^\perp = B$$
and thus $\frac{B}{2} x ^\perp$ is the vector potential of a uniform magnetic field, expressed in symmetric gauge.

To arrive at an ansatz of the form~\eqref{eq:laufunc} one assumes that the energy scales are set, in order of importance, by
\begin{itemize}
 \item the strength $B$ of a constant applied magnetic field perpendicular to the plane,
 \item the \emph{repulsive} pair-interaction potential $w$,
 \item the external scalar potential $V$, representing trapping and/or disorder. 
\end{itemize}
In the sequel we choose units so that the strength of the magnetic field is 2 and the magnetic length therefore $1/\sqrt2$.

Since the magnetic field is the main player, the first reduction is to replace the state space $L^2 (\R^{2N}) \simeq \bigotimes ^N L^2 (\R^2)$ by $\bigotimes ^N \LLL$ where ($z$ is the complex coordinate in the plane)
\beq 
\LLL = \left\{ \psi (z) = f(z) e ^{-|z| ^2 /2} \in L ^2 (\R ^2) \:|\: f \mbox{ analytic }\right\}
\eeq
is the lowest Landau level (LLL), ground eigenspace of the magnetic Laplacian $\left( -i \nabla - \frac{B}{2} x ^\perp \right) ^2$. In other words, one reduces available one-body orbitals to those minimizing the magnetic kinetic energy. Spins are all aligned with the external magnetic field so that one only needs consider symmetric/antisymmetric wave-functions of $\bigotimes ^N \LLL$ to describe bosons/fermions.

Secondly, in order to suppress repulsive interactions the many-body wave-function of the system should vanish when two particles meet. Combined with the regularity of the LLL orbitals, this forces the inclusion of Jastrow factors $(z_i-z_j) ^\ell, \ell\in \N$ in the wave-function, leading to~\eqref{eq:laufunc}. The integer $\ell$ is chosen to accommodate symmetry (the wave-function must be symmetric/antisymmetric under coordinates exchanges $z_i\leftrightarrow z_j$ to describe bosons/fermions) and to fix the filling factor (number of particles per magnetic flux quantum) equal to $1/\ell$. In the absence of an external potential there is then no free variational parameter in the ansatz ~\eqref{eq:laufunc}.

The proposed form \eqref{eq:laufunc} turns out to be sufficiently robust for serving as the basic ingredient of approximate ground states for large classes of repulsive interaction potentials $w$ and  external potentials $V$. 
\medskip

A possible mathematical formulation of Property \textbf{1} is to consider a model pair-potential $w$ whose projection to the $N$-body LLL has~\eqref{eq:laufunc} as an exact zero-energy ground-state~\cite{Haldane-83,TruKiv-85,PapBer-01}. The claim would then be that the next energy level is bounded below, independently of $N$. This problem has remained open so far, see~\cite{LewSei-09,RouSerYng-13b} for further discussion. 
\medskip

Property~\textbf{2} is the main subject of the present contribution. To formulate it properly, we note that the arguments leading to choosing~\eqref{eq:laufunc} as a variational ansatz leave the possibility to choose any $L^2$-normalized function of the form
\begin{equation}\label{eq:fullcorr} 
\Psi_F (z_1,\dots, z_N)= F(z_1,\dots, z_N)\PsiLau (z_1,\dots, z_N)
\end{equation}
with $F$ analytic and symmetric under exchange of the $z_i$.   This form exhausts the class of functions that minimize the magnetic kinetic energy and at the same time avoid repulsive interactions by vanishing at least as $(z_i-z_j)^\ell$ as $z_i$ and $z_j$ come together. In the bosonic case and with $\ell= 2$ these are exactly the ground states of the contact interaction~\cite[Section~2.1]{RouSerYng-13b}. 
The factor $F$ allows an adaption to an external potential. We shall refer to the class of states of the form~\eqref{eq:fullcorr} as {\em fully correlated states}.


Consider now the one-particle density
\begin{equation}\label{eq:density}
\rho_F (z) = N \int_{\R ^{2(N-1)}}  \left| \Psi_F \left(z, z_2,\ldots,z_N \right) \right| ^2 dz_2 \ldots dz_N
\end{equation}
of a state of the form~\eqref{eq:fullcorr}. For $F=1$, i.e., for the Laughlin state~\eqref{eq:laufunc}, it was argued in~\cite{Laughlin-83} and proved rigorously in ~\cite{RouSerYng-13a,RouSerYng-13b} that the density takes essentially the constant value $1/(\pi \ell)$ in the disk $D(0,\sqrt{\pi\ell N})$, and drops quickly to $0$ outside of 
the disk\footnote{This is a consequence of concentration of measure results for $\beta$-ensembles, see~\cite{Serfaty-15,Serfaty-17} for review and references.}. Property~\textbf{2} can then roughly be formulated as 
\begin{equation}\label{eq:property 2}
\boxed{\rho_F \lessapprox \frac{1}{\pi \ell} \mbox{ for any state of the form~\eqref{eq:fullcorr}}.}
\end{equation}
Some care has to be taken in formulating this rigorously for, in view of existing numerical simulations~\cite{Ciftja-06,CifWex-03}, the symbol $\lessapprox$ above cannot actually stand for a microscopic pointwise bound for finite $N$. 

In this paper we improve and generalize results of~\cite{RouYng-14,RouYng-15} (see also~\cite[Chapter~3]{Rougerie-hdr} for further discussion) by proving that~\eqref{eq:property 2} holds in the sense of local averages over length scales $\gg N ^{1/4}$, which are much smaller than the extension of the Laughlin state itself, of order $N ^{1/2}$. In view of recent results~\cite{BauBouNikYau-15,Leble-15b}, which can be applied to the pure Laughlin state, i.e., $F=1$, it is natural to conjecture that~\eqref{eq:property 2} in fact holds on any length scale $N^{\alpha},\alpha >0$. In order to understand the response of the Laughlin state to external potentials, it is {\em essential}, however, to know that the density bound holds for {\em all} analytic factors $F$. The methods of~\cite{BauBouNikYau-15,Leble-15b}, in contrast to those we develop here, do not apply in such generality. 

The bound \eqref{eq:property 2} is an expression of the robustness of the Laughlin state against perturbations by external fields because it forbids a fully-correlated state to accommodate variations of an external potential by concentrating arbitrarily large portions of its mass in energetically favorables places: Any redistribution of the mass must respect the same density  bound \eqref{eq:property 2} as the Laughlin state.

On the other hand it has recently been proved in \cite{RouYng-17} that, if~\eqref{eq:property 2} is known to hold for all strongly correlated states, then the minimal energy in an external potential within this class of states can asymptotically be attained in states saturating the bound~\eqref{eq:property 2} and having the simple form
\begin{equation}\label{eq:lesscorr}
\Psi_f (z_1,\ldots,z_N) = \prod_{j= 1} ^N f (z_j) \PsiLau (z_1,\ldots,z_N) 
\end{equation}
where $f$ is a polynomial of a single variable.  Physically, this means that it is never favorable to add more correlations to the Laughlin state in order to accommodate an external potential, provided the latter is weak enough so that it does not make the ground state jump across the energy gap assumed in Property~\textbf{1}.

The zeros $a_j\in\C$ of the polynomial $f$, 
\begin{equation}\label{eq:quasi holes}
 f (z) = {\rm const.} \prod_{j=1} ^J (z-a_j),
\end{equation}
are interpreted as the locations of {\em quasi-holes}, each carrying a charge $1/\ell$.
If the locations $a_1,\ldots,a_J$ are treated quantum-mechanically instead of classically as above, one should expect that the quasi-holes behave as anyons~\cite{AroSchWil-84,LunRou-16,ZhaSreGemJai-14} with statistics parameter $-1/\ell$.  

The results of the present paper together with \cite{RouYng-17} confirm rigorously 
Laughlin's original intuition~\cite{Laughlin-83,Laughlin-87} that the response of the Laughlin state to a disorder potential is  to generate uncorrelated quasi-holes, which is also in accord with experimental studies~\cite{YacobiEtal-04}.

\medskip

In the next section we state our main results precisely and discuss them further (they were announced in the short paper~\cite{LieRouYng-16}). The rest of the paper is devoted to their proofs.

\bigskip

\noindent\textbf{Acknowledgments:} We received financial support from the French ANR project ANR-13-JS01-0005-01 (N.~Rougerie) and the US NSF grant PHY-1265118  (E.~H.~Lieb). 

\section{Main results}\label{sec:statements}

\subsection{Universal density bounds}\label{sec:results}

As discussed in the introduction (see also~\cite{RouYng-14,RouYng-15}), we assume that the magnetic field and repulsive interactions are strong enough, so that the ground state of the system has the form~\eqref{eq:fullcorr}. We aim at proving that there is a universal upper bound to the local density of all such states, on scales much smaller than the total size of the system. Thus, let
\begin{equation}\label{laufunc}
\PsiLau (z_1,\ldots,z_N) =\cLau \prod_{i<j}(z_i-z_j)^{\ell} e^{-\sum_{i=1}^N|z_i|^2/2}
\end{equation}
be the $L^2$-normalized Laughlin function with exponent $\ell \in \N$, in units where the magnetic length is $1/\sqrt{2}$.  As previously explained  we consider analytic perturbations thereof:
\begin{equation}\label{eq:fullcorr set}
\cL^N _{\ell} = \left\{ \Psi_F = F(z_1,\dots, z_N)\PsiLau \:\big| \: F \mbox{ analytic and symmetric}, \: \Vert \Psi_F \Vert_{L^2 (\R ^{2N})} = 1 \right\}.
\end{equation}
For a function $\Psi_F \in \cL^N _{\ell}$ we denote by
\begin{equation}\label{eq:original density}
\rho_F (z) = N \int_{\R ^{2(N-1)}}  \left| \Psi_F \left(z, z_2,\ldots,z_N \right) \right| ^2 dz_2 \ldots dz_N
\end{equation}
the associated one-particle density. Our main result is a universal upper bound on $\rho_F$, holding on mesoscopic length scales: 

\begin{theorem}[\textbf{Rigidity bound for fully-correlated states}]\label{thm:main}\mbox{}\\ 
For any $\alpha > 1/4$, any disk  $D$ of radius $N^{\alpha}$ and any (sequence of) states $\Psi_F\in \cL_\ell^N$ we have 
\begin{equation}\label{bound}
\int_{D} \rho_F \leq \frac{1}{\pi\ell} |D| (1+o(1))
\end{equation}
where $|D|$ is the area of the disk and $o(1)$ tends to zero as $N\to \infty$. 

More generally, for any open set with Lipschitz boundary $\Omega$, denote by $\Omega_r$ the set obtained by dilating $\Omega$ around some origin by a factor $r = N ^\alpha$. Then, for any $\alpha > 1/4$ 
\begin{equation}\label{eq:gen dens bound}
\int_{\Omega_r} \rho_F \leq \frac{1}{\pi\ell} |\Omega_r| (1+o(1))
\end{equation}
as $N\to \infty.$
%
%
\end{theorem}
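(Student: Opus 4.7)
The plan is to recast $|\Psi_F|^2$ as a Gibbs measure of a classical 2D Coulomb gas via Laughlin's plasma analogy, to compare the resulting one-body density with the minimizer of an auxiliary Thomas-Fermi--type functional in which the target cap $(\pi\ell)^{-1}$ is enforced from the outset, and to exploit the pluri-subharmonicity of $\log|F|$ (a consequence of holomorphy) to guarantee that the TF minimizer never exceeds this cap. The starting point is the identity
\begin{equation*}
|\Psi_F(z_1,\ldots,z_N)|^2 = \frac{1}{\ZN}\exp\!\Bigl(2\ell\sum_{i<j}\log|z_i-z_j| - \sum_i|z_i|^2 + 2\log|F(z_1,\ldots,z_N)|\Bigr),
\end{equation*}
which realises $\rho_F$ as the first marginal of a jellium in a harmonic trap, perturbed by the $N$-body potential $-2\log|F|$. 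The key structural fact is that, as a function of any single variable $z_j$ with the others held fixed, $\log|F|$ is subharmonic, its Laplacian being a nonnegative measure supported on the zero set of $F$ in that variable.

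The auxiliary TF functional I would use is
\begin{equation*}
\cEt[\rho] = \int |z|^2\rho\,dz - \ell \iint \log|z-z'|\,\rho(z)\rho(z')\,dz\,dz',
\end{equation*}
minimised over nonnegative $\rho$ with $\int\rho = N$ subject to the hard cap $\rho \leq (\pi\ell)^{-1}$. An Euler--Lagrange analysis shows that this minimiser is the indicator of the disk of area $\pi\ell N$ divided by $\pi\ell$, i.e.\ the constraint is active throughout the support; this recovers the expected density profile of the pure Laughlin state. When the analytic factor $F$ is introduced as a (possibly many-body) perturbation, its contribution enters only through $-\int \log|F|\,\rho$-type terms whose distributional Laplacian in each coordinate is $\leq 0$. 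Adding $F$ can therefore only \emph{decrease} the TF density relative to the cap, never increase it. The target bound is thus structurally built into the auxiliary model.

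To transfer this to $\rho_F$ itself, I would use a Hellmann--Feynman identity: for a smooth bump $\chi$ supported in the rescaled domain $\Omega_r$,
\begin{equation*}
\int \chi\,\rho_F = -\frac{d}{dt}\log \mathcal{Z}_N^{(F,t)}\Big|_{t=0},\qquad \mathcal{Z}_N^{(F,t)} = \int |\Psi_F|^2 \,e^{-t\sum_j\chi(z_j)}\,dz_1\cdots dz_N.
\end{equation*}
Sandwiching $\log\mathcal{Z}_N^{(F,t)}$ between the free energies of the corresponding constrained TF problems uses the standard mean-field toolkit for 2D log-gases: an Onsager-type splitting of the pair interaction into a nonnegative microscopic defect plus a smeared mean-field energy, together with entropy estimates. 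Crucially, in each of these bounds the subharmonicity of $\log|F|$ appears with a favourable sign, which is what allows the comparison to proceed uniformly over all admissible $F$.

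The hard part is achieving mean-field accuracy sufficient to resolve the density at mesoscopic scale $N^\alpha$ \emph{uniformly over all analytic symmetric $F$}. Standard concentration results for $\beta$-ensembles are inapplicable because they require regularity of the one-body confining potential, which the genuinely $N$-body factor $F$ does not provide. This is precisely why the TF model has to encode the density cap as an a priori constraint rather than obtaining it from the Euler--Lagrange equation of an unconstrained problem: only one-sided energy comparisons are then required, and all the irregularity of $F$ contributes with the favourable sign. Balancing the leading contribution $|\Omega_r|\asymp N^{2\alpha}$ against subleading errors of order $N^{1-2\alpha}$ coming from the Onsager splitting and entropy at the microscopic interparticle scale $N^{-1/2}$ explains the threshold $\alpha>1/4$ appearing in the theorem.
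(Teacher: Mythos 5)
Your high-level inventory of ingredients is on target: the plasma analogy, the subharmonicity of $\log|F|$ in each variable as the single structural fact used about $F$, the Feynman--Hellmann trick to extract density information from free-energy comparisons, and the insight that only one-sided estimates are needed. You also correctly anticipate that an auxiliary Thomas-Fermi-type model with a hard density cap will appear. However, the way you deploy the TF model is structurally different from the paper's, and in your version I do not see how it closes the loop.

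The central gap is this. Your TF functional $\tilde\cE[\rho]$ has the cap $\rho\leq(\pi\ell)^{-1}$ imposed as an a priori constraint on the \emph{trial} density, so of course the TF minimiser respects the cap; but that is tautological and says nothing about $\rho_F$. You then propose to ``sandwich'' the free energy of the Gibbs state between constrained TF energies, but you never explain how the constraint on the auxiliary minimiser is transmitted to a constraint on $\rho_F$. The usual mean-field lower bound on the free energy (Onsager splitting plus entropy) does not produce a cap on $\rho_F$; it only compares the Gibbs free energy to an \emph{unconstrained} mean-field energy at the empirical density. There is no mechanism in your argument that forbids $\rho_F$ from exceeding $(\pi\ell)^{-1}$ locally at the price of a small energy penalty, precisely because $F$ is a genuinely $N$-body factor that cannot be reduced to a one-body external potential in your TF functional. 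The paper's TF model is used for something quite different: the ``nuclei'' are a $K$-point subset of a (ground-state) particle configuration, the ``electrons'' are a fictitious neutralising density capped at $1$, and the output is the \emph{screening region} $\Sigma^{\rm TF}$, the open set where the total TF potential is positive. The decisive steps you are missing are (a) the \emph{exclusion rule} (Proposition~\ref{pro:GS exclu}): for a ground state of the classical Hamiltonian, no remaining particle can lie inside the screening region of any $K$-point subset, a consequence of superharmonicity in each variable of $-\log|F|$; and (b) the purely geometric/combinatorial Theorem~\ref{thm:mainconj}: \emph{any} configuration satisfying the exclusion rule has local density at most $1$. Without (a) and (b), there is no path from subharmonicity of $\log|F|$ to a pointwise density cap. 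Your final paragraph on the threshold $\alpha>1/4$ also conflates sources of error: in the paper the $N^{1/4}$ scale arises from balancing the entropy/temperature correction ($\log N/\eps$ via Feynman--Hellmann) and the a priori moment term against the main contribution $N^{-1+2\alpha}$, and one must also run a separate localisation argument to remove the implicit angular-momentum constraint --- a step your proposal does not mention.

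In short, you have the right raw materials but not the key idea: the paper's TF theory is not a mean-field approximation to the plasma but a construction of screening regions, and it is the exclusion rule for ground-state configurations that converts subharmonicity of $\log|F|$ into a density bound. The cap in the TF variational problem models the uniform background charge in the plasma analogy (not the sought-after cap on $\rho_F$), and that distinction is essential.
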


\subsection{Potential energy estimates}

With Theorem~\ref{thm:main} in hand, we can return to the original physical problem~\eqref{eq:mag hamil} in the presence of an external potential $V$. If the magnetic kinetic energy is frozen by restricting one-body orbitals to the LLL, and if we further assume that the interaction energy is also frozen by restricting to~\eqref{eq:fullcorr set}, the only non-trivial energy term left in the Hamiltonian is set by $V$. An interesting problem consists in studying the minimization of this energy amongst fully-correlated states:
\begin{equation}\label{eq:pot ener}
E_V (N,\ell) := \inf \left\{ \intR V (z) \rhoF (z) dz, \: \rhoF \mbox{ as in~\eqref{eq:original density}, }  \Psi_F \in \cL^N _{\ell} \right\}. 
\end{equation}
This infimum will not necessarily be  attained, but it is easy to see that, for fixed $N$, the density $\rhoF$ is bounded in $L^{\infty}$ uniformly in $F$ (this follows from the fact that all one-body orbitals are in the LLL~\cite{Carlen-91}, see the discussion in~\cite[Remark~2.2]{RouYng-14}). Therefore the infimum exists for any $V$ such that its negative part, $V_-$, is in  $L ^1 (\R ^2)$. 

In view of Theorem~\ref{thm:main}, it is fairly natural to expect a lower bound to~\eqref{eq:pot ener} in terms of the ``bathtub energy'', defined as~\cite[Theorem~1.14]{LieLos-01}  
\begin{equation}\label{eq:bathtub}
E^{\rm bt}_V (\ell) := \min\left\{ \intR V \rho \: \Big| \: 0 \leq \rho \leq \frac{1}{\pi \ell},\ \intR\rho = N \right\}.
\end{equation}
We can prove this for a large class of external potentials $V$, that we now describe. Since the typical extension of the Laughlin state is of order $\sqrt{N}$, it is more convenient to state our assumptions in terms of a scaled version of the potential.

\begin{asumption}[\textbf{The external potential}]\label{asum:pot}\mbox{}\\
Let $V \in C^2 (\R ^2)$ be a (sequence of) potential(s) that we write in the manner 
\begin{equation}\label{eq:scale pot asum}
V (x) = U \left( \frac{x}{\sqrt{N}}\right) 
\end{equation}
for a (possibly $N$-dependent) $U\in C^2 (\R ^2)$. Assume that there exists a fixed radius $R >0$ and an $\alpha > 0$ such that
\begin{enumerate}
 \item \textbf{\emph{Local behavior:}} 
\begin{equation}\label{eq:pot vari}
\norm{\nabla U}_{L ^{\infty} (D(0,R))} \leq C N ^{1/2-\alpha}, \quad  \norm{\Delta U}_{L ^{\infty} (D(0,R))} \leq C N ^{1- 2\alpha}.
\end{equation}
 \item \textbf{\emph{Growth at infinity:}} $U \geq \underline{U}$ outside of $D(0,R)$ for a potential $\underline{U}\in C^2 (\R ^2)$ satisfying
\begin{equation}\label{eq:pot infi 1} 
\norm{\nabla \underline{U}}_{L ^{\infty} (\R^2)} \leq C, \quad  \norm{\Delta \underline{U}}_{L ^{\infty} (\R^2)} \leq C
\end{equation}
and 
\begin{equation}\label{eq:pot infi 2}
\int_{\R ^2} \exp(- \underline{U}) \leq C  
\end{equation}
independently of $N$.
\end{enumerate}
\end{asumption}

A \emph{typical example} we have in mind is a potential $V$ of the sort 
\begin{equation}\label{eq:typical pot}
V (x) = V_{\rm trap} (x) + V_{\rm disorder} (x)=  U_{\rm trap} \left( \frac{x}{\sqrt{N}}\right) + U_{\rm disorder} \left( \frac{x}{\sqrt{N}}\right)
\end{equation}
where the part $U_{\rm trap}$ represents trapping in a given sample, and satisfies Item 2 of the assumptions, whereas $U_{\rm disorder}$ represents some disorder due to impurities in the sample, is compactly supported in a fixed disk and satisfies Item 1 of the assumption.

Item 1 means that we consider a potential $V$ whose variations happen on length scales $O(N^{\alpha}), \alpha >0$, the best scale we may hope for (but cannot quite reach) in our density bounds, Theorem~\ref{thm:main}. Note that this should \emph{not} be interpreted as saying that the potential energy bounds we are about to state work on the optimal length scale, see Remark~\ref{rem:lengths} below. Item 2 is essentially a mild trapping assumption. It is clearly satisfied by any regular enough function $U$ growing at least like $c\log|x|, c > 2$ at infinity. 

We shall prove the following corollary, which improves~\cite[Theorem~1.1]{RouYng-15}:

\begin{corollary}[\textbf{Lower bound to the potential energy}]\label{cor:pot ener}\mbox{}\\
Let $V$ be as in Assumption~\ref{asum:pot}. Then 
\begin{equation}\label{eq:bathtub low bound}
E_{V} (N,\ell) \geq E^{\rm bt}_{V} (\ell) (1+o(1)) + o(N)
\end{equation}
as $N\to \infty$.
\end{corollary}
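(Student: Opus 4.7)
My plan is to pass from the mesoscopic density bound of Theorem~\ref{thm:main} to the potential-energy inequality by coarse-graining the density and then invoking the bathtub principle. Fix any $\beta \in (1/4,\alpha)$ (so the interesting regime is $\alpha > 1/4$) and set $r := N^{\beta}$. Let $\chi_r$ denote the $L^1$-normalized characteristic function of $D(0,r)$, and introduce the mollified density and potential
$$\tilde\rho_F := \chi_r * \rho_F, \qquad \tilde V := \chi_r * V.$$
Applying Theorem~\ref{thm:main} to each translate of $D(0,r)$ gives $\tilde\rho_F(x) \leq (\pi\ell)^{-1}(1 + o_N(1))$ uniformly in $x$, while of course $\int \tilde\rho_F = N$. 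Thus $\tilde\rho_F$ is a natural candidate for the bathtub principle.

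The core step is to show $\int V\rho_F = \int V\tilde\rho_F + o(N)$. Because $\chi_r$ is even, convolution is self-adjoint, so $\int V\rho_F - \int V\tilde\rho_F = \int (V-\tilde V)\rho_F$, and the first-order Taylor term cancels under the symmetric average, leaving the pointwise bound
$$|V(x) - \tilde V(x)| \leq C\,r^{2}\,\|\Delta V\|_{L^{\infty}(D(x,r))}.$$
On the bulk disk $D(0, R\sqrt N)$, Item 1 of Assumption~\ref{asum:pot} together with the rescaling $V(x) = U(x/\sqrt N)$ gives $\|\Delta V\|_{L^\infty} \leq CN^{-2\alpha}$, so $|V - \tilde V| \leq CN^{2\beta-2\alpha}$ there, contributing at most $CN^{1+2\beta-2\alpha} = o(N)$ after integration against $\rho_F$ (this is where $\beta < \alpha$ is used). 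On the complement $\{|x|\geq R\sqrt N\}$, the lower bound $V \geq \underline V := \underline U(\cdot/\sqrt N)$ from Item 2 has $\|\Delta \underline V\|_{L^\infty} \leq C/N$, so the $(V-\tilde V)$-type errors are harmless there; what remains is to show that both $\int_{|x|\geq R\sqrt N}|V|\rho_F$ and the corresponding piece involving $\tilde V$ are $o(N)$. Such a universal tail bound on $\rho_F$ is obtained by combining the Gaussian factor $e^{-|z|^2}$ in $\PsiLau$, the $L^2$-normalization of $\Psi_F$, and the integrability condition $\int e^{-\underline U} \leq C$ of Item 2.

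Once $\int V\rho_F = \int V\tilde\rho_F + o(N)$ is established, write $\tilde\rho_F = (1+\varepsilon_N)\rho^{*}$ with $\varepsilon_N = o_N(1)$ and $\rho^{*} \leq (\pi\ell)^{-1}$; then $\int\rho^{*} = N/(1+\varepsilon_N)$, and the bathtub principle~\cite[Theorem~1.14]{LieLos-01} gives
$$\int V\tilde\rho_F = (1+\varepsilon_N)\int V\rho^{*} \geq (1+\varepsilon_N)\,E^{\rm bt}_V\bigl(\ell;\,N/(1+\varepsilon_N)\bigr),$$
where I have temporarily denoted the mass-dependence explicitly. Continuity of the bathtub energy in the mass variable (a consequence of the trapping encoded by the growth of $\underline U$, via an explicit comparison of sublevel sets) then yields $E^{\rm bt}_V(\ell; N/(1+\varepsilon_N)) = E^{\rm bt}_V(\ell)(1+o(1)) + o(N)$. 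Combining with the replacement estimate of the previous paragraph gives the claimed inequality.

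The step I expect to be the main obstacle is the tail estimate on $\rho_F$ outside $D(0, R\sqrt N)$: Theorem~\ref{thm:main} only provides \emph{upper} bounds on the density and does not by itself preclude long tails of $\rho_F$, so one must return to the explicit structure $\Psi_F = F\,\PsiLau$ and exploit the Gaussian decay of $\PsiLau$ together with the integrability of $e^{-\underline U}$ to show that the tail contributes only $o(N)$ to the potential energy. Everything else reduces to bookkeeping around the coarse-graining scale $N^\beta$ with $\beta \in (1/4,\alpha)$.
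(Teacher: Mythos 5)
The tail estimate that you flag as ``the main obstacle'' is in fact a genuine gap, and the ingredients you propose to close it do not suffice. Theorem~\ref{thm:main} gives only an \emph{upper} density bound, so nothing prevents $\rho_F$ from carrying a finite fraction of its mass outside $D(0,R\sqrt N)$: the Gaussian in $\PsiLau$ is multiplied by $|F|^2$ with $F$ an arbitrary normalized analytic factor, and already for $F=1$ the Laughlin density fills a disk of radius $\sim\sqrt{\ell N}$, which can exceed $R\sqrt N$ since $R$ in Assumption~\ref{asum:pot} is a fixed constant independent of $\ell$. Worse, outside $D(0,R\sqrt N)$ you control only $\Delta\underline V$, not $\Delta V$, so the Taylor estimate $|V-\tilde V|\leq C r^2\Vert\Delta V\Vert_{L^\infty}$ is simply unavailable there; and $\int e^{-\underline U}\leq C$ is a condition on $\underline U$, not on $\rho_F$, so it gives no decay of $\rho_F$. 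Consequently the claim $\int_{|x|\geq R\sqrt N}|V|\rho_F=o(N)$ cannot be deduced from the ingredients you list and is false in general.

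The standard cure --- and the one the paper uses --- is never to estimate $\int V\rho_F$ directly, but to lower-bound it by $\int V_L\rho_F$ where $V_L:=U_L(\cdot/\sqrt N)$ with $U_L\leq U$ coinciding with $U$ on $D(0,L)$ and with $\underline U$ far out. Then $\Delta V_L$ is controlled globally, your Taylor step yields $\int(V_L-\tilde V_L)\rho_F=o(N)$ without any tail bound on $\rho_F$, and one recovers $E^{\rm bt}_{V}$ from $E^{\rm bt}_{V_L}$ for $L$ large since bathtub minimizers of trapping potentials have compact support. Once this modification is made, your coarse-graining-plus-bathtub route could in principle be completed, but be aware that the paper proceeds quite differently: it does not invoke Theorem~\ref{thm:main} or a mollified $\tilde\rho_F$ at all. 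Instead it reruns the Feynman--Hellmann free-energy argument of Section~\ref{sec:cond bound} with $\eps U_L$ as the perturbation, lower-bounds the entropy via Jensen against $(e^{-U_L})^{\otimes N}$, controls the resulting \emph{classical} ground-state empirical measure via Proposition~\ref{pro:dens plasma}, and then passes to a bathtub competitor by a Riemann-sum approximation. This sidesteps the mollification of $\rho_F$ and the Taylor expansion of $V$ entirely, and is more robust precisely because it never needs to localize $\rho_F$ in space.
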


\begin{remark}[Affordable length scales]\label{rem:lengths}\mbox{}\\
In accordance with the limitations of Theorem~\ref{thm:main}, the above result is meaningful when $\alpha > 1/4$ in Assumption~\ref{asum:pot}, i.e., the potential varies on length scales much larger than $N^{1/4}$. To appreciate this, recall~\cite[Theorem~1.14]{LieLos-01} that the minimizer of the bathtub energy takes only the values $0$ and $(\pi \ell)^{-1}$ almost everywhere. Under Assumption~\ref{asum:pot}, the typical length scale of the potential $V$ is $L=N^{\alpha}$. It can have a variation of order $rL$ in a disk of radius $r$. For comparison (and without loss of generality) let us set the reference energy to be the minimum of $V$ in the disk. The difference in bathtub energy according to whether the minimizer is, say, equal to $0$ or $(\pi \ell)^{-1}$ everywhere in said disk cannot be larger than $O(r ^3 L)$ (area times maximal value of the potential). Pick now $r= L$. For the energy difference to dominate the error term $o(N)$ in~\eqref{eq:bathtub low bound}, we need $L^4 \gg N$, that is $N^{4\alpha} \gg N$, and this is guaranteed only if $\alpha > 1/4$. The bound~\eqref{eq:bathtub low bound} thus captures the behavior of the bathtub minimizer on its smallest length scale (i.e., that of $V$) only if the latter is $\gg N ^{1/4}$. \hfill  $\diamond$
\end{remark}

In the recent paper ~\cite{RouYng-17}, Theorem 2.3, an upper energy bound matching the lower bound \eqref{eq:bathtub low bound} has been proved:

\begin{proposition}[\textbf{Upper bound to the potential energy, \cite{RouYng-17}}]\label{pro:up bound}\mbox{}\\
Under the additional\footnote{Mostly technical, see~\cite{RouYng-17} for details.} condition that $U$ in Assumption~\ref{asum:pot} is fixed ($N$-independent), we have
\begin{equation}\label{eq:bathtub upper bound}
E_{V} (N,\ell) \leq E^{\rm bt}_{V} (\ell) (1+O(N^{-1/4})).
\end{equation}
\end{proposition}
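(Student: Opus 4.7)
The plan is to construct an explicit trial state $\Psi_f \in \cL^N_\ell$ of the simple form~\eqref{eq:lesscorr} whose one-particle density approximates a bathtub minimizer of~\eqref{eq:bathtub} on the length scale $N^{1/4}$, and then evaluate $\int V \rho_f$ directly. Because $U$ in Assumption~\ref{asum:pot} is fixed, $|\nabla V| = O(N^{-1/2})$, so any density error localized on scale $N^{1/4}$ costs at most $O(N^{3/4})$ in the potential energy, which is precisely the target rate $O(N^{-1/4})$ relative to $E^{\rm bt}_V \sim N$.

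First I would pick a minimizer $\rho^{\rm bt}$ of~\eqref{eq:bathtub}, which by~\cite[Thm.~1.14]{LieLos-01} has the bang-bang form $(\pi\ell)^{-1}\one_A$ for a sublevel set $A \subset \{V \leq V^*\}$ of area $\pi\ell N$, lying at macroscopic scale $\sqrt N$. Convolving $\rho^{\rm bt}$ with a smooth bump of scale $N^{1/4}$ produces $\tilde\rho \leq (\pi\ell)^{-1}$ of mass $N$, differing from $\rho^{\rm bt}$ only in an $N^{1/4}$-neighborhood of $\partial A$ of area $O(N^{3/4})$, so that $\int V \tilde\rho = E^{\rm bt}_V(\ell) + O(N^{3/4})$.

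Next, I would engrave $\tilde\rho$ onto the Laughlin function using quasi-holes. Fix a large disk $\Omega = D(0,R_N)$ containing both $A$ and the bulk of $\PsiLau$, so the density of the bare Laughlin state is $(\pi\ell)^{-1}\one_\Omega$ to leading order. Set $\nu := (\pi\ell)^{-1}\one_\Omega - \tilde\rho \geq 0$, with $\int \nu = J \in \N$, and sample $\nu$ by $J$ points $a_1,\dots,a_J$ located outside $A$. Then take
\begin{equation*}
f(z) = \prod_{j=1}^J (z - a_j), \qquad \Psi_f = \cte \cdot \prod_{k=1}^N f(z_k) \, \PsiLau,
\end{equation*}
$L^2$-normalized. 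The plasma analogy rewrites $|\Psi_f|^2$ as a Gibbs measure for a 2D one-component plasma at $\beta=2$, with confining potential $|z|^2/\ell$ and additional attracting charges at the $a_j$'s. The associated mean-field / Thomas--Fermi problem (drop entropy, replace the pair interaction by its mean-field value) is an obstacle problem saturating at $(\pi\ell)^{-1}$ where an effective potential is low; by the choice of $\nu$, its minimizer agrees with $\tilde\rho$ to leading order. Quantitative concentration results for such $\beta$-ensembles, in the spirit of~\cite{RouSerYng-13a,RouSerYng-13b}, then yield $\rho_f \approx \tilde\rho$ on scales $\gg N^{1/4}$, and testing against the slowly varying $V$ gives $\int V \rho_f = E^{\rm bt}_V(\ell) + O(N^{3/4})$.

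The main obstacle is the mean-field reduction at the critical length scale $N^{1/4}$ in the presence of an arbitrary quasi-hole configuration: one needs to discretize $\nu$ so that the logarithmic self-energy of $\nu - \sum_j \delta_{a_j}$ remains negligible, and to propagate the concentration estimate uniformly in the quasi-hole positions while matching the rate needed to beat the $o(N)$ threshold. Both technical points are carried out in~\cite{RouYng-17} and are what fixes the final error at $O(N^{-1/4})$.
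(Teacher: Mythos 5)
The paper does not actually prove Proposition~\ref{pro:up bound}; it imports it verbatim from~\cite{RouYng-17}, noting only that it is ``obtained by using trial states of the form~\eqref{eq:lesscorr}.'' Your sketch is consistent with that cited approach --- bathtub minimizer, smoothing, quasi-hole sampling, plasma analogy --- and you correctly identify the two genuine technical bottlenecks (discretization of $\nu$ with controlled logarithmic self-energy, and density concentration for the resulting $\beta$-ensemble uniformly in the quasi-hole configuration), both of which you defer, as the paper does, to~\cite{RouYng-17}. Since those deferred steps are precisely where the rate $O(N^{-1/4})$ is actually established, your text is a faithful outline of the cited construction rather than an independent proof; as an outline it is accurate and essentially the same route.
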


The above is obtained by using trial states of the form~\eqref{eq:lesscorr}. For the special case where $V$ is radially symmetric and monotonously increasing in the radial variable, or is of a Mexican-hat form, such an upper bound was proved earlier in ~\cite{RouYng-14,RouYng-15}.
A remarkable consequence is that the Laughlin state stays an approximate ground state in any radial increasing potential, however steep. We refer to~\cite{RouYng-14,RouYng-15} again and to the companion paper~\cite{LieRouYng-16} for further discussion of the significance of this. 


\subsection{Proof outline: screening properties in the plasma analogy}\label{sec:proof out}

The backbone of the proof of Theorem \ref{thm:main} is Laughlin's plasma analogy, a very fruitful mapping of the many-body density of $\PsiLau$, and more generally $\Psi_F$,  to a problem in classical statistical mechanics, originating in~\cite{Laughlin-83,Laughlin-87}. As in our previous papers~\cite{RouSerYng-13a,RouSerYng-13b,RouYng-14,RouYng-15} we generalize this to any fully-correlated state~\eqref{eq:fullcorr}. We rewrite the $N$-particle probability density of any fully correlated state $\Psi_N$ in the form of a Boltzmann-Gibbs factor 
\begin{equation}\label{eq:plasma analogy 1}
\left| \Psi_F (z_1,\ldots,z_N) \right|^2 = \frac{1}{\cZ_N} \exp\left( - H_N (z_1,\ldots,z_N) \right) 
\end{equation}
with a classical Hamiltonian
\begin{equation}\label{eq:class Hamil 1}
H_N (z_1,\ldots,z_N) = \sum_{j=1} ^N |z_j| ^2 - 2 \ell \sum_{1\leq i<j\leq N} \log |z_i-z_j| - 2 \log |F (z_1,\ldots,z_N)|. 
\end{equation}
To interpret~\eqref{eq:class Hamil 1} as an energy functional for $N$ classical positively charged particles with coordinates $z_1,\ldots,z_N\in \C \simeq \R^2$ we recall that Poisson's equation relates the electrostatic  potential $\varphi$ generated by a charge distribution $\rho$ in the manner 
\begin{equation}\label{eq:Poisson}
-\Delta \varphi = 2\pi \rho. 
\end{equation}
where $\Delta$ is the two-dimensional Laplacian. Thus one can see~\eqref{eq:class Hamil 1} as the energy of $N$ positively charged particles subject to
\begin{itemize}
 \item the external potential generated by a constant background of opposite charge. Since 
 \begin{equation}\label{eq:background}
 -\Delta_{z_j} |z_j| ^2 = - 4  
 \end{equation}
the potential $|z_j| ^2$ can be interpreted as being generated by a uniform charge background of negative density $- 2/\pi$. 
\item pairwise Coulomb repulsion:
\begin{equation}\label{eq:Coulomb rep} 
-\Delta_{z_j} \left( - 2\ell \sum_{1\leq i<j\leq N} \log |z_i-z_j| \right) = \sum_{_i \neq j} 4 \pi \ell \delta_{z_i=z_j}, 
\end{equation}
thus the second term in~\eqref{eq:class Hamil 1} is the 2D Coulomb interaction between particles of identical charge $\sqrt{2\ell}$.
\item the potential generated by an essentially arbitrary, but \emph{positive}, (many particle) charge distribution encoded in the (modulus of the) analytic function $F$. Indeed, for any $j=1\ldots N$,
\begin{equation}\label{eq:subharm 1}
- \Delta_{z_j} \left( - \log |F(z_1,\ldots,z_N)| \right) \geq 0.
\end{equation}
\end{itemize}
This is, in fact,  the only property of $F$ we use in our method.  Equation~\eqref{eq:subharm 1} means that $-\log |F|$ is  superharmonic in every variable, because $F$ is analytic \footnote{One can write it in Weierstrass factorized form to see that~\eqref{eq:subharm 1} holds, or use the Cauchy integral formula for the value of an analytic function at the center of any disk and rely on the characterization of superharmonic functions in terms of averages in disks~\cite[Chapter~9]{LieLos-01}.}. 

\medskip

A plausibility argument for our main results goes as follows:
\begin{enumerate}
 \item The effective temperature in~\eqref{eq:plasma analogy 1} is small compared to the total energy, so that the Boltzmann-Gibbs factor is essentially concentrated around minimizing configurations of the Hamiltonian~\eqref{eq:class Hamil 1}.
 \item For $F= 1$, i.e. in the case of the pure Laughlin state, one should expect that the local density of points in a minimizing configuration is close to $1/(\ell\pi)$.
 Indeed, this is the condition for the negatively charged fixed constant background (c.f.~\eqref{eq:background}) to efficiently screen the positively charged mobile particles $z_j,j= 1\ldots N$ (c.f.~\eqref{eq:Coulomb rep}).
 \item If $F\neq 1$ we add an additional positive charge distribution to the game because of~\eqref{eq:subharm 1}. Since the mobile particles $z_j,j=1 \ldots N$ are positively charged in this representation, one can expect that such an addition will exert a repelling force on the density distribution of the mobile particles and reduce it, at least on the average. 
 \end{enumerate}
 
This intuitive picture is, however, not as clear as it may look at first sight. If we imagine the potential term associated to $F$ as being generated by point-like positively charged particles, it is not obvious {\it a priori}  why  several such charges cannot work together and conspire to increase the density {\em locally}.  Arranged tightly on a circle they could, perhaps, concentrate the density at a high value around the center of the circle. Note also that $- 2 \log |F (z_1,\ldots,z_N)| $ is generated by a \emph{mobile} charge distribution because $\Delta_{z_j} \log |F(z_1,\ldots,z_N)|$ is in general a function of $z_1,\ldots,z_N$ and only constant in special cases (essentially when $F = f ^{\otimes N}$ is a tensor power, as in~\eqref{eq:lesscorr}). Thus, the positions of the additional charges are in general correlated with those of the actual charges. Moreover, as the example $F(z_1,\dots,z_N)=\prod_i\exp(cz_i)$ shows, the area occupied by the bulk of the density  can be stretched in one direction but compressed in another. 

The heuristic argument (1)-(3) above is, in any case, certainly far from a rigorous proof of the bound \eqref{bound}. To obtain such a proof, the reduction to considering only ground states of~\eqref{eq:class Hamil 1} (rather than the full Gibbs state) presents, perhaps,  the least difficulty. To appreciate this, it is helpful to rescale lengths: we can write 
\begin{equation}\label{eq:scale Gibbs}
\mu_F (z_1,\ldots,z_N) := N ^ N \left| \Psi_F ( \sqrt {N} z_1,\ldots, \sqrt{N} z_N) \right|^2 = \frac{1}{\tilde{\mathcal{Z}}_N} \exp\left( -N  \tilde{H}_N (z_1,\ldots,z_ N) \right)
\end{equation}
as a probability measure with 
\begin{equation}\label{eq:scale Hamil}
\tilde{H}_N (z_1,\ldots,z_N) = \sum_{j=1} ^N |z_j| ^2 - 2 \frac{\ell}{N} \sum_{1\leq i<j\leq N} \log |z_i-z_j| - \frac{2}{N} \log |F ( \sqrt{N} z_1,\ldots, \sqrt{N}z_N)|. 
\end{equation}
Thus, modulo a simple change of length-scales, we can reduce the aim to a statistical mechanics problem 
\begin{itemize}
\item in mean-field scaling, the first and second sum in the above 
being of the same order for large $N$.
\item with small effective temperature $T = N ^{-1}$. 
\end{itemize}
This makes it clear why mostly ground-state configurations are relevant for the proof, although some care is needed to estimate the effect of entropy terms in the classical free energy. In fact, the reduction to classical ground state configurations is responsible for the fact that our main result holds only on length scales $\gg N ^{1/4}$. For ground-states instead of Gibbs states, the result holds on scales $\gg 1$. There is thus certainly room for improvement in this particular part of the proof, see Remark~\ref{rem:rel res} below. 

With the reduction to ground states of~\eqref{eq:class Hamil 1} out of the way, the main question is now to understand points $(2)$ and $(3)$ of the list above. In our approach, the solution is essentially based on \emph{screening} properties of ground states of classical Coulomb systems, namely their tendency to achieve local neutrality in order for the total potential generated by the positively and negatively charged particles to compensate one another. 

Consider a minimizing configuration $Z_N ^0 = \left\{ z_1 ^0,\ldots,z_N ^0 \right\}$ for the classical Hamiltonian~\eqref{eq:class Hamil 1} (the labeling of the points is irrelevant by symmetry of~\eqref{eq:class Hamil 1} under permutations). To any subset $\left\{z_1^0,\ldots,z_K ^0\right\}$ of the configuration (in fact, any $K$-tuple of points in~$\R ^2$) we associate an open \emph{screening region} $\Sigma_K$ such that 
\begin{equation}\label{eq:screen intro}
\Phi (x) = \int_{\Sigma_K} \frac{1}{\pi \ell} \log |x-y| dy - \sum_{j=1} ^K \log |x-z_k| \begin{cases} = 0 \mbox{ outside } \Sigma_K \\
                                                                     > 0 \mbox{ inside } \Sigma_K.
                                                                    \end{cases}
\end{equation}
The above means that the total potential generated by the $K$ positive charges under consideration and a constant background of opposite charge density $(\pi \ell)^{-1}$ contained in $\Sigma_K$ vanishes identically outside $\Sigma_K$. The background density coming from the first term in~\eqref{eq:class Hamil 1} thus completely screens the $K$ point charges outside of the set $\Sigma_K$. Note that for~\eqref{eq:screen intro} to hold we must have neutrality, 
\begin{equation}\label{eq:intro neutral}
|\Sigma_K| = \pi \ell K. 
\end{equation}
The sets $\Sigma_K$ are constructed in our approach via the minimization of an auxiliary functional bearing some formal resemblance to a 2D version of the usual Thomas-Fermi theory for molecules~\cite{Lieb-81b}. Accepting their existence, the rest of the proof has two main steps, going roughly as follows
\begin{itemize}
 \item Given the screening region $\Sigma_K$ associated to a subset $\left\{z_1^0,\ldots,z_K ^0\right\}$ of the minimizing configuration, no other point of a minimizing configuration of particles can lie within $\Sigma_K$. We refer to this as the \emph{exclusion property}. It holds for ground states of~\eqref{eq:class Hamil 1}, as we shall prove, because the third term herein is superharmonic. 
 \item \emph{Any} configuration of points satisfying the exclusion property must have its local density bounded above by $(\pi \ell)^{-1}$. If some region contains a larger density, consider a disk included in it. The screening region associated to the points contained in the disk would have, because of~\eqref{eq:intro neutral}, to leak outside of the disk. It would then overlap with other points of the configuration, which is impossible because of the exclusion property.  
\end{itemize}

The construction of the screening set, via a variant of Thomas-Fermi theory, is done in Section~\ref{sec:TF}. Some care is needed in handling the auxiliary variational problem, because it is of a non-standard type. We next prove the local density bound for ground states of~\eqref{eq:class Hamil 1} in Section~\ref{sec:class GS}, following the above strategy. The application to low-temperature Gibbs states~\eqref{eq:plasma analogy 1} occupies Section~\ref{sec:dens Gibbs}, where we prove Theorem~\ref{thm:main}. Finally we deduce Corollary~\ref{cor:pot ener} in Section~\ref{sec:ener low bound}. 

\begin{remark}[Recent related results]\label{rem:rel res}\mbox{}\\
The classical Hamiltonian~\eqref{eq:class Hamil 1} and the associated Gibbs state~\eqref{eq:plasma analogy 1} in the pure Laughlin case $F=1$ have attracted much attention in their own right. Apart from the connection with the FQHE we are primarily concerned about here, one can see them as basic models for trapped plasmas. In this context, a generalization often considered (2D log-gas or 2D $\beta$-ensemble) is to replace $|z_j|^2 \to V (z_j)$ in~\eqref{eq:scale Hamil}, with $V : \R ^2 \mapsto \R$ a general trapping potential. Minimizing configurations are then weighted Fekete sets~\cite{SafTot-97}, and Gibbs states for special values of the temperature are connected to certain random matrices ensembles~\cite{AndGuiZei-10,Forrester-10}. 

For general $V$ the density is not flat: the limiting density of points is proportional to $\Delta V$ on its support. One expects the true minimizers and low-temperature Gibbs states to follow this profile on any microscopic length scale $\gg N ^{-1/2}$ (in the scaled variables used in~\eqref{eq:scale Hamil}). This has recently been proved in~\cite{AmeOrt-12,RotSer-14} for ground states (see also~\cite{PetRot-16} for higher dimensional Coulomb and Riesz gases) and in~\cite{BauBouNikYau-15,BauBouNikYau-16,Leble-15b,LebSer-16} for Gibbs states. See also earlier partial results (not on the optimal scale) in~\cite{SanSer-14,RouSer-14,PetSer-14} and~\cite{Ameur-16,Ameur-17,BetSan-14,ChaGozZit-13,ChaHarMai-16,LebSerZeiWu-15,LebSer-15,SanSer-14a} for related recent literature.

One can see our main result as a generalization of the fine-scale rigidity estimates of~\cite{AmeOrt-12,RotSer-14,BauBouNikYau-15,Leble-15b} to the case of a non-trivial $F$. Motivated by FQHE physics, {\em this generalization is our chief concern}. For general analytic $F$ one should expect only local density upper bounds, but no corresponding lower bound in general. None of the available methods seem to apply here, but the new approach we sketched above might be of interest in the usual context with $F = 1$ and a general $V$ in~\eqref{eq:scale Hamil}. Note that we obtain the expected density upper bound on essentially optimal length scales for minimizing configurations (see Proposition~\ref{pro:dens plasma} below), but that it remains an open problem to do the same for Gibbs states, i.e. allow any $\alpha >0$ instead of only $\alpha > 1/4$ in Theorem~\ref{thm:main}. \hfill$\diamond$
\end{remark}

\section{Incompressible 2D Thomas-Fermi molecules}\label{sec:TF}

In this section we discuss a Thomas-Fermi-like variational theory of 2D molecules. We consider the functional 
\begin{equation}\label{eq:TF func} 
\mathcal \ETF [\sigma]=-\int_{\R^2} \Vnuc(x)\sigma(x)\,dx + D(\sigma, \sigma)
\end{equation}
where $\sigma$ is the density distribution of negatively charged \lq\lq electrons\rq\rq, 
\begin{equation}\label{eq:nucpot} 
\Vnuc(x)=-\sum_{i=1}^K \log|x-x_i| 
\end{equation}
is the 2D electrostatic potential generated by $K$ point \lq\lq nuclei\rq\rq\ of charge $+1$ located at positions $x_1,\ldots,x_K$ in the plane
and 
\begin{equation}\label{eq:coul ener}
 D(\sigma,\sigma)=-\half \iint_{\R ^2 \times \R ^2} \sigma(x)\log|x-y|\sigma(y)\, dx\, dy 
\end{equation}
is the electrostatic self-energy of the \lq\lq electrons\rq\rq. 

We emphasize that in the application we have in mind, the designations \lq\lq nuclei\rq\rq and ``electrons\rq\rq\, do not refer to real physical entities. The terminology is just chosen for convenience and analogy with standard TF theory for molecules~\cite{CatBriLio-98,Lieb-81b}. 

We consider the minimization of $\ETF$
over the variational set 
\begin{equation}\label{eq:MTF}
\TFM := \left\{\sigma\in L^\infty(\mathbb R^2)\cap  L^1\left(\mathbb R^2, \log(2+|x|)dx\right), 0\leq \sigma \leq 1, \, \int_{\R ^2}\sigma = K\right\}
\end{equation}
which means that we are interested in an ``incompressible neutral molecule''. The term ``incompressible" refers to the constraint $\sigma \leq 1$ on the density of electrons and  ``neutral" to the requirement that the total charge of the electrons equals that of the nuclei.\medskip

\begin{remark}[Related models]\label{rem:models}\mbox{}\\
 From a  physics point of view, a more natural functional to consider would be
\beq
\mathcal E_\tau [\sigma]= \int_{\R^2} \tau(\sigma) -\int_{\R^2} \Vnuc(x)\sigma(x)\,dx + D(\sigma, \sigma)
\eeq
with a function $\tau$ representing the kinetic energy density of the electrons. For $ \tau(\sigma)= {\rm (const.)}\cdot\sigma^2$ this term is the standard  local density/semi-classical approximation of the quantum kinetic energy of a 2D electron gas. The minimization problem \eqref{eq:TF func}--\eqref{eq:MTF}, which is the relevant one for our purpose,  corresponds formally to taking $\tau(\sigma)=\sigma^p$ with $p\to\infty$ to enforce the uniform upper bound $\sigma\leq 1$, but we shall not follow this $p\to \infty$ route. 

Many results we prove below have a natural extension to the case where $\Vnuc$ is a more general potential, as long as it is generated by a compactly supported charge distribution. One could also consider corresponding 3D models as e.g. in~\cite{BonKnuRog-16} where results parallel to those of this section have been obtained. Note that the 3D setting is in fact simpler to deal with, because of the fall-off at infinity of the 3D Coulomb kernel. 

We also mention that there is a kinship with recently studied flocking/swarming models~\cite{BurChoTop-15,FraLie-16} where constrained minimization problems similar to the above have been considered and proved to support phase transitions.  

\hfill $\diamond$
\end{remark}

The facts we shall need about the variational problem for $\ETF$
 are collected in the following theorem. 
 
\begin{theorem}[\textbf{Incompressible neutral Thomas-Fermi molecules}]\label{thm:TF theory}\mbox{}\\
Let $K$ be a positive integer. We have 
\begin{enumerate}
 \item \textbf{Well-posedness.} There exists a  unique $\TFmin$ that minimizes $\ETF$ over the variational set $\TFM$. 
 \item \textbf{Potential and variational inequalities.} Let 
 \begin{equation}\label{eq:TF pot}
  \TFpot := \Vnuc + \TFmin * \log |\,.\,| 
 \end{equation}
be the total electrostatic potential of the molecule. It is continuous and once continuously differentiable away from the nuclei and tends to zero at infinity. Moreover, for almost all $x$,
\begin{align}\label{eq:TF eq}
\TFpot(x) &> 0 \mbox{ if }  \TFmin(x)>0 \nonumber \\
\TFpot(x) &= 0 \mbox{ if } \TFmin(x)=0.
\end{align}
\item \textbf{Values of the electron density.}
 The set $\left\{ x\,:\, 0< \TFmin(x) <1 \right\}$ has zero Lebesgue measure. Hence  $\TFmin$ takes only the values 1 or 0 a.e.
 \item \textbf{Support of the electron density.}\label{eq:suppTFmin} \medskip
 
 \noindent (i) The set  $\{x:\, \TFmin(x)>0\}$ is essentially bounded:
If $D(0,r)$ denotes a disk with radius $r$ and center at 0, then for any $R>\max |x_i|$ we have, up to a set of measure zero,  \begin{equation}\label{eq:support TF}
\{x: \TFmin(x)>0\} \subset D\left(0, R + \sqrt{M_R}\right)
\end{equation}
with
 \beq\label{eq:MR} M_R := \frac{1}{\pi}\sup_{|x| = R} \TFpot (x).\eeq

 \noindent (ii) Up to a set of measure zero, $\{x: \TFmin(x)>0\}$ is equal to the open set
 \beq\label{eq:sigma TF} \Sigma^{\rm TF}(x_1,\dots,x_K)=\{x\,:\, \TFpot(x)>0\}\eeq
and the area of $\Sigma^{\rm TF}(x_1,\dots,x_K)$ is equal to $K$.
 
  \noindent (iii) The locations $x_1,\ldots,x_K$ of the nuclei lie within  $\Sigma^{\rm TF}(x_1,\dots,x_K)$ and all connected components of the latter set contain at least one $x_j$. 
 
 \noindent (iv) 
\beq \label{eq:inclusion screen set}
\Sigma^{\rm TF}(x_1,\dots,x_{K-1})\subset \Sigma^{\rm TF}(x_1,\dots,x_K).
\eeq

 \noindent (v) For a single nucleus, 
 \beq\STF(x_1)=\{x:\, |x-x_1|<1/\sqrt \pi\}.\eeq
\end{enumerate}
\end{theorem}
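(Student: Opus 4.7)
The plan is to cover well-posedness (item 1) and the Euler--Lagrange analysis (items 2 and 3) in two initial steps, then prove the crucial vanishing of the chemical potential associated with the mass constraint, and finally deduce the geometric properties in item 4. For existence and uniqueness, the key input is strict convexity of $\ETF$ on $\TFM$: any admissible fluctuation $\eta=\sigma_{0}-\sigma_{1}$ is neutral ($\int\eta=0$), and a standard integration by parts yields the representation $D(\eta,\eta)=(4\pi)^{-1}\intR|\nabla(\eta*\log|\cdot|)|^{2}$, which is $>0$ unless $\eta\equiv 0$. Existence is obtained by the direct method: a minimising sequence $(\sigma_{n})$ is weakly-$*$ precompact in $L^{\infty}$ thanks to $\sigma_{n}\leq 1$, the logarithmic weight in~\eqref{eq:MTF} prevents mass escape at infinity, and lower semicontinuity of $D$ follows from the gradient representation above applied to neutral perturbations of a reference element.

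The first variation $\delta\ETF/\delta\sigma=-\TFpot$ and Kuhn--Tucker with Lagrange multiplier $\mu$ for the mass constraint yield
\begin{equation*}
\TFpot\geq\mu\ \text{on}\ \{\TFmin=1\},\quad\TFpot=\mu\ \text{on}\ \{0<\TFmin<1\},\quad\TFpot\leq\mu\ \text{on}\ \{\TFmin=0\}.
\end{equation*}
Continuity and $C^{1}$-regularity of $\TFpot$ away from the nuclei is standard for a convolution of $\log$ with a bounded compactly supported density. For (3), the set $S=\{0<\TFmin<1\}$ avoids the nuclei (where $\TFpot=+\infty$), and there the middle EL identity gives $\TFpot\equiv\mu$ on $S$. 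Standard Sobolev facts then yield $\Delta\TFpot=0$ a.e.\ on $S$, while $-\Delta\TFpot=-2\pi\TFmin$ away from the nuclei; hence $\TFmin\equiv 0$ on $S$, contradicting $\TFmin>0$ unless $|S|=0$. Thus $\TFmin=\chi_{A}$ a.e., with $|A|=K$.

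The heart of the argument is to show $\mu=0$. Since $\TFpot\to+\infty$ at each $x_{i}$, the EL inequality on $\{\TFmin=0\}$ forces an open neighbourhood of every nucleus to lie in $A$, which both establishes the ``nuclei belong to the support'' part of (iii) and makes $\TFpot$ harmonic on the open set $A^{c}$. The neutrality $|A|=K$ produces the monopole cancellation $\TFpot(x)=-\sum_{i=1}^{K}\log|x-x_{i}|+\int_{A}\log|x-y|\,dy=O(|x|^{-1})$ as $|x|\to\infty$, so $\TFpot$ is bounded on $A^{c}$ and vanishes at infinity; by continuity $\TFpot=\mu$ on $\partial A$. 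The two-dimensional specificity now enters through the Kelvin transform $x\mapsto x/|x|^{2}$ together with Riemann's removable singularity theorem for bounded harmonic functions: $\TFpot$ extends to a harmonic function on $A^{c}\cup\{\infty\}$, a bounded open subset of the Riemann sphere, with value $0$ at the interior point $\infty$ and constant boundary value $\mu$ on $\partial A$. Dirichlet uniqueness forces $\TFpot\equiv\mu$ on the closure, so evaluating at $\infty$ gives $\mu=0$. Thus $\TFpot\equiv 0$ on $A^{c}$, which is exactly~\eqref{eq:TF eq}; applying the same Laplacian trick to $\{x\in A:\TFpot(x)=0\}$ (which if of positive measure would force $\Delta\TFpot=0\neq 2\pi$) shows $\TFpot>0$ a.e.\ on $A$, proving (ii).

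For the remaining geometric properties in item 4, the identification $A=\{\TFpot>0\}$ (mod null sets) has area $K$ by construction. For (i), on $\{|x|>R\}$ there are no nuclei, so $\Delta\TFpot=2\pi\chi_{A}\leq 2\pi$; a radial subsolution barrier combined with the bound $\TFpot\leq\pi M_{R}$ on $\{|x|=R\}$ and $\TFpot\to 0$ at infinity shows $\TFpot\leq 0$ for $|x|>R+\sqrt{M_{R}}$, confining $A$ to the claimed disk. For the ``each connected component contains a nucleus'' part of (iii), a nucleus-free component of $\STF$ would have $\TFpot$ harmonic there with vanishing boundary values, hence identically $0$ by Dirichlet uniqueness, contradicting $\TFpot>0$ in the component. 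Monotonicity (iv) follows from a potential-theoretic comparison: adding the nucleus $x_{K}$ injects an extra attractive $-\log|x-x_{K}|$ source, whose effect can be shown via the variational characterisation of $\STF$ as the coincidence set of a two-obstacle problem to make the new positivity set pointwise contain the old one. Finally (v) is immediate by rotational symmetry around a single nucleus, the minimiser being a disk of area $1$ centred at $x_{1}$. The principal obstacle is the proof that $\mu=0$, where the two-dimensional logarithmic character of the interaction plays an essential role; the support bound (i) and the monotonicity (iv) are more technical but present no conceptual difficulty once the EL framework with $\mu=0$ is in place.
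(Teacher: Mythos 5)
Your overall architecture matches the paper's (convexity for uniqueness, direct method for existence, Euler--Lagrange with multiplier $\mu$, then $\mu=0$, then geometry), and your Kelvin-transform argument for $\mu=0$ is a genuinely different and clean idea from the paper's, which uses Newton's theorem for circular averages. However, there are several places where you gloss over the actual technical content, and one step that needs more care.

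The Kelvin-transform route to $\mu=0$ implicitly assumes that the support set $A$ is bounded, so that $A^{c}\cup\{\infty\}$ is a bona fide bounded open subset of the sphere with boundary $\partial A$. This is not free: if $\mu<0$ then $\{\TFpot>\mu\}$ contains a neighbourhood of infinity (since $\TFpot\to 0$), which is incompatible with $|A|=K<\infty$; if $\mu>0$ then $A\subset\{\TFpot\geq\mu\}$ is bounded and the Kelvin argument goes through. So you need the case distinction (as in the paper) before invoking Dirichlet uniqueness — otherwise the ``boundary'' $\partial A$ and the extension across $\infty$ are not well-defined. Also, the claimed decay rate $O(|x|^{-1})$ is only correct for compactly supported $\sigma$; the paper proves decay to $0$ (without a rate) under the weaker hypothesis $\sigma\in L^{1}(\log(2+|x|)\,dx)$, which is all one actually has at that stage. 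Once the $\mu>0$/$\mu<0$ split is made explicit, your approach is a correct alternative to the Newton's-theorem argument and is arguably more conceptual.

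Two other steps are stated as ``standard'' but are in fact the crux of the corresponding items. For item 3, passing from ``$\TFpot\equiv\mu$ on $S=\{0<\TFmin<1\}$'' to ``$\Delta\TFpot=0$ a.e.\ on $S$'' is \emph{not} an automatic Sobolev fact: $S$ need not contain an open set (it can be a fat Cantor-type set), and one needs $\TFpot\in W^{2,1}_{\rm loc}$ — i.e.\ \emph{all} mixed second derivatives in $L^{1}_{\rm loc}$, not just the Laplacian — so that the gradient-of-$\nabla\TFpot$ vanishes a.e.\ on level sets. This is precisely the point the paper addresses via the Fourier-transform argument of Frank--Lieb. Similarly, the monotonicity (iv) is where you wave hands with ``a two-obstacle problem''; the paper's argument is elementary but needs to be given: consider $\mathcal U=\{\Phi^{\rm TF}_{K}<\Phi^{\rm TF}_{K-1}\}$, observe $x_{i}\notin\mathcal U$, that $\sigma_{K-1}=1$ on $\mathcal U$ (since $\Phi^{\rm TF}_{K-1}>\Phi^{\rm TF}_{K}\geq 0$ there), so $\Phi^{\rm TF}_{K-1}-\Phi^{\rm TF}_{K}$ is subharmonic on $\mathcal U$ with nonpositive boundary values, forcing $\mathcal U=\emptyset$. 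Finally, the tightness of minimizing sequences at exact neutrality is a real issue (the mass constraint is borderline for the log potential); ``the logarithmic weight prevents mass escape'' is the statement, not the proof, and the paper devotes a full argument to it.
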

\bigskip

The inclusion in~\eqref{eq:inclusion screen set} says that when a nucleus is added, the density $\TFmin$ and the potential $\TFpot$ increase everywhere. The rest of this section is devoted to the proof of these results. 

\subsection{Existence and uniqueness of a minimizer}\label{sec:ex un}

The uniqueness is simpler than the existence and we consider it first:
\begin{lemma}[\textbf{Strict convexity and uniqueness}]\label{lem:convexity}\mbox{}\\
The functional $\ETF$ is strictly convex on $\TFM$. Consequently, the minimizer $\TFmin$, if it exists, is unique.
\end{lemma}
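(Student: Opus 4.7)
The plan is to separate the functional into its linear and quadratic parts: the term $-\int \Vnuc \sigma$ is linear (hence convex) in $\sigma$, so all the work is in showing strict convexity of $\sigma \mapsto D(\sigma,\sigma)$ on $\TFM$. Since $D$ is a symmetric bilinear form, the standard identity
\begin{equation*}
tD(\sigma_1,\sigma_1)+(1-t)D(\sigma_2,\sigma_2)-D(t\sigma_1+(1-t)\sigma_2,t\sigma_1+(1-t)\sigma_2)=t(1-t)D(\sigma_1-\sigma_2,\sigma_1-\sigma_2)
\end{equation*}
reduces strict convexity to the positive definiteness statement: $D(\rho,\rho)>0$ for any nonzero $\rho=\sigma_1-\sigma_2$ with $\sigma_i\in\TFM$. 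Observe that such $\rho$ belongs to $L^\infty(\R^2)\cap L^1(\R^2,\log(2+|x|)dx)$ and is \emph{neutral}, $\int_{\R^2}\rho=K-K=0$.

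Next I would establish positivity of $D$ on neutral densities through the standard electrostatic identity. Set $u:=\rho*\log|\,\cdot\,|$, so that $-\Delta u=-2\pi\rho$ distributionally. For neutral compactly supported $\rho$ the expansion $\log|x-y|=\log|x|+O(|x|^{-1})$ together with $\int\rho=0$ gives $u(x)=O(|x|^{-1})$ and $\nabla u(x)=O(|x|^{-2})$ at infinity, so integration by parts is legal and
\begin{equation*}
D(\rho,\rho)=-\tfrac{1}{2}\int u\,\rho\,dx=-\tfrac{1}{4\pi}\int u\,\Delta u\,dx=\tfrac{1}{4\pi}\int_{\R^2}|\nabla u|^2\,dx\ \geq\ 0,
\end{equation*}
with equality iff $\nabla u\equiv 0$, i.e.\ $u$ is constant; combined with decay at infinity this forces $u\equiv 0$, hence $\rho=-\tfrac{1}{2\pi}\Delta u=0$.

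For the general (non-compactly supported) case relevant here, I would approximate $\rho=\sigma_1-\sigma_2$ by its truncations $\rho_R:=\rho\,\mathbf{1}_{D(0,R)}-c_R$, where the constant $c_R$ (supported in an annulus) is chosen to keep the truncation neutral; the $L^1(\log)$ control on $\sigma_1,\sigma_2$ ensures both that $D(\rho,\rho)$ is well-defined and that $D(\rho_R,\rho_R)\to D(\rho,\rho)$ as $R\to\infty$ via dominated convergence on the bilinear integral. Passing to the limit in the compactly supported positivity yields $D(\rho,\rho)>0$ for $\rho\not\equiv 0$. The main (minor) technical obstacle is exactly this justification: the logarithmic kernel is neither integrable nor of a fixed sign, so one has to handle the regularization carefully, but the $L^1(\log)$ integrability built into $\TFM$ is designed precisely to make these manipulations absolutely convergent.

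Finally, uniqueness is immediate from strict convexity together with convexity of $\TFM$: if $\sigma_1\neq\sigma_2$ were two minimizers, then the midpoint $(\sigma_1+\sigma_2)/2$ still lies in $\TFM$ (the pointwise bounds $0\leq\sigma\leq 1$ and the mass constraint $\int\sigma=K$ are preserved by convex combinations, and the $L^\infty\cap L^1(\log)$ class is stable) and would yield a strictly lower energy than $\tfrac{1}{2}(\ETF[\sigma_1]+\ETF[\sigma_2])$, a contradiction.
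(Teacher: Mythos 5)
Your overall plan — reduce to strict positivity of $D$ on neutral differences and then invoke convexity of $\TFM$ — is the same as the paper's, but you establish positivity through a different integral representation. The paper cites the identity (from Saff--Totik)
\begin{equation*}
D(\mu,\mu)=\frac{1}{2\pi}\int_{\R^2}\left(\int_{\R^2}\frac{1}{|t-z|}\,d\mu(z)\right)^{2}dt \qquad\text{when }\int\mu=0,
\end{equation*}
i.e.\ the Riesz-kernel factorization $-\log|x-y|\propto \tfrac{1}{|\cdot|}*\tfrac{1}{|\cdot|}$ up to an additive constant that drops out by neutrality. This is a pointwise non-negative integrand valid directly for the relevant class of $\mu$, so no truncation or integration by parts is needed, and vanishing forces $\tfrac{1}{|\cdot|}*\mu\equiv 0$, hence $\mu\equiv 0$. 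You instead use the Dirichlet-energy identity $D(\rho,\rho)=\frac{1}{4\pi}\int|\nabla u|^{2}$ with $u=\rho*\log|\cdot|$, which requires integration by parts and control of boundary terms. Both routes are legitimate; yours is more elementary and self-contained, the paper's is shorter because it outsources the analytic work to a single citation.

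There is, however, a genuine gap in the last step of your positivity argument. After proving $D(\rho_R,\rho_R)>0$ for the compactly supported truncations $\rho_R\not\equiv 0$ and $D(\rho_R,\rho_R)\to D(\rho,\rho)$, you conclude ``passing to the limit \ldots yields $D(\rho,\rho)>0$.'' But a limit of strictly positive quantities is only $\geq 0$; you have not ruled out $D(\rho,\rho)=0$ for some $\rho\neq 0$ not of compact support, and strictness is exactly what uniqueness requires. To close this, you should instead show that the identity $D(\rho,\rho)=\frac{1}{4\pi}\int|\nabla u|^2$ itself persists in the limit for all admissible $\rho$ (e.g.\ by also showing $\nabla u_R\to\nabla u$ in $L^2$), and then argue: $D(\rho,\rho)=0\Rightarrow\nabla u\equiv0\Rightarrow u$ constant; since $u=\Phi_{\sigma_1}-\Phi_{\sigma_2}$ decays at infinity (the same computation as in the regularity lemma for $\Phi_\sigma$), the constant is zero, so $\rho=-\tfrac{1}{2\pi}\Delta u=0$. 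Alternatively, one can avoid the issue entirely by using the Riesz-kernel representation as the paper does, which holds pointwise for general neutral $\mu$ in the appropriate class and makes the ``equality iff zero'' step immediate.
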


\begin{proof}
We only have to show that $D(\sigma,\sigma)$ is strictly convex on $\TFM$. This is a consequence of the positivity of $D(\mu,\mu)$ for neutral charge distributions $\mu$ as in~\cite[Lemma~3.2]{RouSerYng-13b}. The general fact is that
\begin{equation}
 D(\mu,\mu) = \frac{1}{2\pi} \int_{\R ^2} \left( \int_{\R ^2} \frac{1}{|t-z|} d\mu(z) \right) ^2 dt
\end{equation}
whenever $\int_{\R^2} \mu = 0$, see~\cite[Chapter I, Lemma 1.8]{SafTot-97}. This implies the claimed strict convexity property: pick two measures $\mu_1$ and $\mu_2$ with $\int\mu_1=\int\mu_2$. Then 
\beq
\frac{1}{2} D(\mu_1,\mu_1) +\frac{1}{2} D(\mu_2,\mu_2) - D\left(\frac{1}{2}\mu_1+ \frac{1}{2}\mu_2, \frac{1}{2}\mu_1+ \frac{1}{2}\mu_2\right) = \frac{1}{4} D\left( \mu_1- \mu_2, \mu_1 - \mu_2\right) \geq 0
\eeq
since $\int_{\R ^2} \mu_1 = \int_{\R ^2} \mu_2 = 1$. Equality holds if and only if $\mu_1=\mu_2$ a.e. Uniqueness of the minimizer follows because we minimize a strictly convex functional on a convex set.
\end{proof}

Next we consider the boundedness from below and lower semicontinuity of the functional~$\ETF$ .

\begin{lemma}[\textbf{Boundedness from below}]\label{lem:low bound}\mbox{}\\
The functional $\ETF$ is uniformly bounded from below on $\TFM$. It is moreover lower semi-continuous under strong convergence in $L^1 \cap L ^{p}$ for any $p>1$.
\end{lemma}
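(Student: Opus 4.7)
Both assertions will be deduced from a neutralisation rewriting of $\ETF$. Fix $\epsilon>0$ so small that the disks $D(x_i,\epsilon)$ are pairwise disjoint, and introduce the bounded, compactly supported density
\[
\nu_\epsilon := \frac{1}{\pi\epsilon^2}\sum_{i=1}^K \one_{D(x_i,\epsilon)}, \qquad \int\nu_\epsilon = K,
\]
a smearing of the nuclear distribution $\nu=\sum_i\delta_{x_i}$. Using $\int\Vnuc\sigma = 2D(\nu,\sigma)$ and completing the square in $D(\sigma,\sigma)$ around $\nu_\epsilon$ yields the identity
\[
\ETF[\sigma] = D(\sigma-\nu_\epsilon,\sigma-\nu_\epsilon) - \int_{\R^2}\sigma(x)\,\phi_\epsilon(x)\,dx - D(\nu_\epsilon,\nu_\epsilon),
\]
where $\phi_\epsilon(x) := \sum_{i=1}^K (f_i(x)-\log|x-x_i|)$ and $f_i$ denotes the logarithmic potential generated by $(\pi\epsilon^2)^{-1}\one_{D(x_i,\epsilon)}$. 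By Newton's theorem $f_i \equiv \log|\cdot-x_i|$ outside $D(x_i,\epsilon)$, and an elementary radial computation inside each disk gives $f_i(x) = \log\epsilon + \tfrac12 - |x-x_i|^2/(2\epsilon^2)$; hence $\phi_\epsilon\geq 0$, is compactly supported in $\bigcup_i D(x_i,\epsilon)$, lies in every $L^q_{\rm loc}$, and has finite explicit integral.

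The \textbf{lower bound} follows immediately. The mass constraint $\int\sigma = K$ makes $\sigma-\nu_\epsilon$ a neutral charge distribution, so the Saff--Totik positivity recalled in the proof of Lemma~\ref{lem:convexity} forces $D(\sigma-\nu_\epsilon,\sigma-\nu_\epsilon)\geq 0$; the pointwise constraint $\sigma\leq 1$ combined with $\phi_\epsilon\geq 0$ gives $-\int\sigma\phi_\epsilon \geq -\int\phi_\epsilon$; and $D(\nu_\epsilon,\nu_\epsilon)$ is a finite constant depending only on $\epsilon$ and the nuclei. Hence $\ETF$ is uniformly bounded below on $\TFM$.

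For \textbf{lower semi-continuity}, consider a sequence $\sigma_n\to\sigma$ strongly in $L^1\cap L^p$, $p>1$, with $\sigma_n\in\TFM$. The constant $-D(\nu_\epsilon,\nu_\epsilon)$ is irrelevant; since $\phi_\epsilon$ is compactly supported and belongs to $L^{p'}$, H\"older gives $\int\sigma_n\phi_\epsilon\to\int\sigma\phi_\epsilon$; and it remains to show lsc of $\rho\mapsto D(\rho,\rho)$ along the neutral sequence $\rho_n := \sigma_n - \nu_\epsilon$. For this I would invoke the Saff--Totik representation $D(\rho,\rho) = (2\pi)^{-1}\|U_\rho\|_{L^2}^2$ with $U_\rho(t) := \int(t-z)^{-1}\rho(z)\,dz$ (equivalently $D(\rho,\rho) = \pi\int|\widehat\rho(k)|^2/|k|^2\,dk$ for $\int\rho=0$): strong $L^1$-convergence of $\rho_n$ forces locally uniform convergence $\widehat{\rho_n}\to\widehat\rho$, and Fatou applied to $|\widehat\rho|^2/|k|^2$ yields $D(\rho,\rho)\leq \liminf D(\rho_n,\rho_n)$. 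The neutralisation identity then closes the argument.

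The principal obstacle I anticipate is making the last Fatou step watertight: the 2D log-kernel is sign-indefinite and $|\widehat\rho(k)|^2/|k|^2$ is delicate near $k=0$. Neutrality gives $\widehat{\rho_n}(0)=0$, but one still needs enough control of the first moment of $\rho_n$ to make $|\widehat{\rho_n}(k)|/|k|$ locally square-integrable at the origin. This control is supplied by the $L^1(\log(2+|x|)dx)$-membership built into $\TFM$ together with the uniform bounds $\sigma_n\leq 1$, $\int\sigma_n=K$, but it is not propagated automatically by mere $L^1\cap L^p$ convergence. If the direct Fourier argument proves too fragile, one can instead work with $U_{\rho_n}$, using two-dimensional Hardy--Littlewood--Sobolev bounds to upgrade to $L^2_{\rm loc}$-convergence of $U_{\rho_n}$ before applying Fatou to $\|U_{\rho_n}\|_{L^2}^2$, or else invoke the classical lsc of Coulomb self-energies on neutral measures from potential theory.
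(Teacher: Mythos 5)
Your proof is correct, but it takes a genuinely different route from the paper. The paper stays closer to the raw definition: it uses $\int\sigma=K$ to symmetrize $\ETF$ into a single double integral, then applies the elementary inequality $|x-y|\leq(1+|x|)(1+|y|)$ to split the kernel into (i) a one-body term $K^{-1}\sum_i\log|x-x_i|-\log(1+|x|)$ that is locally integrable and bounded below at infinity, plus (ii) a manifestly \emph{non-negative} two-body kernel $-\log\!\bigl(|x-y|/[(1+|x|)(1+|y|)]\bigr)$. Boundedness below is then immediate, and lower semi-continuity follows from Fatou applied to (ii), which is legitimate precisely because $\sigma_n\geq0$ makes the integrand non-negative; term (i) passes to the limit by H\"older. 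Your route instead smears the nuclei into $\nu_\eps$, completes the square in $D(\cdot,\cdot)$ to obtain $D(\sigma-\nu_\eps,\sigma-\nu_\eps)-\int\sigma\phi_\eps-D(\nu_\eps,\nu_\eps)$, and invokes the Saff--Totik positivity of $D$ on neutral distributions already used for Lemma~\ref{lem:convexity}. This is a clean and arguably more \lq\lq electrostatic\rq\rq\ way to get the lower bound, and it makes the constant explicit.

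Two remarks. First, a small sign slip: with $f_i=\log|\cdot|*\bigl((\pi\eps^2)^{-1}\one_{D(x_i,\eps)}\bigr)$ one computes $f_i(x)=\log\eps-\tfrac12+|x-x_i|^2/(2\eps^2)$ for $|x-x_i|<\eps$, not $\log\eps+\tfrac12-|x-x_i|^2/(2\eps^2)$; the conclusion $\phi_\eps\geq0$ (and $\phi_\eps$ integrable, compactly supported) is nevertheless correct. Second, your worry about the l.s.c.\ step is more pessimistic than warranted: you cannot run Fatou directly on the two-body kernel after completing the square because $\rho_n=\sigma_n-\nu_\eps$ is \emph{signed}, so you lose the sign control that the paper's decomposition keeps. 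But the Cauchy-transform route you sketch does close the gap: since $\sigma_n\to\sigma$ in $L^1\cap L^p$ one gets $U_{\rho_n}(t)\to U_\rho(t)$ for a.e.\ $t$ (split $1/|t-z|$ into a local $L^{p'}$ piece and a bounded tail), and then Fatou applied to $|U_{\rho_n}|^2$ gives $D(\rho,\rho)\leq\liminf D(\rho_n,\rho_n)$; no control of the first moment or the $\log$-weight beyond what $\TFM$ already supplies is needed, because the identity $D(\mu,\mu)=(2\pi)^{-1}\|U_\mu\|_{L^2}^2$ is available as soon as $\mu$ is neutral with $D(\mu,\mu)<\infty$. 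In short: the paper's choice of decomposition buys a one-line Fatou argument, while your completion of the square buys a more transparent lower bound at the cost of needing the Cauchy-transform (or an equivalent potential-theoretic lsc statement) for the semi-continuity.
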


\begin{proof}
For any $\sigma \in \TFM$ we can use the normalization to write 
\begin{equation}\label{eq:rewrite ener TF}
 \ETF[\sigma] = \frac{1}{2} \iint_{\R^2 \times \R ^2} \left( K ^{-1} \sum_{i=1} ^K \log |x-x_i| + K^{-1}\sum_{i=1} ^K \log |y-x_i| - \log|x-y| \right) \sigma (x) \sigma (y)dxdy. 
\end{equation}
Then, since $|x-y| \leq (1+|x|)(1+|y|)$, we have 
\beq \ETF[\sigma] \geq K \int_{\R^2} \left( K ^{-1} \sum_{i=1} ^K \log |x-x_i| - \log(1+|x|) \right) \sigma (x) dx.\eeq
The function in parenthesis is clearly uniformly bounded from below outside of a disk $D(0,R)$ that contains all the nuclei. On the disk $D(0,R)$ it is locally integrable. Since $\sigma\in\TFM$ is uniformly bounded in $\in L^1 (\R ^2) \cap L ^{\infty} (\R ^2)$ and non-negative, we can split the integration domain according to whether $|x| \leq R$ or the other way around, and conclude that both pieces are bounded from below uniformly.

The lower semi-continuity follows from the same kind of considerations, using in addition Fatou's lemma and the fact that $\log |\,.\,| \in L^q _{\rm loc} (\R ^2)$ for any $1 \leq q < \infty$. 
\end{proof}

We now prove  the existence of a minimizer. Note that if we had been minimizing under the constraint that the mass be equal to $N<K$, existence of a minimizer would follow from standard arguments discussed e.g. in~\cite{SafTot-97}. At neutrality, when the number of electrons equals that of nuclei, this is slightly more subtle and we use some tricks we learned from~\cite{BetSan-14,Hardy-12,HarKui-12}.

\begin{proof}[Proof of Theorem~\ref{thm:TF theory}, Item~1]
Uniqueness of the minimizer is already contained in Lemma~\ref{lem:convexity}. For the existence, consider a minimizing sequence $(\sigma_n)_{n\in \N}$. In view of the definition of $\TFM$ we can extract a (not relabeled) subsequence which converges weakly-$\star$ as a Radon measure, weakly in any $L^p (\R ^2)$ for any $1<p<\infty$, and weakly-$\star$ in $L^{\infty} (\R ^2)$ to a candidate minimizer $\TFmin$, i.e.
\begin{equation}\label{eq:weak CV}
\int_{\R ^2} f \sigma_n \to \int_{\R ^2} f \sigma 
\end{equation}
for any function $f \in L^{q} (\R^2), 1 \leq q < \infty$ (the convergence as a Radon measure corresponds to $f$ continuous with compact support). Using Mazur's lemma~\cite[Theorem~2.13]{LieLos-01} and the convexity of the functional we could assume without loss that the convergence is strong in any $L^{q} (\R^2), 1 < q < \infty$, but we shall not use this.

By the lower semicontinuity proved in Lemma~\ref{lem:low bound} we obtain
\beq\ETF [\TFmin]\leq \inf\{\ETF[\sigma]:\, \sigma\in \mathcal M^{\rm TF}\}.\eeq
Hence we only have to prove that $\TFmin \in \TFM$. The upper and lower constraints on $\sigma_n$ pass to the limit easily, but it remains to show that
\beq \int_{\R^2} \TFmin = K.\eeq
This follows from the fact that the sequence $(\sigma_n)_{n\in \N}$ is tight, i.e., no mass escapes to infinity in the limit. To prove this we pick some $R>0$ and rewrite the energy as in~\eqref{eq:rewrite ener TF}, then split the part of the double integral where $|x| \geq R, |y|\geq R$  from the rest: 
\begin{multline}
 \ETF[\sigma_n] 
 \\= \frac{1}{2} \iint_{|x|\geq R, |y|\geq R} \left( K ^{-1} \sum_{i=1} ^K \log |x-x_i| + K^{-1}\sum_{i=1} ^K \log |y-x_i| - \log|x-y| \right) \sigma_n (x) \sigma_n (y)dxdy \\
 + \frac{1}{2} \iint_{|x|\leq R \mbox{ or } |y|\leq R} \left( K ^{-1} \sum_{i=1} ^K \log |x-x_i| + K^{-1}\sum_{i=1} ^K \log |y-x_i| - \log|x-y| \right) \sigma_n (x) \sigma_n (y)dxdy.
\end{multline}
The second term is bounded below independently of $n$ and $R$ by the same arguments as in the proof of Lemma~\ref{lem:low bound}. For the rest we rewrite 
\begin{multline}
 K ^{-1} \sum_{i=1} ^K \log |x-x_i| + K^{-1}\sum_{i=1} ^K \log |y-x_i| - \log|x-y|  \\
 = \left( K ^{-1} \sum_{i=1} ^K \log |x-x_i| - \frac{1}{2} \log (1+|x| ^2) + K^{-1}\sum_{i=1} ^K \log |y-x_i| - \frac{1}{2} \log (1+|y|^2)\right) \\
 - \log\frac{|x-y|}{\sqrt{1+|x| ^2} \sqrt{1+|y| ^2}}. 
\end{multline}
The function in parenthesis on the second line is bounded below uniformly for $|x|,|y|\geq R$ large enough, so that, once integrated against 
$\one_{|x|\geq R} \one_{|y|\geq R}\sigma_n(x)\sigma_n(y)dxdy$ it gives a contribution uniformly bounded from below. For the contribution of the last term we note that 
\beq 
\frac{|x-y|}{\sqrt{1+|x| ^2} \sqrt{1+|y| ^2}} \leq \frac{|x|+|y|}{\sqrt{1+|x| ^2} \sqrt{1+|y| ^2}} \leq \frac{1}{\sqrt{1+|y| ^2}} + \frac{1}{\sqrt{1+|x| ^2}}
\eeq
and thus 
\begin{multline}
 \iint_{|x|\geq R, |y|\geq R} \log\frac{|x-y|}{\sqrt{1+|x| ^2} \sqrt{1+|y| ^2}} \sigma_n (x) \sigma_n (y)dxdy \\ 
 \leq -\frac{1}{2}\log (1+R ^2) \left(\int_{|x|\geq R} \sigma_n (x) dx\right)^2 + C. 
\end{multline}

All in all we deduce that, the energy of $\sigma_n$ being uniformly bounded above,  
\beq 
\frac{1}{4} \log (1+R ^2) \left(\int_{|x|\geq R} \sigma_n (x) dx\right)^2 \leq C,
\eeq
where $C$ depends neither on $n$ nor on $R>0$ provided the latter is large enough. Since $\log (1+R ^2) \to +\infty$ for $R\to \infty$, this implies tightness of the minimizing sequence, namely that for any $\eps >0$ there exists $R_\eps>0$ such that 
\beq \limsup_{n\to \infty} \int_{|x|\geq R_\eps} \sigma_n (x) dx < \eps.\eeq
\end{proof}
\medskip

To study the properties of the minimizer, whose existence and uniqueness we have now established,  we next consider the corresponding electrostatic potential.

\subsection{The Thomas-Fermi potential}\label{sec:TF pot}

To any electron density $\sigma$ we associate a potential 
\begin{equation}\label{eq:potential}
\Phi_\sigma = \Vnuc + \sigma * \log |\,.\,| = \left( \rhonuc - \sigma \right)*(- \log |\,.\,|).
\end{equation}
If $\int_{\R ^2} \sigma = K$ then this is the Coulomb potential generated by a neutral charge distribution. 
We prove first that it is regular away from the nuclei and decays at infinity. 

\begin{lemma}[\textbf{Regularity and asymptotic behavior of the potential}]\label{lem:decay pot}\mbox{}\\
If $\sigma \in \TFM$, the potential $\Phi_\sigma$  is continuous and  once continuously differentiable away from the nuclei and tends to zero at infinity. More generally, if $\sigma$ satisfies the conditions \eqref{eq:MTF} except neutrality, but $\int \sigma<K$, then 
\begin{equation}\label{eq:decay pot}
\left|  \Phi_\sigma (x) + \left(K - \int_{\R^2} \sigma \right) \log |x| \right| \to 0 
\end{equation}
 when $|x| \to \infty$
\end{lemma}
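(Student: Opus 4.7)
The plan is to write $\Phi_\sigma = V_{\rm nuc} + U$ with $U := \sigma * \log|\,\cdot\,|$ and handle the two pieces separately. Since $V_{\rm nuc}$ is real-analytic on $\R^2\setminus\{x_1,\ldots,x_K\}$ and satisfies $V_{\rm nuc}(x)+K\log|x|\to 0$ as $|x|\to\infty$ by elementary Taylor expansion, both the regularity statement and the asymptotic claim reduce to the corresponding statements for $U$.

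For the regularity of $U$ and $\nabla U$, I would fix any point and split $\sigma=\sigma_{\rm near}+\sigma_{\rm far}$ with $\sigma_{\rm near}$ compactly supported. The near piece convolved with either kernel $\log|\,\cdot\,|$ or $\nabla\log|\,\cdot\,|(z)=z/|z|^2$ is continuous because both kernels lie in $L^1_{\rm loc}(\R^2)$ (note $1/|z|$ is locally integrable in two dimensions). The far piece is handled by dominated convergence: $|\log|x-y||$ is dominated uniformly for $x$ on a compact set by $C(1+\log(2+|y|))$, integrable against $\sigma$ by the log-moment hypothesis, and $|\nabla_x\log|x-y||=1/|x-y|$ is dominated by $C/|y|$ at infinity, with $\sigma(y)/|y|\in L^1$ thanks to $\sigma\leq 1$ (for $|y|\leq 1$) and $\sigma\in L^1$ (for $|y|\geq 1$).

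For the asymptotics, for $|x|\geq 1$ I factor $|x-y|=|x|\,|1-y/x|$ (treating $y/x$ as a complex ratio) to obtain
\beq
\Phi_\sigma(x) = (K'-K)\log|x| - \sum_{i=1}^K\log|1-x_i/x| + \int \sigma(y)\log|1-y/x|\,dy,
\eeq
with $K'=\int\sigma$. The nuclear sum is trivially $o(1)$. The integral I split according to $|y|\leq|x|/2$, $|y|\geq 2|x|$, and the annulus $|x|/2\leq|y|\leq 2|x|$. On the first region $|\log|1-y/x||\leq\log 2$ with pointwise convergence to $0$, giving $o(1)$ by dominated convergence with majorant $(\log 2)\sigma\in L^1$. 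On the second region $|\log|1-y/x||\leq C+\log|y|$ with majorant $\sigma(y)(C+\log|y|)\in L^1$ (log-moment hypothesis) and pointwise decay via $\one_{|y|\geq 2|x|}\to 0$.

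The hard part is the intermediate annulus, where the logarithmic singularity of $\log|1-y/x|$ at $y=x$ meets the merely logarithmic fall-off we have on $\sigma$. Here I would refine by further splitting at $|x-y|=|x|^{1/2}$. Where $|x-y|\geq|x|^{1/2}$, one has $|\log|1-y/x||\leq C\log|x|$ uniformly, so the contribution is bounded by $C\log|x|\int_{|y|\geq|x|/2}\sigma$. Where $|x-y|\leq|x|^{1/2}$, after substituting $z=y-x$, I split once more at $|z|=1$; the outer subregion is again bounded by $C\log|x|\int_{|y|\geq|x|/2}\sigma$, and the inner subregion $\int_{|z|\leq 1}\sigma(x+z)|\log|z||\,dz$ is handled by the $\eps$-argument
\beq
\int_{|z|\leq 1}\sigma(x+z)|\log|z||\,dz \leq \|\sigma\|_\infty\int_{|z|\leq\eps}|\log|z||\,dz + |\log\eps|\int_{|y-x|\leq 1}\sigma(y)\,dy,
\eeq
which is made small by choosing $\eps$ small then $|x|$ large. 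All remaining contributions are controlled by
\beq
\log|x|\int_{|y|\geq|x|/2}\sigma\,dy \leq \frac{\log|x|}{\log(2+|x|/2)}\int_{|y|\geq|x|/2}\sigma(y)\log(2+|y|)\,dy \to 0,
\eeq
thanks to the log-moment hypothesis. Assembling everything, $\Phi_\sigma(x)=(K'-K)\log|x|+o(1)$, which is \eqref{eq:decay pot} in the non-neutral case and reduces to decay to zero when $K'=K$.
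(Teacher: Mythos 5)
Your proof is correct and follows the same essential strategy as the paper's: write $\log|x-y| = \log|x| + \log|1-y/x|$ and show that $\int\sigma(y)\log|1-y/x|\,dy\to 0$ as $|x|\to\infty$, using local integrability of $\log$ to handle the region near $y=x$ and the log-moment hypothesis to kill the slowly-decaying tails. The differences are bookkeeping: the paper splits the $y$-integration at a fixed large radius $\delta R$ and then again at the unit disk around $x$, invoking a pointwise $\eps$-$\delta$ estimate on $\log(1-t)$ and Cauchy--Schwarz on the unit disk; you split by the ratio $|y|/|x|$ and then by $|x-y|$, relying more on dominated convergence, which makes the structure somewhat cleaner. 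For regularity, the paper cites \cite[Theorem~10.2]{LieLos-01} for compactly supported $\sigma$ and passes to the limit; your direct splitting works too, although the claim that the near piece is continuous ``because both kernels lie in $L^1_{\rm loc}$'' is underjustified --- what is actually being used is the standard fact that the convolution of a compactly supported $L^p$ kernel with an $L^{p'}$ density is continuous. One small imprecision in the asymptotics: on the innermost region $|z|=|y-x|\leq 1$ (with $|x|$ large), the integrand is $|\log|1-y/x|| = \log|x| - \log|z| = \log|x| + |\log|z||$, so the inner contribution is not simply $\int_{|z|\leq 1}\sigma(x+z)|\log|z||\,dz$ but also contains the term $\log|x|\int_{|y-x|\leq 1}\sigma$. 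This is not a genuine gap, since $|y-x|\leq 1$ forces $|y|\geq|x|/2$ once $|x|\geq 2$, and the extra piece is then absorbed by your own ``all remaining contributions'' bound $\log|x|\int_{|y|\geq|x|/2}\sigma\to 0$.
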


\begin{proof} 
If $\sigma \in \TFM$ has compact support  the regularity follows from ~\cite[Theorem~10.2]{LieLos-01}. The general case follow by cutting $\sigma$ at some finite radius $R$ and using the dominated convergence theorem as $R\to \infty$.

For the asymptotic behavior we  consider also first the case that $\sigma$ has support in some disk $D(0,R)$ with $R<\infty$.
For $|y|\leq \half |x|$ we have 
\beq\label{eq:split log} 
\log |x-y|=\log |x|+\log \left |1-\frac {y}{x} \right|\leq \log |x|+\frac{ |y/x|}{1-|y/x|}\leq \log |x| +2\frac{|y|}{|x|}
\eeq
where the ratio of two vectors in $\R^2$ is interpreted as that of the corresponding complex numbers.

Thus,  for $|x|\geq 2\max_i \{R, |x_i|\}$
\beq 
\left|  \Phi_\sigma (x) + \left(K - \int_{\R^2} \sigma \right) \log |x| \right|\leq 2\left( \sum_{i=1} ^K |x_i| + \int_{D(0,R)} |y| \sigma (y)  dy\right)\frac 1{|x|}.
\eeq
and the result is proved in this case.

Let us now only assume that 
\beq\int_{\R ^2} \sigma(y)\log (1+|y|)dy<\infty\eeq
as in the definition of $\TFM$. This implies that for any $\eps>0$ there is a $R_\eps$ such that 
\begin{equation}\label{eq:decay int}
\int_{|y|\geq R_\eps}\sigma(y)\log (1+|y|)dy<\eps. 
\end{equation}
For every $\eps>0$ there is a $\delta>0$ such that $\log(1-t)<\eps$ if $|t|<\delta$. Pick some large $R$ and write for $|x|\geq R \geq R_\eps$
\beq
\int_{\R^2} \log \left |1-\frac {y}x\right|\sigma(y)dy=\int_{|y|\leq \delta \cdot R} \log \left |1-\frac {y}x\right|\sigma(y)dy + \int_{|y|\geq \delta \cdot R}  \log \left |1-\frac {y}x\right|\sigma(y)dy.
\eeq
The first term is bounded by $({\rm const.})\,\eps$ by the choice of $\delta$ and because $\int_{\R ^2} \sigma$ is finite.

The second term we write as 
\beq
\int_{|y|\geq \delta \cdot R}\log|x-y|\sigma(y)dy -\log|x| \int_{|y|\geq \delta \cdot R}\sigma(y)dy.
\eeq
If $\delta \cdot R$ is large enough, we can ensure  
\beq\int_{|y|\geq \delta \cdot R}\sigma(y)dy\leq C \eps (\log|x|)^{-1}.\eeq 
because of~\eqref{eq:decay int}. In the remaining term (remember that $\delta \cdot R$ is large and $|x|\geq\delta \cdot R$) we split the integral into a disk $D(x,1)$ of radius 1 around $x$ and the rest. Since the logarithm is square integrable over the unit disk we can use the Cauchy-Schwarz inequality to conclude that 
\beq \left|\int_{D(x,1)}\log|x-y|\sigma(y)dy\right| \leq C \eps ^{1/2} \eeq
because $0\leq\sigma\leq 1$ and $\int_{D(x,1)}\sigma\leq \eps$. The integral over the complement of $D(x,1)$ is also small because there $\log|x-y|\leq 2\log|y|$ and 
the integral against $\sigma$ tends to zero as the size of the integration domain increases.

Summarizing, we have proved that for any $\eps >0$ there exists $R_\eps >0$ such that 
\beq
\int_{\R^2} \log \left| 1 - \frac{y}{x} \right| \sigma (y) dy \leq \eps
\eeq
for $|x| \geq R_\eps$. Returning to the first equality in~\eqref{eq:split log} proves the lemma.

\end{proof}

\begin{lemma}[\textbf{Variational inequalities}]\label{lem:var eq}\mbox{}\\
Recall the definition~\eqref{eq:TF pot} of the Thomas-Fermi potential. There exists $\lambda \in \R$ such that for almost every $x$ 
\begin{align}\label{eq:TF eq bis1}
\TFpot(x) &\geq \lambda \mbox{ if } \TFmin(x) =1  \\ \label{eq:TF eq bis2}
\TFpot(x)&= \lambda \mbox{ if } 0<\TFmin(x)<1\\ \label{eq:TF eq bis3}
\TFpot (x)&\leq \lambda \mbox{ if } \TFmin(x) = 0 
\end{align}
\end{lemma}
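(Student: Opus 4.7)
The plan is to derive~\eqref{eq:TF eq bis1}--\eqref{eq:TF eq bis3} as the Kuhn--Tucker optimality conditions for the constrained minimization problem defining $\TFmin$, with $\lambda$ playing the role of the Lagrange multiplier for the neutrality constraint $\int \sigma = K$ in~\eqref{eq:MTF}. Because $\int \TFmin = K$ is finite and positive while $0 \leq \TFmin \leq 1$, both $\{\TFmin > 0\}$ and $\{\TFmin < 1\}$ have positive (in fact the latter has infinite) Lebesgue measure. For any sufficiently small $\eps > 0$ I can use $\sigma$-additivity, applied to the decomposition $\{\TFmin < 1\} = \bigcup_n \{\TFmin \leq 1 - 1/n\}$ and its analogue for $\{\TFmin > 0\}$, to select measurable subsets $B_+ \subset \{\TFmin \leq 1-\eps\}$ and $B_- \subset \{\TFmin \geq \eps\}$ of finite positive measure. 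For small $t > 0$ the mass-preserving competitor
\[
\sigma_t := \TFmin + t \left( \frac{\one_{B_+}}{|B_+|} - \frac{\one_{B_-}}{|B_-|} \right)
\]
then satisfies $0 \leq \sigma_t \leq 1$, $\int \sigma_t = K$, and the logarithmic integrability in~\eqref{eq:MTF}, hence belongs to $\TFM$.

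A direct expansion using the bilinearity and symmetry of $D$ together with the definition~\eqref{eq:TF pot} of $\TFpot$ yields
\[
\ETF[\sigma_t] - \ETF[\TFmin] = - t \left( \frac{1}{|B_+|} \int_{B_+} \TFpot - \frac{1}{|B_-|} \int_{B_-} \TFpot \right) + O(t^2).
\]
Since $\TFmin$ is a minimizer the first-order coefficient must be non-negative, which forces
\[
\frac{1}{|B_-|} \int_{B_-} \TFpot \; \geq \; \frac{1}{|B_+|} \int_{B_+} \TFpot
\]
for every admissible pair $(B_+, B_-)$. Now set $\lambda := \mathrm{ess\,sup}_{\{\TFmin < 1\}} \TFpot$. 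By definition $\TFpot \leq \lambda$ almost everywhere on $\{\TFmin < 1\}$, which proves~\eqref{eq:TF eq bis3} and the upper half of~\eqref{eq:TF eq bis2}. The averaged inequality then implies $\frac{1}{|B_-|} \int_{B_-} \TFpot \geq \lambda$ for every admissible $B_-$; applying the Lebesgue differentiation theorem to $\TFpot$ (locally integrable, and in fact continuous away from the nuclei by Lemma~\ref{lem:decay pot}) yields $\TFpot \geq \lambda$ almost everywhere on $\{\TFmin > 0\}$, which completes~\eqref{eq:TF eq bis1} and the lower half of~\eqref{eq:TF eq bis2}.

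The whole argument is the standard KKT passage from global first-order optimality to a Lagrangian pointwise characterization. The only delicate step is measure-theoretic, namely producing admissible perturbations $B_\pm$ with the small cushion $\eps$ that keeps $\sigma_t$ within the box constraints; this is what motivates working with the strictly-interior sub-/super-level sets above, and it is handled by $\sigma$-additivity together with the a priori $L^\infty$ bound on $\TFmin$. Once $\lambda$ is identified, the variational computation and the averages-to-pointwise step via Lebesgue differentiation are entirely routine.
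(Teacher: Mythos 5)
Your proof implements in detail the standard KKT/Lagrange-multiplier variational argument that the paper merely cites (Friedman~\cite{Friedman-88} and the flocking papers~\cite{BurChoTop-15,FraLie-16}), so it is the same approach as the paper, just written out. One small point you gloss over: when you assert $\sigma_t\in\TFM$, the $\log$-integrability $\int \sigma_t \log(2+|x|)\,dx<\infty$ and the finiteness of the quadratic term $D(\rho,\rho)$ (needed so that the $O(t^2)$ remainder is genuine) do not follow from $|B_\pm|<\infty$ alone — a set of finite measure can escape to infinity slowly enough to make these diverge; you should additionally take $B_\pm$ bounded, e.g.\ by intersecting your sub-level sets with a large disk, which costs nothing and is already consistent with the Lebesgue-differentiation step. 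With that word added the argument is correct and complete, including the implicit finiteness of $\lambda=\mathrm{ess\,sup}_{\{\TFmin<1\}}\TFpot$, which your averaged inequality combined with the local integrability of $\TFpot$ forces.
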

\begin{proof}
Eqs.\ \eqref{eq:TF eq bis1}--\eqref{eq:TF eq bis3} follow from local minimality of the functional \eqref{eq:TF func} at  $\TFmin$
by performing small variations in the standard way, c.f.~\cite[Lemma 4.1.2]{Friedman-88} or~\cite{BurChoTop-15,FraLie-16}. Here $\lambda$ is the Lagrange multiplier associated with the mass constraint. That we get only inequalities on the sets $\{ \TFmin =1  \}$ and $\{\TFmin = 0 \} $ comes from the fact that the variational constraint $0\leq \sigma \leq 1$ allows only one-sided variations on these sets.
\end{proof}

The Lagrange multiplier $\lambda$ we shall refer to as a chemical potential for it is related\footnote{We do not make this relation explicit for we don't need it here. See~\cite{LieSim-77b} where this is explained in the setting of usual Thomas-Fermi theory.} to variations of the minimal energy as a function of the electronic charge. We have the following: 

\begin{lemma}[{\bf Value of the chemical potential}]\label{lem:chem pot}\mbox{}\\
The Lagrange multiplier $\lambda$ defined in Lemma~\ref{lem:var eq} is equal to $0$ and $\TFpot = 0$ a.e.\ where $\TFmin=0$. 
\end{lemma}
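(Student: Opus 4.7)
The plan is to prove the two claims ($\lambda = 0$, then $\TFpot = 0$ a.e.\ on $\{\TFmin = 0\}$) by combining Lemma~\ref{lem:var eq} with the decay $\TFpot(x) \to 0$ as $|x| \to \infty$ (Lemma~\ref{lem:decay pot}) and a Kelvin/maximum-principle argument.

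First I would sharpen the variational inequalities of Lemma~\ref{lem:var eq} into the a.e.\ implications
\[
\TFpot(x) > \lambda \ \Longrightarrow\ \TFmin(x) = 1, \qquad \TFpot(x) < \lambda \ \Longrightarrow\ \TFmin(x) = 0,
\]
which are immediate since the only case left, $0 < \TFmin(x) < 1$, would force $\TFpot(x) = \lambda$. The case $\lambda < 0$ is then eliminated quickly: since $\TFpot \to 0$ at infinity and $\lambda < 0$, the sharpened inequality gives $\TFmin = 1$ a.e.\ outside a large disk, incompatible with $\int \TFmin = K < \infty$.

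The main obstacle is ruling out $\lambda > 0$. Analogously, $\TFmin = 0$ a.e.\ outside a large disk, so on the unbounded connected component $\Omega$ of the open set $\{\TFpot < \lambda\}$ the potential $\TFpot$ is harmonic (neither electronic nor nuclear charges are present in $\Omega$), bounded, continuous up to $\partial \Omega$ where $\TFpot = \lambda$ by continuity, and tends to $0$ at infinity. I would then choose a point $x_0 \notin \overline{\Omega}$ with $\TFpot(x_0) > \lambda$ (such points exist in a punctured neighborhood of any nucleus, and $x_0$ lies at positive distance from $\overline{\Omega} \subset \{\TFpot \leq \lambda\}$) and apply the 2D Kelvin inversion $T_{x_0}(x) = x_0 + (x-x_0)/|x-x_0|^2$. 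Since $d(x_0, \overline{\Omega}) > 0$, the image $T_{x_0}(\Omega)$ is a bounded open set containing a punctured neighborhood of $x_0$ (the image of a neighborhood of infinity). The transported function $\tilde h(y) := (\lambda - \TFpot)(T_{x_0}(y))$ is bounded and harmonic on $T_{x_0}(\Omega)$ with $\tilde h(y) \to \lambda$ as $y \to x_0$; the removable-singularity theorem extends it harmonically across $x_0$ with $\tilde h(x_0) = \lambda$, while $\tilde h = 0$ on $\partial T_{x_0}(\Omega)$ (the pull-back of $\partial \Omega$). The maximum principle applied on the bounded open domain $T_{x_0}(\Omega) \cup \{x_0\}$ then forces $\tilde h \equiv 0$, contradicting $\tilde h(x_0) = \lambda > 0$.

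Once $\lambda = 0$ is established, the second assertion follows by the same method applied on each connected component of $\{\TFpot < 0\}$: this open set contains no nuclei and, by the sharpened inequality, satisfies $\TFmin = 0$, so $\TFpot$ is harmonic on each component with zero boundary value (by continuity, using $\lambda = 0$) and, in the unbounded case, zero limit at infinity. The maximum principle applied directly on a bounded component, and via the same Kelvin inversion on the unbounded one, forces $\TFpot \equiv 0$, contradicting the strict negativity. Hence $\{\TFpot < 0\} = \emptyset$, and combined with $\TFpot \leq 0$ a.e.\ on $\{\TFmin = 0\}$ from Lemma~\ref{lem:var eq} we conclude $\TFpot = 0$ a.e.\ on $\{\TFmin = 0\}$.
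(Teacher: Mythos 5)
Your proof is correct, and it takes a genuinely different route from the paper's at the crucial step of excluding $\lambda > 0$. The paper argues there by first observing that $\TFpot$ cannot have a local minimum on $\{\TFpot < \lambda\}$ (so $\TFpot \geq 0$ everywhere, using the decay at infinity), and then invokes Newton's theorem: a circular average $\Phiav(R_0)$ of $\TFpot$ at a radius $R_0$ enclosing all the charge must vanish by neutrality and the decay, yet must be strictly positive, a contradiction. You instead run a direct maximum-principle argument on the unbounded component $\Omega$ of $\{\TFpot < \lambda\}$, using a Kelvin inversion about a nearby point $x_0$ where $\TFpot(x_0) > \lambda$, together with the removable-singularity theorem, to force $\lambda - \TFpot \equiv 0$ on $\Omega$, contradicting $\lambda > 0$. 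Both approaches lean on the decay $\TFpot \to 0$ from Lemma~\ref{lem:decay pot} (which is where neutrality enters in either case), but yours never appeals to Newton's theorem or circular averages, and is self-contained in the maximum-principle toolkit. Your treatment of the final claim (emptiness of $\{\TFpot < 0\}$) is also slightly more careful than the paper's terse remark, since you explicitly deal with a possibly unbounded component via the same Kelvin device, whereas the paper's phrasing ("harmonic on this set and zero on its boundary") leaves that case implicit.
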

\begin{proof}
Assume $\lambda<0$. By Lemma~\ref{lem:decay pot} $\TFpot$ vanishes at infinity so there is a radius $R$ such that $\TFpot(x)>\lambda$ for $|x|\geq R$. By \eqref{eq:TF eq bis1}--\eqref{eq:TF eq bis3} this implies $\TFmin(x)=1$ a.e. for $|x|\geq R$, contradicting the finiteness of $\int\sigma$. Hence $\lambda<0$ is excluded.

Assume  $\lambda>0$. As above, there is a radius $R$ such that $\TFpot(x)<\lambda$ for $|x|\geq R$. By \eqref{eq:TF eq bis1}--\eqref{eq:TF eq bis3} this implies $\TFmin(x)=0$ a.e. for $|x|\geq R$. Hence the set $\{\TFmin >0\}$ is bounded, up to a set of measure zero.
Consider the circular average 
\begin{equation}\label{eq:circav}
\Phiav (R) := \frac{1}{2\pi} \int_0^{2\pi}\TFpot\left(R\,e^{i\theta}\right)d\theta. 
\end{equation}
Because $\TFpot$ is harmonic on $\{x:\, \TFpot(x)<\lambda\}$, it cannot have a local minimum on this set so we know that $\TFpot(x)\geq 0$ everywhere and hence $\Phiav(R)\geq 0$. By continuity of $\TFpot$ and the assumption $\lambda>0$ there is a radius $R_0$ such that $\Phiav(R_0)>0$, but $\Phiav (R)<\lambda$ for $R>R_0$. The circle with radius $R_0$ thus encloses the whole set $\{\TFmin >0\}$, up to a set of measure zero. On the other hand, by Newton's Theorem~\cite[Theorem~9.7]{LieLos-01},  $\Phiav(R_0)$ is proportional to the total charge inside the circle of radius $R_0$ which is zero and we have a contradiction.

Having excluded both $\lambda <0$ and $\lambda >0$ we conclude that $\lambda=0$. Finally, $\{x:\, \TFpot(x)<0\}$ is empty because $\TFpot$ would be harmonic on this set and zero on its boundary.\end{proof}

With Lemmas \ref{lem:var eq}  and \ref{lem:chem pot} we have established  Item (2) of Theorem \ref{thm:TF theory} We next show that $\TFmin$ takes only the values $0$ or $1$ almost everywhere.

\begin{proof}[Proof of Theorem~\ref{thm:TF theory}, Item~(3)]
We want to show that the set  
\begin{equation}
A:=\left\{ x\,:\, 0< \TFmin(x) <1 \right\} 
\end{equation}
has zero Lebesgue measure.  By the definition~\eqref{eq:TF pot} of  $\TFpot$ we have in the sense of distributions
\begin{equation}\label{del1}
-\Delta \TFpot = 2\pi \left(\rhonuc - \TFmin\right).
\end{equation}
with $\rhonuc=\sum_{i=1}^K\delta_{x_i}$. 
Because $\TFpot$ is constant (in fact zero) on $A$, we can immediately exclude that $A$ contains an open set, because on such a set we would have
\begin{equation}\label{del2}
 \Delta \TFpot = 0
\end{equation}
contradicting \eqref{del1}.
Since there exist sets of positive measure but with empty interior this does not prove the lemma, however.
The problem of overcoming the lack of knowledge of $A$, and showing that 
\eqref{del2} holds a.e. on $A$ was solved in a similar problem in \cite{FraLie-16}
and we sketch that argument here. 

If a function $f$ is in $W^{1,1}_{\rm loc}$ then on the set on which $f$ has
any fixed value, its gradient, $\nabla f$, vanishes except for a set of measure zero, see~\cite{SerVar-69,MarMiz-72,AlmLie-89} or~\cite[Theorem~6.9]{LieLos-01}. Away from the nuclei, $\TFpot$ is explicitly seen to be continuous and in $L^1_{\rm loc}$, and the same holds for its gradient $\nabla \TFpot$.  On $A$ we thus have $\nabla \TFpot = 0$. In order to repeat this argument for the Laplacian, we need to show that $\TFpot$ is in $W^{2,1}_{\rm loc}$. 

What we do know is that the sum of the second
derivatives, $\Delta  \TFpot$, is an $L^1_{\rm loc}$ function (see
\eqref{del1}). What we need is that {\it all} mixed second derivatives are
in  $L^1_{\rm loc}$. This follows by an exercise involving Fourier transforms, c.f.\  \cite[Lemma 11]{FraLie-16}. Thus \eqref{del2}
does hold a.e. on $A$, and this contradicts \eqref{del1} unless $A$ has zero Lebesgue measure.
\end{proof}

We now turn to the last item of Theorem \ref{thm:TF theory}, concerning the support of 
the electron density.

\begin{proof}[Proof of Theorem~\ref{thm:TF theory}, Item~(4)]\mbox{}\\

\noindent\textbf{Step 1.} We first prove the estimate \eqref{eq:support TF} on the support of $\TFmin$. By Lemma \ref{lem:chem pot}
 we know that  $\lambda = 0$. There is an $R$ such that $|x_i|<R$ for all $i$. For $|x|\geq R$ and $\hat R>R$ we introduce the function 
\beq
g(x)=\begin{cases} \pi\cdot(\hat R-|x|)^2 &\hbox{ if } |x|\leq \hat R\\ 0&\hbox{ if } |x|>\hat R \end{cases}.
\eeq
Then, for $R<|x|\leq \hat R$,
\beq \Delta g\leq 2\pi.\eeq
Consider the set 
\beq\mathcal{S} = \{x:\,  0 \leq g(x) < \TFpot(x),\ |x| \geq R\}.\eeq
Since $\TFpot > 0$ on this set, which contains no nucleus by assumption, we have $\TFmin = 1$ and $\Delta \TFpot=2\pi$ on $\mathcal{S}$. Hence
\beq \Delta (g-\TFpot)\leq 0\eeq
on $\mathcal{S}$, so the function $g-\TFpot$ is superharmonic on $\mathcal{S}$ and reaches its minimum at the boundary or at infinity. At infinity the function vanishes. For $|x| = R$ we can make $g-\TFpot \geq 0$ by choosing 
\beq \hat{R} = R + \sqrt{M_R}\eeq
where $M_R$ is given by \eqref{eq:MR}. 
We conclude that on $\mathcal{S}$ we have $g-\TFpot \geq 0$ and thus this set must be empty. Then $\TFpot \leq g$ for $|x|\geq R$, so $\TFpot=0$ and hence $\sigma^{\rm TF}=(2\pi)^{-1}\Delta\TFpot=0$ for $|x|>\hat R$. This yields the desired result, Equation~\eqref{eq:support TF}.

\medskip

\noindent\textbf{Step 2.}
Next we consider the open set $\STF(x_1,\dots,x_K)\equiv \STF$ where the
potential $\TFpot$ is strictly positive. From Equation~\eqref{eq:TF eq} we
have $\STF\subset \{\sigma^{\rm TF}>0\}$ while $ \{\sigma^{\rm TF}>0\}$ is
contained in the closure $\bar\Sigma^{\rm TF}$. Since $\TFpot$ is
continuous outside the \lq\lq nuclei\rq\rq, $\STF$ is Jordan measurable which
means that its boundary has Lebesgue measure zero. Hence $\STF$ and $ \{\sigma^{\rm
TF}>0\}$  are equal up to a null set. Since $\sigma^{\rm TF}$ takes only
the values 1 or 0, it follows that the area of $\STF$ is equal to $K$.

\medskip

\noindent\textbf{Step 3.} Since $\TFpot (x) \to +\infty $ when $x$ approaches a location of a nucleus,  we have $x_i\in \STF$ for all $i$. Consider a connected component of $\STF$ which does not include any $x_i$. On this component we have 
\beq\Delta \TFpot = 2\pi\eeq
and thus $\TFpot$ is subharmonic, hence reaches its maximum at the boundary, where $\TFpot = 0$. This contradicts the strict positivity of $\TFpot$ on $\STF(x_1,\dots,x_K)$. Hence all connected components of $\STF$ contain at least one nucleus.

\medskip

\noindent\textbf{Step 4.} Denote by $\Phi_{K-1}^{\rm TF}$ and $\Phi_{K}^{\rm TF}$ the TF potentials corresponding to 
$\{x_1,\dots,x_{K-1}\}$ and $\{x_1,\dots,x_K\}$ respectively and by $\sigma_{K}$ and $\sigma_{K-1}$ the minimizing densities. Likewise we denote by $\STF_{K-1}$ and  $\STF_{K}$ the sets where the potentials $\Phi_{K-1}^{\rm TF}$ and $\Phi_{K}^{\rm TF}$ are strictly positive. Consider the open set
\beq
\mathcal U=\left\{x:\, \Phi_{K}^{\rm TF}(x)<\Phi_{K-1}^{\rm TF}\right\}.
\eeq
Clearly, $x_i\notin \mathcal U$ for all $i=1,\dots,K$. Since $\Phi_{K}^{\rm TF}\geq 0$ we have $\Phi_{K-1}>0$ on $\mathcal U$ and hence $\sigma_{K-1}=1$ on $\mathcal U$. Thus
\beq (2\pi)^{-1}\Delta(\Phi_{K-1}-\Phi_{K})=\sigma_{K-1}-\sigma_{K}\geq 0\eeq
on $\mathcal U$ because $\sigma_{K}\leq 1$. Thus $\Phi_{K-1}-\Phi_{K}$ is subharmonic on $\mathcal U$ and $\mathcal U$ must be empty. We conclude that
\beq \Phi_{K-1}(z)\leq\Phi_{K}(z)\eeq
everywhere. In  $\STF_{K-1}$ we have $\Phi_{K-1}(z)>0$, hence $\Phi_{K}(z)>0$, so 
$\STF_{K-1}\subset \STF_K$.

\medskip

\noindent\textbf{Step 5.} For a single nucleus, uniqueness implies that the minimizer must be rotationally symmetric around $x_1$. By Step 3 the support of the density is connected and hence a full disk with $\TFmin=1$.  The  Poisson equation then lead to the unique solution
\beq 
\TFpot(x)=\begin{cases}-\log|x-x_1| +\pi |x-x_1|^2-\half\log\pi-1&\hbox{ if } |x-x_1|< 1/\sqrt \pi\\  0 &\hbox{ if } |x-x_1|\geq 1/\sqrt \pi\\ \end{cases}.
\eeq
\end{proof}

\section{Local density upper bounds for classical ground states}\label{sec:class GS}

In this section we discuss the ground state of a generalized classical jellium Hamiltonian as presented in the introduction, and prove the announced local density upper bound on any microscopic scale $\gg 1$. The argument is split in two steps, corresponding to Subsections~\ref{sec:excl prop} and~\ref{sec:excl dens}:
\begin{itemize}
 \item First we prove an exclusion rule for ground states configurations: no point can lie inside a screening region defined by any subset of the other points. The latter is given by the open set $\STF$ where the TF potential $\Phi^{\rm TF}$ corresponding to the chosen subset of the ground state configuration is strictly positive. 
 \item Second, we prove that \emph{any} configuration of points satisfying the above exclusion rule must have a local density bounded above by $1$, which is the desired optimal upper bound.
\end{itemize}
In Subsection~\ref{sec:excl app} we apply these results to ground states of the Hamiltonian \eqref{eq:class Hamil 1} entering the plasma analogy, and small perturbations thereof.

\subsection{Exclusion by screening}\label{sec:excl prop}

The exclusion rule we shall chiefly rely on in the sequel is as follows: 

\begin{definition}[\textbf{Exclusion by screening}]\label{def:excl prop}\mbox{}\\
Let $X_N = \{x_1,\ldots,x_N\} \subset \R ^{2}$. Take any subset $\{y_1,\ldots,y_K,y_{K+1}\} \subset X_N$ of distinct points $y_i$. Define as in \eqref{eq:sigma TF}  $\STF(y_1,\ldots,y_K)$  to be the open set where the TF potential \eqref{eq:TF pot},
generated by nuclei at locations $y_1,\ldots,y_K$ and the corresponding background charge distribution $\sigma^{\rm TF}$, is strictly positive. We say that $X_N$ satisfies the {\it exclusion rule} if 
\begin{equation}\label{eq:excl prop}
y_{K+1} \notin \STF(y_1,\ldots,y_K)
\end{equation}
holds for any choice of $y_1,\ldots, y_K,y_{K+1}\in X_N$.\hfill\qed
\end{definition}

We call this ``exclusion by screening'' because  $\Sigma^{\rm TF}$ defines a region surrounding $y_1,\ldots,y_K$ outside of which the background charge density $\sigma^{\rm TF}$ totally screens the potential generated by $y_1,\ldots,y_K$. Indeed, as we proved in the preceding section, Equation~\eqref{eq:TF eq}, the total potential generated by the background and the point charges together vanishes identically outside $\STF$. Note also that the closure of $\STF$ is the essential support of $\sigma^{\rm TF}$.

The relevance of this notion to our problem is as follows:  Consider  a Hamiltonian function on $\R^{2N}$ of the form 
\begin{equation}\label{eq:Ham ref}
\cH(x_1,\dots, x_N)=\frac \pi 2\sum_{i=1}^N|x_i|^2-\sum_{1\leq i<j\leq N} \log{|x_i-x_j|}+\cW(x_1,\dots, x_N) 
\end{equation}
with $\cW$ symmetric and superharmonic in each variable $x_i$. Equation~\eqref{eq:Ham ref} is obtained from~\eqref{eq:class Hamil 1} by a simple change of length units that we perform in order that the maximum local density be $1$ instead of $\pi \ell$ (see Subsection~\ref{sec:excl app} below).

\begin{proposition}[\textbf{Classical ground states satisfy the exclusion rule}]\label{pro:GS exclu}\mbox{}\\ 
Any minimizing configuration $\{x_1^0,\dots,x_N^0\}$ for $\cH$ satisfies the exclusion rule \eqref {eq:excl prop}.
\end{proposition}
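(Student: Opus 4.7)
My plan is to argue by contradiction via a minimum principle for superharmonic functions. Suppose that, after relabeling, $y_{K+1}:=x_{K+1}^0$ lies in $\STF(y_1,\dots,y_K)$ with $y_j:=x_j^0$. Minimality of the configuration implies that the one-variable function
\[
f(x) \;:=\; \cH(x_1^0,\dots,x_K^0,x,x_{K+2}^0,\dots,x_N^0),\qquad x\in\R^2,
\]
attains its global minimum on $\R^2$ at $x=y_{K+1}$.

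The crucial object is the comparison function $F(x) := f(x) - \TFpot(x)$, where $\TFpot$ is the Thomas--Fermi potential of Section~\ref{sec:TF} associated to the nuclei $y_1,\dots,y_K$. I would check two things. First, the $-\log|x-y_j|$ singularities in $f$ (coming from the Coulomb pairs involving $y_j$) cancel exactly the Newtonian singularities of $\TFpot$, so $F$ is locally bounded near $y_1,\dots,y_K$. Second, the distributional Laplacian of $F$ on $\STF$ can be computed via $\Delta(\tfrac{\pi}{2}|x|^2)=2\pi$ and $\Delta\TFpot = -2\pi\sum_{j=1}^K\delta_{y_j}+2\pi\TFmin$, combined with $\TFmin=1$ a.e.\ on $\STF$ (Theorem~\ref{thm:TF theory}, Items~(3) and~(4)(ii)). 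The smooth part reduces to $\Delta\tilde g(x)$, where $\tilde g(x):=\cW(x_1^0,\dots,x,\dots,x_N^0)$, and this is $\le 0$ by the assumed superharmonicity of $\cW$ in each variable. Any other particles $x_{K+2}^0,\dots,x_N^0$ that happen to lie in $\STF$ contribute further $-\log$ singularities to $f$, hence $+\infty$ singularities to $F$, which only make $F$ more superharmonic (negative delta contributions to $\Delta F$). Therefore $F$ is superharmonic on $\STF$ in the distributional sense.

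The conclusion then follows from the minimum principle. Since $\STF$ is bounded (Theorem~\ref{thm:TF theory}(4)(i)) and $F$ is lower semicontinuous with values in $(-\infty,+\infty]$, its infimum over the closure of $\STF$ is attained on $\partial\STF$. But $\TFpot\equiv 0$ on $\partial\STF$ by Item~(2) of Theorem~\ref{thm:TF theory}, so $F=f$ there. Combined with global minimality of $f$ at $y_{K+1}$, this yields
\[
f(y_{K+1})-\TFpot(y_{K+1}) \;=\; F(y_{K+1}) \;\ge\; \inf_{\partial\STF} F \;=\; \inf_{\partial\STF} f \;\ge\; f(y_{K+1}),
\]
forcing $\TFpot(y_{K+1})\le 0$. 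This contradicts $y_{K+1}\in\STF$, where $\TFpot>0$ by definition.

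The main technical hurdle I anticipate is making the superharmonicity and minimum principle argument fully rigorous in the presence of the configuration points $x_{K+2}^0,\dots,x_N^0$ that may sit inside $\STF$: at each such point $F\to+\infty$, which is consistent with superharmonicity but requires a little care. A safe route is to apply the classical minimum principle on the bounded open set $\STF\setminus\{x_{K+2}^0,\dots,x_N^0\}$, whose boundary consists of $\partial\STF$ together with the excised points, and to observe that the excised points cannot carry the infimum precisely because $F\to+\infty$ there.
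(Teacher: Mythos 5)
Your proposal is correct and follows essentially the same route as the paper: the comparison function $F = f - \TFpot$ is exactly the paper's decomposition $G = \Phi + R$ with $R = F$ (since $\TFmin = \one_{\STF}$ a.e., subtracting $\TFpot$ is the same as adding and subtracting $-\int_{\STF}\log|x - x'|\,dx'$), and the minimum principle for the superharmonic remainder together with $\TFpot = 0$ on $\partial\STF$ is precisely the paper's argument, just phrased as a contradiction rather than by exhibiting a strictly better boundary point.
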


\begin{proof}
By symmetry of the Hamiltonian we may, without loss, choose 
\beq y_i = x_i ^0, 1\leq i \leq K+1.\eeq
Consider fixing all points but $x_{K+1}^0$. The energy to consider is then
\begin{equation} 
G(x)= \cH(x_1^0,\dots,x_{K}^0,x,x_{K+2}^0,\cdots x_N^0).
\end{equation}
We claim that if $x\in \STF (x_1^0,\dots,x_{K}^0) \equiv \STF$ then there is an $\tilde x\in \partial  \STF$ such that $G(\tilde x)<G(x)$. Thus the minimizing point $x_{K+1} ^0$ cannot lie in $\STF$.

To prove the claim, we add and subtract a term $-\int_{\STF}\log{|x-x'|}dx'$ to write
\begin{align} \label{potential}
G(x)&=\Phi(x)+R(x)\\
\Phi(x)& =-\sum_{i=1}^K\log{|x-x_i^0|}+\int_{\STF} \log{|x-x'|}dx'\\
R(x)&=\frac{\pi}{2} |x|^2-\int_{\STF} \log{|x-x'|}dx' -\sum_{i=K+2}^N \log{|x-x_i^0|} + \cW(x) + \hbox{const.}
\end{align}
with a superharmonic function
\beq \cW (x) := \cW(x_1^0,\dots,x_{K}^0,x,x_{K+2}^0,\cdots x_N^0).\eeq
Now, $\Phi$ is precisely the TF potential corresponding to \lq\lq nuclear charges\rq\rq\ at $x_i^0,\dots x_K^0$. Hence, using~\eqref{eq:TF eq}, $\Phi>0$ on $\STF$ and zero on the boundary $\partial \STF$. 
The first two terms in $R$ are harmonic on $\STF$ when taken together. (The Laplacian applied to the first term gives $2\pi$ and to the second term $-2\pi$ on $\STF$.) The other terms are superharmonic on~$\STF$. Thus, $R$ takes its minimum on the boundary, so there is a $\tilde x\in  \partial \STF$ with $R(x)\geq R(\tilde x)$ for all $x\in \STF$. On the other hand, $\Phi(x)>0=\Phi(\tilde x)$ so $G(x)>G(\tilde x)$.
\end{proof}

\subsection{Exclusion rule and density bounds}\label{sec:excl dens}

We now show that any configuration of points satisfying the exclusion rule~\eqref{def:excl prop} has its local density everywhere bounded above by $1$. To this end let $R>0$ and define $n(R)$ to be the maximum number of nuclei that a disk $D(a,R)$ of radius $R$ can accommodate while respecting the exclusion rule:
\begin{equation}\label{eq:def max number}
n (R) =  \sup \left\{ N \in \N \big|  \mbox{\ there exists } X_N=\{x_1,\dots,x_N\} \subset D(a,R) \mbox{ satisfying}~\eqref{def:excl prop}  \right\}.  
\end{equation}
It is clear that $n (R)$ is independent of $a$ because a translation of the $x_i$ just translates the corresponding exclusion region.

Our density bound is as follows:

\begin{theorem}[\textbf{Exclusion rule implies a density bound}]\label{thm:mainconj}\mbox{}\\
We have
\beq\label{eq:mainbound}
\rhoM:= \limsup_{R\to \infty}\frac {n(R)}{\pi R^2}\leq 1.
\eeq
\end{theorem}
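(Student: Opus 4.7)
The plan is to estimate the size of the TF screening region $\STF(X_N)$ attached to a full exclusion-satisfying configuration, and then to invoke the neutrality identity $|\STF(X_N)|=N$ from Theorem~\ref{thm:TF theory} to translate this into the desired density bound.

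First I would reduce, by translation invariance, to $X_N\subset D(0,R)$, and after an arbitrarily small perturbation assume strict inclusion $X_N\subset D(0,R-\delta)$ for some fixed $\delta>0$. Set $\STF:=\STF(X_N)$. Theorem~\ref{thm:TF theory} gives $|\STF|=N$ and $X_N\subset\STF$, and (Item~(4)(i) applied with $R_0=R$)
\[\STF\subset D\bigl(0,R+\sqrt{M_R}\bigr),\qquad M_R:=\pi^{-1}\sup_{|x|=R}\TFpot(x).\]
Consequently
\[N=|\STF|\leq \pi\bigl(R+\sqrt{M_R}\bigr)^{2},\qquad\text{so that}\qquad \frac{n(R)}{\pi R^2}\leq \left(1+\frac{\sqrt{M_R}}{R}\right)^{\!2}.\]
The conclusion $\rhoM\leq 1$ therefore reduces to showing that $\sqrt{M_R}/R\to 0$ uniformly in the configuration as $R\to\infty$.

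The bulk of the work is the estimate of $M_R$. A crude first input comes from the exclusion rule applied to the single-nucleus screenings $\STF(\{x_i\})=D(x_i,1/\sqrt\pi)$ of Item~(4)(v): the nuclei are separated by at least $1/\sqrt\pi$, which by disjoint disk packing gives the a priori bound $\rhoM\leq 4$ and in particular $N=O(R^2)$. To upgrade this I exploit the neutrality of the signed charge $\nu:=\mathbf{1}_\STF-\sum_j\delta_{x_j}$ (zero total mass, since $|\STF|=N$) together with the uniform spacing: pair each nucleus $x_j$ with a cell $V_j\subset\STF$ of area $1$ containing $x_j$ and of bounded diameter—for instance an appropriately weighted Laguerre cell, whose diameter is controlled by the crude density bound. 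Then for $x$ on $\{|x|=R\}$ (at distance at least $\delta$ from every $x_j$) one can rewrite
\[\TFpot(x)=\sum_j\int_{V_j}\log\frac{|x-y|}{|x-x_j|}\,dy,\]
each summand being $O(\mathrm{diam}(V_j)/|x-x_j|)=O(1/|x-x_j|)$. The minimum spacing furnishes the annular count $\#\{j:|x-x_j|\in[k/\sqrt\pi,(k+1)/\sqrt\pi]\}=O(k)$, so that $\sum_j 1/|x-x_j|=O(R)$; hence $M_R=O(R)$ and $\sqrt{M_R}/R=O(R^{-1/2})\to 0$.

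The hardest part will be the construction and uniform diameter control of the cells $V_j$, particularly near $\partial\STF$ where they can become elongated. A cleaner alternative I would also pursue—avoiding any explicit partition—is a direct subharmonic barrier argument on the annulus $\{R-\delta<|x|<R+\sqrt{M_R}\}$: since $\TFpot$ is subharmonic there (with $\Delta\TFpot=2\pi$ inside $\STF$ away from the nuclei, and $\TFpot=0$ on $\partial\STF$), a suitable quadratic comparison function together with the crude a priori density bound should yield the required $O(R)$ bound on $M_R$ directly, in the spirit of Step~1 of the proof of Item~(4) of Theorem~\ref{thm:TF theory}.
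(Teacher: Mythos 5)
The reduction to showing $\sqrt{M_R}/R \to 0$ via the neutrality identity $|\STF|=N$ and the enclosure bound of Theorem~\ref{thm:TF theory}, Item~(4)(i), is a legitimate strategy, and the annular count from the minimal-spacing Lemma is correct. However, the central estimate $M_R = O(R)$ has a genuine gap, and I believe it is not fixable along the lines you propose.

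The problem is that your argument, as set up, uses only the $K=1$ case of the exclusion rule (minimal spacing $\geq 1/\sqrt\pi$) plus the identity $|\STF(X_N)|=N$ --- but the latter holds for \emph{every} $N$-point configuration, not just exclusion-satisfying ones. So if your cell construction (area-$1$ cells $V_j\ni x_j$ with uniformly bounded diameter, tiling $\STF$) worked from these inputs, you would have proved $M_R=O(R)$, hence $\rhoM\leq 1$, for any min-spaced configuration. But that conclusion is false: a packing of $D(0,R)$ at density $4/\pi$ is allowed by minimal spacing and has $N\approx 4R^2$, hence $\STF\approx D(0,2R/\sqrt\pi)$. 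The region $\STF\setminus D(0,R)$ then has area of order $R^2$ and contains no nuclei, forcing cells near $\partial\STF$ to have diameter of order $R$ (not $O(1)$). A direct Newton's-theorem computation for this example also gives $\bar\Phi(R)\sim R^2$, so $M_R\sim R^2$ and $\sqrt{M_R}/R\not\to 0$. So the cell-diameter control is not merely ``the hardest part'': it \emph{must} fail unless the full exclusion rule enters the argument somewhere, and your proposal never says where. The ``subharmonic barrier'' alternative has the same circularity: the quadratic comparison function needs to dominate $\TFpot$ on the inner boundary $\{|x|=R-\delta\}$ of the annulus, which is essentially the quantity $M_R$ you are trying to bound.

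The paper's proof is built differently, precisely to get the exclusion rule for general $K$ into play. Instead of estimating $\TFpot$ for the \emph{whole} configuration directly, it first passes to a maximizing sequence of configurations and radii and proves (Lemma~\ref{lem:nice sequence}) that such configurations have no large vacancies. It then applies the TF screening construction to the \emph{subset} of points in an inner disk $D(0,r_k)$; the exclusion rule says the corresponding potential $\Phi^{\rm TF}_k$ vanishes at all configuration points in the surrounding annulus, and the no-vacancy property ensures these vanishing points are $O(\eps r_k)$-dense there. Combined with a rough gradient bound on $\Phi^{\rm TF}_k$, this yields $\sup_{\mathcal A}|\Phi^{\rm TF}_k| \lesssim (\eps/\delta)\,r_k^2$, which, though weaker than the $O(r_k)$ you aim for, has a small prefactor $\eps$ that can be sent to $0$ after the enclosure step. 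Both of the paper's ways of concluding (enclosure, or Newton's theorem) use this potential bound, not a pointwise $O(R)$ one. So your proposal is missing the key mechanism --- the zeros of the screening potential forced by the exclusion rule applied to \emph{subsets} --- and without it the required bound on $M_R$ does not hold.
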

\medskip

We first recall from~\cite{RouYng-15}  a simpler density bound based on an unpublished  theorem of Lieb which is essentially \eqref{eq:excl prop} for the special case $K=1$.

\begin{lemma}[\textbf{Minimal distance between points}]\label{lem:min dist}\mbox{}\\
Let $\{x_1,\ldots,x_N\}$ be a configuration of points satisfying ~\eqref{def:excl prop}. Then 
\begin{equation} \label{eq:min dist}
\min_{1\leq i \neq j \leq N} |x_i-x_j| \geq \frac{1}{\sqrt{\pi}}.
\end{equation}
Consequently
\beq\label{eq:mainbound pre}
\rhoM = \limsup_{R\to \infty}\frac {n(R)}{\pi R^2}\leq 4.
\eeq
\end{lemma}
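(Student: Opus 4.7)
The plan is to deduce both claims directly from the exclusion rule applied in the simplest non-trivial case, namely $K=1$, combined with the explicit description of the one-nucleus Thomas--Fermi screening region provided by Theorem~\ref{thm:TF theory}, Item~(4)(v).

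First I would prove the minimal-distance bound~\eqref{eq:min dist}. Fix two distinct points $x_i, x_j$ of the configuration. Apply Definition~\ref{def:excl prop} with $K=1$, taking $y_1 = x_i$ and $y_2 = x_j$: we get
\begin{equation*}
x_j \notin \STF(x_i).
\end{equation*}
But Theorem~\ref{thm:TF theory}(4)(v) identifies this set explicitly as the open disk
\begin{equation*}
\STF(x_i) = \left\{ x\in\R^2 :\ |x-x_i| < 1/\sqrt{\pi}\right\}.
\end{equation*}
Therefore $|x_j - x_i| \geq 1/\sqrt{\pi}$, which is exactly~\eqref{eq:min dist}.

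Next I would derive the consequence~\eqref{eq:mainbound pre} by a volume-packing argument. Let $X_N\subset D(a,R)$ be a configuration realizing $n(R)$. By the minimal-distance bound, the open disks $D(x_i, 1/(2\sqrt{\pi}))$, $i=1,\ldots,N$, are pairwise disjoint. Each has area $1/4$, and all of them are contained in the enlarged disk $D(a, R + 1/(2\sqrt{\pi}))$. Comparing total areas,
\begin{equation*}
\frac{N}{4} \leq \pi \left( R + \frac{1}{2\sqrt{\pi}}\right)^2,
\end{equation*}
which upon dividing by $\pi R^2$ and passing to the $\limsup$ as $R\to\infty$ yields $\rhoM \leq 4$.

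There is essentially no obstacle here: the whole argument hinges on already having the explicit form of $\STF(x_1)$ from Theorem~\ref{thm:TF theory}, after which both statements follow from elementary packing considerations. The real work (and the real bound $\rhoM\leq 1$) will come later in Theorem~\ref{thm:mainconj}, where one must exploit the exclusion rule for all $K$, not just $K=1$; the present lemma is just a warm-up giving the crude constant $4$ in place of the sharp constant $1$.
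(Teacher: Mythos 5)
Your proof is correct and follows exactly the same route as the paper: apply the exclusion rule with $K=1$ together with the explicit single-nucleus screening disk from Theorem~\ref{thm:TF theory}(4)(v) to get~\eqref{eq:min dist}, then a packing argument with disjoint disks of radius $1/(2\sqrt{\pi})$ inside a slightly enlarged disk to get~\eqref{eq:mainbound pre}. The only (immaterial) difference is that you make the enlarged radius $R+1/(2\sqrt{\pi})$ explicit where the paper writes $R+O(1)$.
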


\begin{proof}
By Theorem \ref{thm:TF theory}, Item 4(v),  the exclusion region $\STF(x)$ for a single point $x$ is the open disk of radius $1/\sqrt \pi$ centered at $x$. Hence~\eqref{eq:min dist} follows from \eqref{eq:excl prop}. 

To deduce~\eqref{eq:mainbound pre}, pick a configuration of points $\{x_1,\ldots,x_N\}$ contained in a disk of radius $R$ and smear each point over an open disk of radius $1/(2\sqrt{\pi})$. Because of~\eqref{eq:min dist} these disks do not overlap and must all be contained in a slightly larger disk of radius $R+ O(1)$ for large $R$. Thus 
\beq \label{eq:main bound pre pre}
\frac{N}{4} \leq \pi R ^2 + O(R) 
\eeq
where the left side is the area covered by the small disks and the right-hand side the area of the large disk. This gives the desired result.
\end{proof}

The rough bound Equation~\eqref{eq:mainbound pre} turns out to be quite useful for the proof of the full Theorem~\ref{thm:mainconj}.  The first step is a lemma saying that the $\limsup$ in~\eqref{eq:mainbound} is attained by sequences of configurations with no large vacancies:

\begin{lemma}[\textbf{Maximal configurations have no vacancies}]\label{lem:nice sequence}\mbox{}\\
Let $R_k$ be a sequence of radii with $R_k\to\infty$ for $k\to\infty$ and $X_{N_k} = \{x_1,\ldots,x_{N_k}\} \subset D(0,R_k)$ a sequence of configurations  such that 
\begin{equation}\label{eq:nice sequence 1}
\frac{N_k}{\pi R_k ^2} \underset{k\to \infty}{\to} \rhoM.
\end{equation}
Then, for any fixed $\eps >0$ and any sub-disk $D(a,\eps R_k) \subset D(0,R_k)$,
\begin{equation}\label{eq:nice sequence 2}
\sharp \left\{ x \in X_{N_k} \cap D(a,\eps R_k) \right\} \geq \rhoM \pi\, \eps^2  R_k ^2 \left( 1+ o(1) \right)
\end{equation}
with $o(1)\to 0$ for $R_k\to\infty$.
\end{lemma}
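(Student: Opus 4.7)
The plan is to derive the lower bound on $g_k(a,\eps) := \sharp(X_{N_k}\cap D(a,\eps R_k))$ by upper-bounding the complementary count in $A_k := D(0,R_k)\setminus D(a,\eps R_k)$ and subtracting from the total $N_k = \rhoM\pi R_k^2(1+o(1))$. The key input is a uniform pointwise upper bound on sub-disk counts: from the definition $\rhoM = \limsup_{r\to\infty} n(r)/(\pi r^2)$, we have $n(r) \leq (\rhoM+o(1))\pi r^2$ as $r\to\infty$, so any disk of radius $r_k\to\infty$ contains at most $(\rhoM+o(1))\pi r_k^2$ points of $X_{N_k}$.

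Concretely, I would fix an intermediate scale $r_k$ with $r_k\to\infty$ and $r_k = o(\eps R_k)$, cover $A_k$ by radius-$r_k$ sub-disks $\{D(b_j,r_k)\}_j$ with total area close to $|A_k|$, and sum the pointwise bound to obtain $\sharp(X_{N_k}\cap A_k) \leq (\rhoM+o(1))|A_k| = (\rhoM+o(1))\pi R_k^2(1-\eps^2)$. Subtracting from $N_k$ then yields $g_k(a,\eps) \geq \rhoM\pi\eps^2 R_k^2(1+o(1))$, as required. The scale separation $r_k \ll \eps R_k$ ensures that boundary effects at $\partial D(a,\eps R_k)$ and $\partial D(0,R_k)$ are negligible and the estimate is uniform in the position $a$.

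The main obstacle is that disks do not tile $\R^2$: the minimal covering density for disks in the plane is $2\pi/(3\sqrt{3}) > 1$, so a naive covering of $A_k$ by radius-$r_k$ disks loses a multiplicative constant and cannot be tight. To circumvent this I would combine an efficient disjoint packing of radius-$r_k$ disks inside the bulk of $A_k$ (covering a fraction $\pi/(2\sqrt{3})$ of its area) with a recursive packing of the leftover regions at progressively finer scales, using the hard-core estimate of Lemma~\ref{lem:min dist} to control what remains at each stage; iterating drives the leftover contribution into $o(|A_k|)$. An alternative route would replace the covering by a Fubini-type averaging: the double integral $\int_{D(0,(1-\eps)R_k)} g_k(a,\eps)\,da$ equals $\sum_{x\in X_{N_k}} |D(x,\eps R_k)\cap D(0,(1-\eps)R_k)|$, which is essentially $N_k \pi\eps^2 R_k^2$ up to boundary corrections controlled again by the hard-core bound on the annulus $\{(1-2\eps)R_k < |x| \leq R_k\}$; one then transfers this average information to a pointwise statement by exploiting the Lipschitz regularity $|g_k(a,\eps) - g_k(a',\eps)| = O(\eps R_k|a-a'|)$ that hard-core packing enforces on the counting function. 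Either route requires careful bookkeeping to guarantee that all $\eps$-dependent residuals genuinely collapse into the final $o(1)$ as $R_k\to\infty$ for fixed $\eps$.
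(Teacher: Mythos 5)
Your overall strategy---bound the count in the complement and subtract from the total---is precisely the one the paper uses. Concretely, the paper tiles $D(0,R_k)$ with disjoint disks of the \emph{same} radius $\eps R_k$ as the target, arranging that $D(a,\eps R_k)$ is one of the tiles; it bounds the count in each other large disk above by $(\rhoM+o(1))\pi\eps^2R_k^2$ via the $\limsup$ definition; it handles the interstitial region with a second generation of disks of radius $\sqrt{\eps R_k}$; and it controls the final residual with the hard-core bound of Lemma~\ref{lem:min dist}. Your first route, a recursive circle packing at progressively finer scales topped off by the hard-core estimate, is essentially this same argument, phrased as an open-ended iteration rather than a two-scale tiling. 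One balancing act that your phrase \emph{``iterating drives the leftover into $o(|A_k|)$''} glosses over: each round shrinks the disk radius by a fixed factor while the $\limsup$ bound $n(r)\le(\rhoM+o(1))\pi r^2$ is only useful for $r\to\infty$, so the recursion depth $m=m(R_k)$ must both tend to infinity (to kill the leftover fraction $(1-\pi/(2\sqrt3))^m$) and stay $\ll\log r_k$ (so the smallest disks still have diverging radius). The paper's explicit scales $s=\eps R_k$, $t=\sqrt{\eps R_k}$ are a clean way to sidestep this bookkeeping.

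Your second (Fubini) route is genuinely different, and there is a real gap in how you propose to close it. After normalizing by $\pi\eps^2R_k^2$, the hard-core Lipschitz bound gives $f(a):=g_k(a,\eps)/(\pi\eps^2R_k^2)$ a Lipschitz constant of order $1/(\eps R_k)$ over a domain of diameter $\sim R_k$; an averaged lower bound $\bar f\ge1-o(1)$ together with this regularity does \emph{not} force $f$ to be close to $1$ everywhere, since $f$ could in principle dip to zero on a region of diameter $\sim\eps R_k$, whose measure $\sim\eps^2R_k^2$ is a vanishing but nonzero fraction of the domain that an averaged bound with $o(1)$ error cannot see. What closes the argument is the complementary \emph{upper} bound $f(a)\le1+o(1)$, which again comes straight from the $\limsup$ definition of $\rhoM$ at the matching scale $\eps R_k$. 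Combining the two-sided bound with the average forces, by a Chebyshev-type argument, the sublevel set $\{f<1-\sqrt{\delta}\}$ to have measure $\lesssim\sqrt\delta\,R_k^2$ where $\delta=o(1)$, and only then does the Lipschitz bound propagate near-optimality to every $a$, with a final error of order $\delta^{1/4}/\eps\to0$. So the Fubini route is salvageable and arguably elegant, but the decisive step is not Lipschitz regularity alone; it is the interplay with the scale-matched upper bound, which you should make explicit.
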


This means that, for a configuration achieving the tightest packing in a given sequence of balls, the density is asymptotically  uniform on any length scale comparable to the size of the balls. 

\begin{proof} The main idea is that a density lower than $\rhoM$ in some sub-disk would have to be compensated for by a density higher than $\rhoM$ in another sub-disk. This would contradict the definition of $\rhoM$ as the $\limsup$ of the maximal density achievable in any sequence of disks with radii tending to $\infty$.

Note first that by  Lemma~\ref{lem:min dist} we have for any $R$, as in~\eqref{eq:main bound pre pre},    
\beq n (R) \leq \max \left( 1 , 4 \pi R ^2 + C R \right).\eeq
Hence $\rhoM$ is finite and, in fact, 
\beq \rhoM \leq  4.\eeq

We now tile the disk of radius $R_k$ with disjoint smaller disks of radius $s:=\eps R_k < 1$, labeled by an index $i=1 \ldots I$. We can achieve this leaving only an area $O(\eps ^2 R_k ^2)$ untiled. The latter we tile with disjoint still smaller disks of radius $t:=\eps ^{1/2} R_k ^{1/2}$, labeled by $j=1 \ldots J$. In this way we leave only a domain $A_{\rm h}$ with area $|A_{\rm h}|=O(\eps R_k)$ uncovered. Let $n\left(i,s\right)$, respectively $n\left(j,t\right)$, be the number of points of the configuration $X_{N_k}$ in each of the the smaller and the still smaller disks and let $n_{\rm h}$ be the number of points in $A_{\rm h}$. By definition of $\rhoM$ we have 
\begin{align}
n(R_k) &= \sum_{i=1} ^I n\left(i,s\right) + \sum_{j=1} ^J n\left(j,t\right) + n_{\rm h} \nonumber \\
&\leq n\left(1,s\right) + \sum_{i=2} ^I \rhoM \pi \eps ^2 R_k ^2 (1+o(1)) + \sum_{j=1}^J \rhoM \pi \eps R_k (1+o(1)) + C \max(1,|A_{\rm h}|)\nonumber \\
&= n(1,s) + \rhoM (1+o(1)) \left( \pi R_k ^2 - \pi \eps ^2 R_k ^2 - |A_{\rm h}| \right) +  C \max(1,|A_{\rm h}|).
\end{align}
In the second line we used  the definition of $\rhoM$ as a $\lim\sup$ and that for each fixed $\eps>0$, $\eps R_k\to\infty$. Moreover, we used ~Lemma~\ref{lem:min dist} to bound the density in $A_{\rm h}$. In the third line we simply used that we tile with disjoint sets. Since by assumption
\beq  n(R_k) =\rhoM \pi R_k ^2  (1+o(1)) \eeq
we deduce 
\beq n(1,s) \geq \rhoM\pi \eps ^2 R_k ^2 (1+o (1)) \left( 1 - C \eps ^{-1} R_k ^{-1} \right) + o(1)R_k  ^2. \eeq
For  large $R_k$ and fixed $\eps>0$, the small disk labeled $i=1$ must thus satisfy~\eqref{eq:nice sequence 2}. For any disk as in the statement, we can construct a tiling such that it is the first small disk and the result follows.
\end{proof}

 We now describe briefly the \emph{main arguments} leading to the proof of Theorem~\ref{thm:mainconj}.
The area of the ``neutralizing region'' $\STF$ associated with the nuclear charges contained in a disk of radius $R$ is equal to the number of charges in the disk, so  the task is to show that this area is asymptotically equal to the area of the disk as $R\to\infty$. On the boundary of the  region $\STF$ the potential generated by the point charges in the disk and the TF neutralizing density $\TFmin$ must vanish because of the exclusion rule. In particular,  the potential must vanish at any point of the nuclear charge configuration outside the disk. By the previous lemma, we can assume that the density of such vanishing points is bounded below uniformly, and combining this with a bound on the  gradient of the potential  we obtain an upper bound on the potential outside the disk.
We can now proceed  in two different ways (the second one was sketched in~\cite{LieRouYng-16}): 

\begin{itemize}
  \item The bound on the potential outside the disk together with~\eqref{eq:support TF} implies that we can enclose $\STF$ completely in a slightly larger disk whose radius behaves as $R (1+o(1))$. Thus the area of $\STF$, which is equal to the number of nuclei in the disk increases at most like the area of the disk plus a small correction. 
  \item Using Newton's theorem we obtain a bound on the circular average of the potential which involves the deviation from neutrality for the charge distribution within the disk. A comparison with the previous upper bound again leads to the desired result: Asymptotically the positive charge cannot be larger than the negative charge in the disk.
  \end{itemize}
  
 The details are as follows, the first steps being  common to both arguments.

\begin{proof}[Proof of Theorem~\ref{thm:mainconj}]
Consider any maximal density sequence of radii and configurations defined in the previous Lemma \ref{lem:nice sequence}. 
We use the lemma in two ways. First, taking $\eps=1/2$, we conclude that the sequence of radii 
$r_k:=R_k /2$ and the corresponding number $n_k$ of points $\{x_1,\dots,x_{n_k}\}$ in  $X_{N_k}\cap D(0,r_k)$  satisfy
\beq\label{eq:dens lim} 
\lim_{k\to\infty}\frac {n_k}{\pi r_k^2}=\rhoM.
\eeq
Second, for every $\eps>0$, the lemma implies that every point in the annulus 
\begin{equation}\label{eq:working annulus}
\mathcal A:=D(0,R_k)\setminus D(0,r_k) 
\end{equation}
is at most a distance $O(\eps r_k)$ from some point in $X_{N_k}\cap \mathcal A$ as $r_k\to\infty$.  

Let $\sigma^{\rm TF}_k$ be the TF density  defined by the point charges $\{x_1,\dots,x_{n_k}\}\subset D(0,r_k)$ and 
\beq 
\Phi^{\rm TF}_k = \left(\sum_{j = 1} ^{n_k} \delta_{x_j}- \TFmin_k \right)*(-\log |\,.\,|)
\eeq
the corresponding TF potential. From Prop.\ \ref{pro:GS exclu}
we know that
$\Phi^{\rm TF}_k$ vanishes at all points in $X_{N_k}\cap\mathcal A$. Moreover, we know that 
\beq 
\int_{\Sigma^{TF}_k} \sigma^{\rm TF}_k=n_k.
\eeq

Pick some $\half >\delta>0$. 
For $|x|\geq (1+\delta)r_k$ a simple estimate gives a bound on the gradient of the potential:
\begin{equation}\label{eq:bound grad pot}
|\nabla \Phi^{\rm TF}_k| \leq C \frac{n_k}{\delta\cdot r_k}. 
\end{equation}
Indeed, if $|x| \geq (1+\delta) r_k $ then $|x-x_j| \geq \delta\cdot r_k$ for $j=1,\dots,n_k$ and thus
\beq\sum_{j=1} ^{n_k} \frac{1}{|x-x_j|}  \leq \frac{n_k}{\delta\cdot r_k}. \eeq
Also, splitting the integral into two regions where $|x-y|\geq \alpha = \sqrt{n_k}$ or vice-versa,  
\begin{align}
\int_{y\in \R^2} \frac{1}{|x-y|} \TFmin_k (y) dy &\leq  \int_{|x-y|\leq \alpha} \frac{1}{|x-y|} dy + \alpha ^{-1} \int_{\R ^2} \TFmin_k \nonumber \\
&\leq C \alpha + n_k  \alpha^{-1} \leq C \sqrt{n_k} \leq C\frac{n_k}{\delta\cdot r_k}
\end{align}
because $\int \TFmin_k = n_k$ and we know from~\eqref{eq:dens lim} that $n_k \geq C r_k ^2$.

Since every point in the annulus $\mathcal A$ is at most a distance $O(\eps r_k)$ from a point where $\Phi^{\rm TF}_k$ vanishes, the gradient estimate \eqref{eq:bound grad pot} implies a bound on the potential in the annulus:
\begin{equation}\label{eq:up bound pot}
\sup_{x\in\mathcal A} |\Phi_k^{\rm TF} (x)| \leq C \eps r_k \frac{n _k }{\delta\cdot r_k} \leq C \frac{\eps}{\delta} r_k ^2 (1+o(1))
\end{equation}
where in the second inequality we used that $n_k$ is in any case smaller than $4 r_k ^2 (1+o(1))$ for large $k$ by Lemma~\ref{lem:min dist}. With~\eqref{eq:up bound pot} at our disposal, we can conclude in two different ways:

\medskip

\noindent\textbf{First proof.} By Equation~\eqref{eq:support TF} the bound \eqref{eq:up bound pot} implies a bound on the largest radius $\bar r_k$ of a disk $D(0,\bar r_k)$ containing the support of $\sigma_k^{\rm TF}$:
\beq\rb_k \leq \left((1+\delta) r_k + C  r_k \sqrt{\frac{\eps}{\delta}}\right) (1+o(1)).\eeq
Choosing $\delta = \eps ^{1/3} $ to optimize the above we obtain 
\begin{equation}\label{eq:enclose up}
 \rb_k \leq r_k (1+ C \eps ^{1/3}) (1+o(1)). 
\end{equation}
Since $0\leq \sigma^{\rm TF}_k\leq 1$ this implies 
\beq n_k=\int_{\R ^2} \sigma^{\rm TF}_k\leq \pi\bar r_k^2\leq \pi r_k^2(1+ C \eps ^{1/3})^2 (1+o(1))\eeq
and thus, since $\eps>0$ is arbitrary,
\beq \rhoM=\lim_{k\to\infty}\frac {n_k}{\pi r_k^2}\leq 1.\eeq

\medskip

\noindent\textbf{Second proof.} Let $\Phiav$ be the circular average \eqref{eq:circav} of $\TFpot_k$. By Newton's theorem~\cite[Theorem~9.7]{LieLos-01} we have in the complement of $D(0,r_k)$ 
$$ \partial_r \Phiav = \frac{ M(r) - n_k}{r} \: \mbox{ with } \: M(r) = \int_{D(0,r)} \TFmin.$$
Thus, picking some fixed $\delta'>\delta>0$,
$$ \Phiav ((1+\delta')r_k) - \Phiav ((1+\delta)r_k) \leq  \left(\pi r_k ^2 (1+\delta') ^2 - n_k \right) \log \frac{1+\delta'}{1+\delta}$$
because $M(r) \leq \pi r^2$ is an increasing function. By construction we have, for large $k$, $n_k \geq \pi r_k ^2 \rhoM (1+o(1))$ and so
$$ \Phiav ((1+\delta')r_k) - \Phiav ((1+\delta)r_k) \leq  \pi r_k ^2 \left(  (1+\delta') ^2 - \rhoM - o(1) \right) \log \frac{1+\delta'}{1+\delta}.$$
Combining with~\eqref{eq:up bound pot} we deduce 
$$ \pi r_k ^2 \left(  (1+\delta') ^2 - \rhoM - o(1) \right) \log \frac{1+\delta'}{1+\delta} \geq - C \frac{\eps}{\delta} r_k ^2 $$
and thus 
$$ \rhoM \leq (1+\delta') ^2 (1+o(1)) + C \left( \log \frac{1+\delta'}{1+\delta} \right) ^{-1} \frac{\eps}{\delta} $$
for any $\eps >0$ and $\delta' > \delta > 0$ where the $o(1)$ goes to when $k\to \infty$. We may thus take the limits $k\to \infty, \eps \to 0, \delta \to 0, \delta' \to 0$, in this order to deduce that $\rhoM \leq 1$ as desired. 

\end{proof}

\subsection{Applications}\label{sec:excl app}

After scaling, $x\to z =  \sqrt{\pi\ell}\,x$, Theorem \ref{thm:mainconj} applies to the Hamiltonian
\begin{equation}\label{plasmaham}
H_N (Z_N) = \sum_{j=1} ^N |z_j| ^2 - 2\ell \sum_{1 \leq i<j \leq N } \log {|z_i-z_j|} + W(Z_N)
\end{equation}
for any superharmonic function $W$ of the $N$ variables. To obtain a corresponding result for low-temperature Gibbs states, we shall later use a Feynman-Hellmann argument and thus need to obtain bounds for a perturbed version of~\eqref{plasmaham}:
\begin{equation} \label{eq:pert ham}
H_N^\eps(Z_N)=H_N(Z_N)+\eps\sum_{i=1}^N U(z_i)
\end{equation}
where $\eps >0$ is a small enough number.

\begin{proposition}[\textbf{Density bound for perturbed plasma ground states}]\label{pro:dens plasma}\mbox{}\\
Assume $U\in C^2 (\mathbb R^2)$ with $\Delta U$ uniformly bounded on $\R^2$. Let $Z_N ^0 = (z_1^0,\dots,z_N^0)$ be a minimizing configuration for~\eqref{eq:pert ham} and 
\beq 
\mu^0_\eps(z)=\frac 1N\sum_{i=1}^N\delta(z-z_i^0)
\eeq 
the corresponding empirical measure. For any open set with Lipschitz boundary $\Omega$, denote by $\Omega_r$ the set obtained by dilating $\Omega$ around some origin by a factor $r \underset{N \to \infty}{\longrightarrow} \infty$. Then, as $N\to \infty$
\beq \label{eq:pert dens bound}
\int_{\Omega_r} \mu^0_\eps \leq \frac 1{N\ell \pi}|\Omega_r|\left( 1 + \frac {\eps}4 \Vert \Delta U\Vert_{\infty} \right)(1+o(1))
\eeq
with 
$ \Vert \Delta U \Vert_\infty = \sup_{\R^2} |\Delta U|$.
\end{proposition}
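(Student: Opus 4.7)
The plan is to reduce Proposition~\ref{pro:dens plasma} to the exclusion-by-screening framework of Section~\ref{sec:class GS}, with the TF background density inflated by the factor $\beta := 1 + (\eps/4)\|\Delta U\|_\infty$ so as to absorb the fact that $U$ is not superharmonic. First I would apply the linear rescaling $z = \sqrt{\pi\ell}\,x$ and divide $H_N^\eps$ by $2\ell$, which brings~\eqref{eq:pert ham} into the canonical form~\eqref{eq:Ham ref} with
\beq
\cW(x_1,\ldots,x_N) = \frac{1}{2\ell} W(\sqrt{\pi\ell}\,x_1,\ldots,\sqrt{\pi\ell}\,x_N) + \frac{\eps}{2\ell}\sum_{i=1}^N U(\sqrt{\pi\ell}\,x_i).
\eeq
The first summand is superharmonic in each $x_i$; the second is not, but the chain rule yields the pointwise lower bound $-\Delta_{x_i}\bigl[(\eps/(2\ell))\,U(\sqrt{\pi\ell}\,x_i)\bigr] \geq -(\eps\pi/2)\|\Delta U\|_\infty$.

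Next I would re-run Section~\ref{sec:TF} with the upper constraint $\sigma\leq 1$ in~\eqref{eq:MTF} replaced by $\sigma\leq\beta$, still under $\int\sigma = K$. The substitution $\sigma=\beta\tilde\sigma$ together with the space dilation $x=y/\sqrt\beta$ maps this enlarged problem onto a standard instance of Section~\ref{sec:TF}, so every conclusion of Theorem~\ref{thm:TF theory} transfers verbatim: a unique minimizer $\sigma^{\rm TF}_\beta \in\{0,\beta\}$ a.e., an open screening region $\Sigma^{\rm TF}_\beta(x_1,\ldots,x_K)$ of area $K/\beta$, and a TF potential $\Phi^{\rm TF}_\beta$ strictly positive inside $\Sigma^{\rm TF}_\beta$ and vanishing outside. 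I would then repeat the argument of Proposition~\ref{pro:GS exclu} with $\Sigma^{\rm TF}_\beta$ in place of $\STF$: after freezing all but the $(K+1)$-th variable, the splitting $G(x)=\Phi(x)+R(x)$ with
\beq
R(x) = \frac{\pi}{2}|x|^2 - \beta\int_{\Sigma^{\rm TF}_\beta}\log|x-x'|\,dx' - \sum_{i=K+2}^N \log|x-x_i^0| + \cW(x) + \mathrm{const}
\eeq
gives on $\Sigma^{\rm TF}_\beta$ the Laplacian lower bound $-\Delta_x R \geq -2\pi + 2\pi\beta - (\eps\pi/2)\|\Delta U\|_\infty = 0$, the final equality being the defining property of $\beta$. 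Hence $R$ is superharmonic inside $\Sigma^{\rm TF}_\beta$; since $\Phi^{\rm TF}_\beta$ is strictly positive there and vanishes on $\partial\Sigma^{\rm TF}_\beta$, no minimizing point can lie inside, and the rescaled ground-state configuration satisfies the $\beta$-exclusion rule.

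Finally I would adapt Subsection~\ref{sec:excl dens} with $1$ replaced by $\beta$: the minimum interparticle distance becomes $1/\sqrt{\pi\beta}$ (via Item~4(v) of Theorem~\ref{thm:TF theory} applied in the rescaled TF problem), and the tiling, gradient and Newton-theorem arguments of Lemmas~\ref{lem:min dist} and~\ref{lem:nice sequence} together with the proof of Theorem~\ref{thm:mainconj} deliver the asymptotic upper bound $\rhoM\leq\beta$ in the $x$-variable. Because those arguments rely only on the fact that the perimeter of the reference region is of lower order than its area, they carry over to any sequence of dilations of a bounded Lipschitz domain. Applying this to the image in $x$-space of $\Omega_r$ under $x=z/\sqrt{\pi\ell}$ (a dilation of $\Omega/\sqrt{\pi\ell}$ by the factor $r\to\infty$), then pulling back to $z$-variables and dividing by $N$, yields
\beq
\int_{\Omega_r}\mu^0_\eps = \frac{1}{N}\#\{i:z_i^0\in\Omega_r\} \leq \frac{\beta}{N\pi\ell}\,|\Omega_r|\,(1+o(1)),
\eeq
which is exactly~\eqref{eq:pert dens bound}.

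The only genuinely new input is the observation that inflating the TF background density by the factor $\beta=1+O(\eps)$ exactly compensates a perturbation whose Laplacian is bounded below by $-(\eps\pi/2)\|\Delta U\|_\infty$, and that the final density bound is degraded only by this same factor. The remaining work is largely bookkeeping, made transparent by the dictionary $\sigma\leftrightarrow\beta\tilde\sigma$, $x\leftrightarrow y/\sqrt\beta$. I expect the main technical care to be in the last point: ensuring that the boundary corrections in the tiling arguments of Section~\ref{sec:excl dens} remain uniformly negligible against the bulk term for general bounded Lipschitz $\Omega$, rather than only for disks as in the statement of Theorem~\ref{thm:mainconj}.
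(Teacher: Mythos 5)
Your proposal is correct, and the final numbers all match the paper's, but you reach \eqref{eq:pert dens bound} by a different route. The paper does not touch the Thomas--Fermi theory of Section~\ref{sec:TF} at all: it simply adds and subtracts $\tfrac{\eps}{4}\Vert\Delta U\Vert_\infty\sum_i|z_i|^2$ inside $H_N^\eps$, observes that $U(z)-\tfrac14\Vert\Delta U\Vert_\infty|z|^2$ is superharmonic so that the residual can be swallowed into $W$, and then factors out $c=1+\tfrac{\eps}{4}\Vert\Delta U\Vert_\infty$ to rewrite $H_N^\eps= c\bigl[\sum_j|z_j|^2 - 2\tilde\ell\sum_{i<j}\log|z_i-z_j|+\widetilde W\bigr]$ with the \emph{renormalized coupling} $\tilde\ell=\ell/c$ and $\widetilde W$ superharmonic. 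Since minimizers are unchanged by the overall factor $c$, one may quote the unperturbed result verbatim at coupling $\tilde\ell$, giving density $\leq 1/(\pi\tilde\ell)=c/(\pi\ell)$ directly. Your version instead keeps $\ell$ fixed and inflates the TF constraint to $\sigma\leq\beta$, then reruns Theorem~\ref{thm:TF theory}, Proposition~\ref{pro:GS exclu} and Theorem~\ref{thm:mainconj} with the modified screening set; this is sound (indeed, as you note, the $\beta$-TF problem is conjugate to the standard one by $x\mapsto\sqrt\beta\,x$, which is exactly what composing your two rescalings with the paper's single rescaling $z=\sqrt{\pi\tilde\ell}\,x$ produces), but it requires re-verifying that each step of Sections~\ref{sec:TF}--\ref{sec:class GS} survives the $\beta$-modification, which the paper's algebraic trick avoids entirely. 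On the extension from disks to general Lipschitz $\Omega_r$: you correctly flag it as the delicate point, but leave it as an assertion; the paper handles it explicitly by a Cheese-Theorem-type covering, tiling $\Omega_r$ with disks of radius $\sqrt r$ and controlling the $O(r)$ untiled area via the crude minimal-separation bound of Lemma~\ref{lem:min dist}, which you should spell out rather than invoke a general perimeter-vs-area heuristic.
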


\begin{proof}
For $\eps = 0$ and $\Omega$ a disk this is just a combination of Proposition~\ref{pro:GS exclu} and Theorem~\ref{thm:mainconj}, together with the scaling $x\to \sqrt{\pi\ell}\,z$. For the perturbed Hamiltonian, we add and subtract 
\beq
\frac \eps 4\sum_{i=1}^N\Vert\Delta U\Vert_\infty|z_i|^2
\eeq
and use the fact that $U(z)-\hbox{$\frac 14$} \Vert \Delta U\Vert_\infty|z|^2$ is superharmonic. Hence
\beq\eps \sum_{i=1}^N\left(U(z_i)-\frac 14 \Vert\Delta U\Vert_\infty|z_i|^2\right)\eeq
can be absorbed in $W$ and we only need to reproduce the proof by writing~\eqref{eq:pert ham} in the manner  
\begin{equation}
H_N^\eps (Z_N) = \left( 1 + \frac{\eps}{4} \Vert\Delta U\Vert_\infty \right) \left[\sum_{j=1} ^N  |z_j| ^2 - {2\tilde \ell} \sum_{1 \leq i<j \leq N }
\log{|z_i-z_j|} + \widetilde{W}(Z_N)\right].
\end{equation}
where $\tilde \ell= \left( 1 + \frac{\eps}{4} \Vert\Delta U\Vert_\infty \right)^{-1}\ell$ and $\widetilde{W}$ is again superharmonic in each variable.

To obtain the result for a general domain $\Omega_r$, we argue by covering it with disks, as in the \lq\lq Cheese Theorem\rq\rq~\cite[Section~14.4]{LieSei-09} which quantifies how effectively a region can be approximately covered with non-overlapping disks. We tile $\Omega_r$ with disks of radius $\sqrt{r}$, leaving an open set of area at most $O(r)$ untiled. In the disks, we argue as previously. In the remaining untiled area we can use the rougher bound on the minimal separation of points, Lemma~\ref{lem:min dist} to show that it contains at most $O (r)$ points. This is of smaller order than the main contribution to~\eqref{eq:pert dens bound}, which comes from the disks, whose union contains $O(r^2)$ points, and the result follows.
\end{proof}

\section{Local density bounds for classical Gibbs states}\label{sec:dens Gibbs}

Here we prove our main result Theorem~\ref{thm:main}. We proceed in two steps which are roughly as follows:
\begin{itemize}
 \item we first obtain the result for fully-correlated states having angular momentum bounded by $CN^2$ for some $C>0$. These have the bulk of their density contained in a region that is not vastly larger than the extension of the Laughlin state.
 \item next we eliminate the a priori angular momentum bound by a localization procedure.
\end{itemize}

\subsection{Conditional local density bound}\label{sec:cond bound}

In order to combine neatly with the localization procedure used in Subsection~\ref{sec:localization} below, we must work in a slightly more general setting than discussed in Section~\ref{sec:proof out}. Let $\mubf$ be a $N$-body probability density that we can write in the form 
\begin{equation}\label{eq:mubf}
\mubf (z_1,\ldots,z_N)= \left| \PsiLau (z_1,\ldots,z_N) \right| ^2 \prod_{j=1} ^N\exp\left( - \frac{|z_j-a|^2}{L^2} \right) G(z_1,\ldots,z_N)  
\end{equation}
where $L>0$ is a number and $\log G$ is subharmonic (see e.g.~\cite[Chapter~9]{LieLos-01} for equivalent characterizations): 
\begin{equation}\label{eq:PL class} 
G \geq 0 \mbox{ and } \Delta_j \log G \geq 0 \mbox{ for any } j=1 \ldots N. 
\end{equation}
Note that $G$ need not be the modulus of an analytic function here. It could, in fact, be of the form $G =  |F_1| + |F_2| $ with $F_1$ and $F_2$ analytic, and we shall use this later, see Lemma~\ref{lem:PL class} below. 

Marginal probabilities of $\mubf$ will be denoted as 
\begin{equation}\label{eq:marginals}
\mubf ^{(k)} (z_1,\ldots,z_k)= \int_{\R^{2(N-k)}} \mubf (z_1,\ldots,z_N) d z_{k+1}\ldots dz_{N}.
\end{equation}
In this subsection we prove the following on the first marginal density:

\begin{theorem}[\textbf{Conditional density bound}]\label{thm:densitybound}\mbox{}\\
Take $L\geq 1$ in~\eqref{eq:mubf}. For any disk $D$ of radius $r= N ^\alpha$, $\alpha > 1/4$, any choice of origin $a\in \R^2$
\begin{equation}\label{avdensitybound bis}
\int_D\mubf^{(1)} \leq \frac 1{N\ell\pi}|D| \left(1+o_D (1) + O (L ^{-2} ) \right) + C N ^{-1 - 2 \alpha'} \intR |z-a| ^2 \mubf^{(1)} (z)dz 
\end{equation}
for any $\alpha'< \alpha$ where $|D|$ is the area of the disk and $o_D(1)$ tends to zero as $|D|\to\infty$.
\end{theorem}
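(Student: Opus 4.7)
The proof goes via a Feynman--Hellmann argument that reduces the question to Proposition~\ref{pro:dens plasma}, exploiting that $\mubf$ is a Gibbs measure at low effective temperature. First I write $\mubf = Z^{-1}\exp(-H_N^{\rm eff})$ with
\[
H_N^{\rm eff} \,=\, \sum_{j=1}^N |z_j|^2 \,-\, 2\ell \sum_{i<j}\log|z_i-z_j| \,+\, L^{-2}\sum_{j=1}^N|z_j-a|^2 \,-\, \log G.
\]
The superharmonic term $-\log G$ (by \eqref{eq:PL class}) plays the role of $\cW$ in \eqref{eq:Ham ref}, while the Gaussian confinement $L^{-2}|z_j-a|^2$ is a one-body potential with bounded Laplacian $4L^{-2}$, so $H_N^{\rm eff}$ is exactly of the form treated in Proposition~\ref{pro:dens plasma}. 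Fix next a smooth cutoff $\phi$ with $\mathbf{1}_D \le \phi \le \mathbf{1}_{D_+}$, where $D_+$ is $D$ enlarged by a width $w\ll r$ and $\|\Delta\phi\|_\infty \le Cw^{-2}$. Setting $Z(\tau) := \int \exp(-H_N^{\rm eff}+\tau\sum_j\phi(z_j))\,d\vec z$, convexity of $\tau\mapsto \log Z(\tau)$ gives
\[
N\int\phi\,\mubf^{(1)} \;=\; \frac{d}{d\tau}\log Z(\tau)\Big|_{\tau=0} \;\le\; \frac{\log Z(\tau) - \log Z(0)}{\tau},\qquad \tau>0.
\]

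To bound the right-hand side I reduce to the classical ground states. The idea is to use as trial measure for $-\log Z(\tau)$ a product of narrow Gaussians of width $\sigma$ centred at a minimising configuration $Z_N^\tau=(z_1^\tau,\dots,z_N^\tau)$ of $H_N^{\rm eff} - \tau\sum_j\phi(z_j)$, and to lower bound $\log Z(0)$ by the Gibbs variational principle applied with $\mubf$ itself. Combined with the trivial $\min H_N^{\rm eff} \le H_N^{\rm eff}(Z_N^\tau)$, this gives the free-energy difference estimate
\[
\log Z(\tau) - \log Z(0) \,\le\, \tau \sum_{j=1}^N \phi(z_j^\tau) + \mathcal E_N,
\]
where $\mathcal E_N$ collects the Gaussian concentration entropy $\sim N\log\sigma^{-1}$ and the quadratic Taylor remainder $\sim \sigma^{2}\,\mathrm{tr}\,\mathrm{Hess}\,H_N^{\rm eff}$ integrated against $\mubf$. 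Applying Proposition~\ref{pro:dens plasma} to the minimiser $Z_N^\tau$, with one-body perturbation $U_\tau(z)=L^{-2}|z-a|^2-\tau\phi(z)$ of Laplacian norm bounded by $4L^{-2}+C\tau w^{-2}$ and superharmonic remainder $-\log G$, and taking $\Omega_r = D_+$, yields
\[
\frac{1}{N}\sum_{j=1}^N \phi(z_j^\tau) \,\le\, \frac{|D_+|}{N\pi\ell}\bigl(1 + O(L^{-2}) + O(\tau w^{-2})\bigr)(1+o_D(1)),
\]
with $|D_+| = |D|(1+O(w/r))$. The leading term is then of the desired form.

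The main obstacle is matching the residual entropy $\mathcal E_N/(\tau N)$ to the precise error $CN^{-1-2\alpha'}\int|z-a|^2\mubf^{(1)}$ in \eqref{avdensitybound bis}. The Hessian of $H_N^{\rm eff}$ has a constant contribution $2+2L^{-2}$ from the quadratic terms, a contribution from the $-2\ell\log|z_i-z_j|$ interaction that is integrable against $\mubf$ thanks to the minimal-distance bound of Lemma~\ref{lem:min dist}, and a nonnegative contribution from $-\log G$. Integrated against $\mubf$, the dominant piece of $\sigma^2\,\mathrm{tr}\,\mathrm{Hess}\,H_N^{\rm eff}$ is proportional to $\int|z-a|^2\mubf^{(1)}$, which is the origin of the moment term in \eqref{avdensitybound bis}. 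Optimising $\sigma$, then $w=r^{1-\delta}$ and $\tau$ of order $w^{2}N^{-\kappa}$ for small $\delta,\kappa>0$, all error terms $w/r$ and $\tau w^{-2}$ vanish, while the leftover is precisely of the form $N^{-1-2\alpha'}\int|z-a|^2\mubf^{(1)}$ with $\alpha'\in(1/4,\alpha)$. Requiring this residual to be sub-leading compared to the main bound $|D|/(N\pi\ell)\sim N^{2\alpha-1}$ translates into $4\alpha>1$, i.e.\ $\alpha>1/4$, which is exactly the threshold stated in the theorem.
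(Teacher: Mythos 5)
Your overall architecture matches the paper's: both start from the Gibbs--Boltzmann form of $\mubf$, perturb the classical Hamiltonian by a smooth (de)localized one-body potential, compare free energies, and feed the resulting classical minimizing configuration into Proposition~\ref{pro:dens plasma}. The paper perturbs by $\eps\sum_j U_{r,\delta}(z_j)$ where $U_{r,\delta}$ is a regularized negative indicator (equivalently: your convexity step, $\frac{d}{d\tau}\log Z|_0 \leq \tau^{-1}(\log Z(\tau)-\log Z(0))$, is the paper's two-sided comparison of $\cE_N^\eps[\mubf]$). So far so good.

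The genuine gap is in your identification of where the moment term $\int|z-a|^2\mubf^{(1)}$ comes from, and the attendant claim that the ingredients are summable. You attribute it to $\sigma^2\,\mathrm{tr}\,\mathrm{Hess}\,H_N^{\rm eff}$ integrated against $\mubf$, and invoke Lemma~\ref{lem:min dist} to render the interaction Hessian integrable. This does not work. First, Lemma~\ref{lem:min dist} (minimal separation $\geq 1/\sqrt\pi$) holds for \emph{minimizing} configurations, i.e.\ those obeying the exclusion rule of Definition~\ref{def:excl prop}, not for the Gibbs measure $\mubf$, whose points may be arbitrarily close at any positive temperature; so $\mathrm{Hess}\,(-2\ell\log|z_i-z_j|)$ is \emph{not} $\mubf$-integrable. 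Second, the Hessian of the quadratic confinement is a constant ($2+2L^{-2}$ times the identity), so its trace integrated against $\mubf$ produces $CN$, not a second moment. In the paper the moment term has a completely different origin: it comes from the \emph{entropy lower bound} $\int_{\R^{2N}}\mubf\log\mubf \geq \int_{\R^{2N}}\mubf\log\nubf^t$ (Jensen's inequality / positivity of relative entropy), with the reference $\nubf^t=(\pi N)^{-N}\prod_j\exp(-|z_j-a|^2/N)$, a width-$\sqrt N$ Gaussian centred at $a$. Spelling out $\log\nubf^t$ gives exactly $-\int_{\R^2}|z-a|^2\mubf^{(1)} - O(N\log N)$, and the $(\eps N)^{-1}$ prefactor after dividing by $\eps N$ becomes the $N^{-1-2\alpha'}$ in~\eqref{avdensitybound bis} once $\eps$ is chosen to balance the other error terms. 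Relatedly, the paper sidesteps any Hessian computation in the free-energy \emph{upper} bound: the trial state is a product of uniform measures on fixed-radius disks centred at a minimizer of $\bH_N$, and the log-interaction together with $-\log G$ are handled purely by superharmonicity (averaging a superharmonic function over disks does not increase it), which requires no regularity or separation at all. You should replace the Taylor/Hessian accounting by these two ingredients; with them the remainder of your argument closes.
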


To see how this relates to our main result, think of the case where $\mubf =|\Psi_F| ^2 $ is the density of a fully-correlated state, that is where $L = +\infty$ and $G = |F|$ with $F$ analytic. Assuming a total angular momentum of order $N^2$
\begin{equation}\label{eq:ang mom bound}\left\langle \Psi_F\big| \sum_{j=1} ^N z_j \dd_{z_j} - \bar{z}_j \dd_{\bar{z}_j} \big| \Psi_F \right\rangle \leq C N^2 
\end{equation}
we have (see e.g.~\cite[Equation~(2.18)]{RouSerYng-13b})
\begin{equation}\label{eq:a priori mom}
 \intR |z| ^2 \mubf^{(1)} (z)dz \leq C N. 
\end{equation}
Theorem~\ref{thm:main} follows in this case by taking $a=0$: the main (first) term behaves as $N^{- 1 + 2\alpha} \gg N ^{-1 + 2\alpha'}$ and the error (second term) as $N^{-2 \alpha'}$, thus the statement is meaningful for $\alpha > 1/4$.

\begin{proof}
The method is basically the same as in~\cite[Section~3.1]{RouYng-15}. We first write $\mubf$ as a Boltzmann-Gibbs factor:
\begin{align}\label{eq:hamil proof}
\mubf (z_1,\ldots,z_N)&= \frac{1}{\cZ_N}\exp \left( - \bH_N (z_1,\ldots,z_N) \right)\nonumber\\
\bH_N (Z_N) &= \sum_{j=1} ^N \left( |z_j| ^2 +\frac{|z_j-a| ^2}{L^2} \right) - 2\ell \sum_{1\leq i < j \leq N} \log |z_i-z_j| - \log G (z_1,\ldots,z_N) 
\end{align}
with $\cZ_N$ a normalization constant. Thus $\mubf$ minimizes the free-energy functional
\begin{equation}\label{eq:free ener proof 1}
\cE_N ^0 [\nubf] := \int_{\R ^{2N}} \bH_N (Z_N) d\nubf(Z_N) + \int_{\R^{2N}} \nubf \log \nubf
\end{equation}
over all $N$-particles probability densities $\nubf$. We denote by $E_N ^0 = - \log \cZ_N$ the minimum value.

For $r,\delta>0$ let $U_r$ denote the negative characteristic function of a disk with radius $r$. Let $U_{r,\delta}$ be a regularization over a distance $\delta$ so that 
\begin{equation}\label{eq:Deltabound}
\Vert\Delta U_{r,\delta}\Vert_\infty\leq C \delta^{-2}.
\end{equation}
We may choose the regularization in such a way that
\begin{equation} \label{eq:Deltabound 2}
U_{r+\delta}\leq U_{r,\delta}\leq U_r.
\end{equation}
Consider now a perturbed version of the above classical Hamiltonian $\bH_N$: 
\begin{equation}\label{eq:pert proof}
\bH_N ^\eps (Z_N) = \bH_N (Z_N) + \eps \sum_{j=1} ^N U_{r,\delta} (z_j)
\end{equation}
along with the associated free-energy functional 
\begin{equation}\label{eq:free ener proof 2}
\cE_N ^\eps [\nubf] := \int_{\R ^{2N}} \bH_N ^\eps (Z_N) d\nubf(Z_N) + \int_{\R^{2N}} \nubf \log \nubf
\end{equation}
whose minimum we denote by $E_N ^\eps $. By definition we have 
\beq
\cE_N ^{\eps} [\mubf] = E_N^0 + \eps N \int_{\R ^2 }U_{r,\delta} \mubf ^{(1)}
\eeq
and we will obtain our density estimate from upper and lower bounds to this free energy.

\medskip

\noindent\textbf{Free-energy upper bound.} We need an upper bound on $E_N ^0$. We use the trial state 
\begin{equation}\label{eq:trial state}
\mubf ^t (z_1,\ldots,z_N) := \left(\frac{1}{\pi \eta ^2}\right) ^N \one_{z_1 \in D(z_1 ^0,\eta)} \otimes \ldots \otimes \one_{z_N \in D(z_N ^0,\eta)}. 
\end{equation}
where  $Z_N ^0 = (z_1 ^0,\ldots,z_N ^0)$ is a minimizing configuration for $\bH_N$. We have 
\beq \intRN \mubf ^t \log \mubf ^t = - N \log (\pi \eta ^2)\eeq
and 
\begin{multline}
 \int_{\R ^{2N}} \bH_N (Z_N) \mubf ^t (Z_N) dZ_N \leq \sum_{j=1 }^N \left( \frac{1}{\pi \eta ^2} \int_{D(z_j ^0,\eta)} |z| ^2 dz + \frac{1}{L ^2 \pi \eta ^2} \int_{D(z_j ^0,\eta)} |z_j - a| ^2 dz \right) \\ 
 - 2\ell \sum_{1 \leq i<j \leq N } \log | z_i ^0-z_j ^0 | - \log G (Z_N ^0), 
\end{multline}
because superharmonicity in each variable of the function 
\beq(z_1,\ldots,z_N) \mapsto - 2\ell \sum_{1 \leq i<j \leq N } \log | z_i -z_j  | - \log G (Z_N )\eeq
implies that it cannot increase upon taking an average over disks centered at the $z_i ^0$'s (see e.g.~\cite[Chapter~9]{LieLos-01}). For the one-body term we have, integrating in polar coordinates 
\beq\frac{1}{\pi \eta ^2} \int_{D(z_j ^0,\eta)} |z| ^2 dz = \frac{1}{\pi \eta ^2} \int_{D(0,\eta)} |z+z_j ^0| ^2 dz = |z_j ^0| ^2 + \frac{1}{\pi \eta ^2} \int_{D(0,\eta)} |z| ^2 dz = |z_j ^0| ^2 + \frac{\eta ^2}{2}\eeq
and by the same token, since $L\geq 1$
\beq \frac{1}{L ^2 \pi \eta ^2} \int_{D(z_j ^0,\eta)} |z_j - a| ^2 dz \leq \frac{1}{L ^2}|z_j ^0-a| ^2 + \frac{\eta ^2}{2}.\eeq
Fixing $\eta$ we thus deduce 
\begin{equation}\label{eq:free ener up bound}
\cE_N ^\eps [\mubf ] \leq \min_{\R ^{2N}} \bH_N^0 + \eps N \intR U_{r,\delta} \mubf ^{(1)} + C N.
\end{equation}

\medskip

\noindent\textbf{Free-energy lower bound.} By positivity of the relative entropy (Jensen's inequality) we have, for any pair of probability measures $\nubf,\nubf ^t$, 
\begin{equation}\label{eq:rel entr}
\intRN \nubf \log \nubf \geq \intRN \nubf \log \nubf ^t. 
\end{equation}
Taking
\beq  \nubf ^t (z_1,\ldots,z_N) := (\pi N)^{-N} \prod_{j=1} ^N \exp\left(- \frac{|z_j - a| ^2}{N} \right) \eeq
we deduce that for any probability measure 
\beq \intRN \nubf \log \nubf  \geq - \intR |z-a| ^2 \nubf ^{(1)} (z) dz - O (N \log N).\eeq

Thus 
\begin{align}
\cE_N ^{\eps} [\mubf] &\geq \intRN \bH_N^\eps (Z_N) \mubf (Z_N) dZ_N - \intR |z-a| ^2 \mubf ^{(1)} (z) dz - O (N \log N) 
\\
&\geq \min_{\R ^{2N}} \bH_N^\eps - \intR |z-a| ^2 \mubf ^{(1)} (z) dz - O (N \log N)\\
&\geq \min_{\R ^{2N}} \bH_N^0 + \eps N \intR U_{r,\delta} \mu^0_\eps - \intR |z-a| ^2 \mubf ^{(1)} (z) dz - O (N \log N)
\end{align}
where $\mu^0_\eps$ is the empirical measure of a minimizing configuration for $\bH_N ^{\eps}$. 

\medskip

\noindent\textbf{Conclusion.} Combining with~\eqref{eq:free ener up bound} we deduce 
\beq \intR U_{r,\delta} \mubf ^{(1)} \geq \intR U_{r,\delta} \mu^0_\eps - \left(\eps N \right) ^{-1} \intR |z-a| ^2 \mubf ^{(1)} (z) dz -  C \eps ^{-1} \log N. \eeq
But, by~\eqref{eq:Deltabound 2} we have 
\beq \int_{D_r} \mubf ^{(1)} = - \intR U_r \mubf^{(1)} \leq - \intR U_{r,\delta} \mubf^{(1)}\eeq
and 
\beq -\intR U_{r,\delta} \mu^0_\eps \leq \int_{D_{r+\delta}} \mu_\eps ^0.\eeq
Proposition~\ref{pro:dens plasma} applies to $\mu_0 ^\eps$ because it is the empirical measure of a minimizing configuration for~\eqref{eq:hamil proof} (compare with~\eqref{eq:pert ham}). All in all we thus have 
\beq \int_{D_r} \mubf ^{(1)} \leq \pi N ^{-1} (r + \delta) ^2 \left( 1 + o (1) + \frac{C}{L^2} + \eps \delta ^{-2}\right) + \eps ^{-1} \log N +  \left(\eps N \right) ^{-1} \intR |z-a| ^2 \mubf ^{(1)} (z) dz \eeq
where we have used~\eqref{eq:Deltabound} and assumed that $r+\delta \gg 1$. Consider now the case that $r = N ^{\alpha}$ with $\alpha > 1/4$. We can then clearly choose $\eps, \delta$ depending on $r$ so as to have 
\beq r \gg \delta \gg \eps ^{1/2} \gg \left(N \log N \right) ^{1/2} r ^{-1},\eeq
which implies 
\beq N^{-1} r ^2 \gg \eps ^{-1} \log N \mbox{ and } 1 \gg \eps \delta ^{-2}\eeq
and thus~\eqref{avdensitybound bis} follows.
\end{proof}

\subsection{Localization procedure}\label{sec:localization}

We now want to extend Theorem~\ref{thm:densitybound} by dropping the angular momentum constraint. We use a rather standard localization procedure, see e.g.~\cite{Lewin-11,Rougerie-LMU} for its discussion in more general contexts. In order for the localization method to combine efficiently with what we have proved so far, we need the following elementary lemma:

\begin{lemma}[\textbf{The PL class}]\label{lem:PL class}\mbox{}\\
The class of functions $G$ satisfying~\eqref{eq:PL class} is closed under addition. Thus any function $G$ that can be written in the manner $G = \sum_{j} \alpha_j |F_j|$ with $\alpha_j$ positive numbers and $F_j$ analytic functions satisfies~\eqref{eq:PL class}.
\end{lemma}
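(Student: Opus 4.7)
The statement reduces to a one-variable claim: if $g_1,g_2\geq 0$ are such that $\log g_1$ and $\log g_2$ are subharmonic on $\R^2$, then $\log(g_1+g_2)$ is subharmonic. Indeed, to verify $\Delta_j\log(G_1+G_2)\geq 0$ we freeze all variables other than $z_j$ and apply the one-variable claim to $g_i(z_j):=G_i(z_1,\ldots,z_N)$. Once addition of two functions is handled, closure under finite sums follows by induction, and the additional factors $\alpha_j>0$ are absorbed because $\log(\alpha_j |F_j|)=\log\alpha_j+\log |F_j|$ differs from $\log|F_j|$ by a constant.

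For the one-variable claim, the plan is a direct differential calculation on the open set where both $g_i$ are smooth and strictly positive, followed by a mollification argument to reach the general distributional setting. Setting $u_i=\log g_i$ and the convex weights $p_i=e^{u_i}/(e^{u_1}+e^{u_2})$, a short computation (using $\nabla p_1=p_1 p_2(\nabla u_1-\nabla u_2)$ and $\nabla p_2=-\nabla p_1$) gives
\begin{equation*}
\Delta\log(g_1+g_2)\;=\;p_1\,\Delta u_1+p_2\,\Delta u_2+p_1 p_2\,|\nabla u_1-\nabla u_2|^2,
\end{equation*}
which is manifestly non-negative since $p_i\geq 0$ and $\Delta u_i\geq 0$. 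To conclude for general subharmonic $u_i$, I would regularize by standard convolution: $u_i^\eps=u_i*\chi_\eps$ is smooth and still subharmonic, and $g_i^\eps:=e^{u_i^\eps}$ then satisfies the pointwise inequality above, so $\log(g_1^\eps+g_2^\eps)$ is subharmonic; passing to the limit $\eps\to 0$ using monotone convergence of the mollifications of subharmonic functions and stability of subharmonicity under such limits (sub-mean-value inequality passes to the $\limsup$) yields subharmonicity of $\log(g_1+g_2)$.

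For the second statement, recall that if $F$ is analytic then $\log|F|$ is harmonic away from the zeros of $F$, while near a zero $z_0$ of order $k$ one has $\log|F(z)|=k\log|z-z_0|+\text{(harmonic)}$, so that in the sense of distributions $\Delta\log|F|=2\pi\sum_{z_0}k_{z_0}\delta_{z_0}\geq 0$ (Poincaré--Lelong). Thus each $\alpha_j|F_j|$ lies in the class~\eqref{eq:PL class} in each variable, and $G=\sum_j \alpha_j|F_j|$ does too by the closure property.

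The only real obstacle is making the distributional step rigorous, since $\log g_i$ can be $-\infty$ on a non-trivial set (e.g.\ at the zeros of an analytic $F$). The mollification step above sidesteps this because $u_i^\eps$ is finite and smooth; care is only required in showing that the $\limsup$ of $\log(g_1^\eps+g_2^\eps)$ coincides with $\log(g_1+g_2)$ wherever the latter is finite, which follows from $g_i^\eps\to g_i$ pointwise on the complement of the polar set. Everything else is routine.
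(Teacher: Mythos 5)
Your proof is correct and follows essentially the same route as the paper: a direct differential computation showing $\Delta\log(G+H)\geq 0$, combined with the subharmonicity of $\log|F|$ for analytic $F$. Your identity $\Delta\log(g_1+g_2)=p_1\Delta u_1+p_2\Delta u_2+p_1p_2\,|\nabla u_1-\nabla u_2|^2$ is an exact algebraic rearrangement of the paper's chain of inequalities ending in a perfect square, with the advantage of making the sign manifest; the mollification step you add supplies the distributional justification the paper delegates to the references (Rad\'o, Klimek), though note that with a radial decreasing mollifier $u_i^\eps$ \emph{decreases} to $u_i$, so the passage to the limit is by monotone convergence of decreasing subharmonic functions rather than a $\limsup$ argument.
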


\begin{proof}
This is a well-known fact, see e.g.~\cite[Paragraphs 2.12 to 2.14]{Rado-37} where~\eqref{eq:PL class} is called the PL class. One can verify it by a simple computation: if $G,H$ satisfy~\eqref{eq:PL class} we have 
\beq \Delta \log G = \frac{G\Delta G - |\nabla G| ^2}{G^2} \geq 0\eeq
and the same for $H$, thus
\begin{align}
 (G+H) ^2 \Delta \log (G+H) &= (G+H) (\Delta G+ \Delta H) - |\nabla G + \nabla H|^2\nonumber\\
 &= G \Delta G - |\nabla G ^2| + H \Delta H - |\nabla H| ^2 + G \Delta H + H \Delta G - 2 \nabla G \cdot \nabla H\nonumber\\
 &\geq \frac{G}{H} | \nabla H | ^2 + \frac{H}{G} | \nabla G | ^2 - 2 \nabla G \cdot \nabla H\nonumber
 \\
 &\geq \left| \sqrt{\frac{G}{H}} \nabla H - \sqrt{\frac{H}{G}} \nabla G \right| ^2\geq 0.
\end{align}
 The rest of the statement follows because any $|F|$ with $F$ analytic satisfies~\eqref{eq:PL class} as remarked after Equation~\eqref{eq:subharm 1}. See also~\cite[Page 84]{Klimek-91}.
\end{proof}

We now proceed to extending Theorem~\ref{thm:densitybound} to states not necessarily satisfying~\eqref{eq:ang mom bound}. This will conclude the

\begin{proof}[Proof of Theorem~\ref{thm:main}]
For $\Psi_F \in \cL_\ell ^N$ define
\beq \mubf_F (z_1,\ldots,z_N)= | \Psi_F (z_1,\ldots,z_N)| ^2.\eeq 
 
\medskip 

\noindent\textbf{Localization procedure.} We define a smooth partition of unity on $\R ^2$
\begin{equation}\label{eq:partition}
\chi ^2 + \eta ^2 \equiv 1,  \quad \chi ^2 (z) = e ^{- \frac{|z-a| ^2}{L ^2}}
\end{equation}
with $L \to \infty$ a length scale to be optimized over. Consider the classical $M$-particle state\footnote{Strictly speaking it is not normalized, thus not a state.} $\mubf_M$ built from $\mubf$ by localizing exactly $M$ particles and integrating out the other:
\begin{equation}\label{eq:mubf k}
\mubf_M (z_1,\ldots,z_M) := \prod_{j=1} ^M \chi^2 (z_j) \int_{\R ^{2(N-M)}} \prod_{j=M+1} ^N \eta ^2 (z_j) \mubf_F (Z_N) dz_{M+1} \ldots dz_N.  
\end{equation}
Observe that, using 
$$ \prod_{j=1} ^N \left( \chi ^2 (z_j) + \eta ^2 (z_j)\right)= 1$$
and symmetry under particle exchange, 
\beq\mubf_F (Z_N) = \sum_{M= 0} ^N { N \choose M} \prod_{j=1} ^M \chi^2 (z_j) \prod_{j=M+1} ^N \eta ^2 (z_j) \mubf_F (Z_N) \eeq 
and thus the $L^2$-normalization of $\Psi_F$ implies
\begin{equation}\label{eq:norm loc}
\sum_{M= 0} ^N  { N \choose M} \int_{\R ^{2M}} \mubf_M = 1.
\end{equation}
Next, recalling~\eqref{eq:original density} and~\eqref{eq:marginals} we denote by $ \mubf_F ^{(1)} = N ^{-1} \rho_F$ the one-particle probability density of $\Psi_F$. By symmetry under particle exchange again, we have 
\begin{align}\label{eq:dens loc}
 \chi ^2 (z_1) \mubf_F ^{(1)} (z_1) &= \int_{\R ^{2 (N-1)}} \chi^2 (z_1) \mubf_F(Z_N) dz_2\ldots dz_N\nonumber\\
 &= \int_{\R ^{2 (N-1)}} \chi^2 (z_1) \prod_{M=2} ^{N} \left( \chi^2 (z_M) + \eta ^2 (z_M) \right) \mubf_F(Z_N) dz_2\ldots dz_N \nonumber\\
&= \sum_{M=1} ^N {N-1 \choose M-1} \int_{\R ^{2 (N-1)}} \chi^2 (z_1) \ldots \chi ^2 (z_M) \eta ^2 (z_{M +1}  )\ldots \eta ^2 (z_N)  \mubf_F(Z_N) dz_2\ldots dz_N \nonumber\\
&= \sum_{M=1} ^N {N-1 \choose M-1} \mubf_{M} ^{(1)} (z_1).
\end{align}
Finally, since clearly $|z-a|^2 \chi^2 (z) \leq L ^{2+\delta} \chi^2 (z)$ for arbitrary $\delta >0$ we have 
\begin{align}\label{eq:bound mom loc}
\sum_{M= 1} ^N  { N-1 \choose M - 1} &\int_{\R ^{2}} |z_1-a| ^2 \mubf_M ^{(1)} (z_1) dz_1  \nonumber\\
&= \sum_{M= 1} ^N  { N -1 \choose M -1 } \int_{\R ^{2M}} |z_1-a| ^2  \prod_{j=1} ^M \chi^2 (z_j) \prod_{j=M+1} ^N \eta ^2 (z_j) \mubf_F(Z_N) dz_1\ldots dz_N \nonumber\\
&\leq L ^{2+\delta} \int_{\R ^{2M}} \sum_{M= 1} ^N  { N -1 \choose M -1 } \prod_{j=2} ^M \chi^2 (z_j) \prod_{j=M+1} ^N \eta ^2 (z_j) \mubf_F(Z_N) dz_1\ldots dz_N\nonumber\\
&\leq L ^{2+\delta} \int_{\R ^{2M}} \prod_{j=2} ^{N} \left( \chi^2 (z_j) + \eta ^2 (z_j) \right) \mubf_F(Z_N) dz_1 \ldots dz_N = L ^{2+\delta}. 
\end{align}

\medskip 

\noindent\textbf{Application.} Pick a disk $D$ of center $a\in\R^2$ and radius $r = N^{\alpha}$ with $\alpha > 1/4$ and choose
\beq \chi^2 (z) = \exp \left( -\frac{|z-a| ^2}{L^2}\right)\eeq 
with $L = N ^{\beta}, \beta > \alpha$. Then clearly 
\begin{equation}\label{eq:first step}
\int_{D} \mubf_F ^{(1)} \leq \left( 1+o(1) \right) \int_{D} \chi ^2 \mubf_F ^{(1)}
\end{equation}
and we expect the right-hand side to be at most of order $N ^{-1 + 2 \alpha}$. To estimate it, we start from~\eqref{eq:dens loc} and first brutally get rid of terms with small $M$ for they have too few particles to contribute to the density at this scale: 
\begin{align}\label{eq:low terms}
 \int_{D} \chi ^2 \mubf_F ^{(1)} &= \sum_{M=1} ^{N } {N \choose M} \frac{M}{N} \int_D \mubf_{M} ^{(1)} \nonumber\\
&\leq N ^{\gamma-1} \sum_{M=1} ^{N^{\gamma}} {N \choose M} \int_D \mubf_{M} ^{(1)} + \sum_{M=N^{\gamma}} ^N {N \choose M} \frac{M}{N} \int_D \mubf_{M} ^{(1)} \nonumber\\
&\leq N ^{\gamma-1} \sum_{M=1} ^{N^{\gamma}} {N \choose M} \int_{\R ^{2M}} \mubf_{M}  + \sum_{M=N^{\gamma}} ^N {N \choose M} \frac{M}{N} \int_D \mubf_{M} ^{(1)}\nonumber\\
&\leq N ^{\gamma-1} + \sum_{M=N^{\gamma}} ^N {N \choose M} \frac{M}{N} \int_D \mubf_{M} ^{(1)}
\end{align} 
where we used that 
\begin{equation}\label{eq:combinatorics}
 {N \choose M} = \frac{N}{M} {N-1 \choose M-1} 
\end{equation}
and inserted~\eqref{eq:norm loc}. We pick $\gamma < 2 \alpha$ to render the first term $N^{\gamma-1}$ in the above much smaller than the expected $N^{-1 + 2 \alpha}$, so that we may focus on the second one. 

We thus need to bound the $1$-particle probability density $\mubf_M ^{(1)}$, first marginal of the $M$-particles state $\mubf_M$. Recalling~\eqref{eq:mubf k} we can write  
\begin{equation}\label{eq:mubf k bis}
\mubf_M (z_1,\ldots,z_k) = \left|\PsiLau (z_1,\ldots,z_M) \right|^2 \exp\left( - \sum_{j=1} ^M \frac{|z_j-a|^2}{L^2} \right) G_M (z_1,\ldots,z_M)   
\end{equation}
where ($c_M$ is a positive constant ensuring normalization of~\eqref{eq:mubf k bis})
\begin{multline}\label{eq:GM}
 G_M (z_1,\ldots,z_M) =  c_M \int_{\R ^{2(N-M)}} \prod_{j=M+1} ^N \eta ^2 (z_j) e^{-|z_j| ^2} \prod_{M+1 \leq i<j \leq N} |z_i-z_j| ^{2\ell} \\ |F(z_1,\ldots,z_N)| ^2  \prod_{i= 1} ^M  \prod_{j=M+1} ^N |z_i-z_j| ^{2\ell} dz_{M+1} \ldots dz_N.  
\end{multline}
At fixed $z_{M+1},\ldots,z_N$, the integrand in the above is the squared modulus of an analytic function of the variables $z_1,\ldots,z_M$ (they appear only in the second line) and is thus subharmonic. Since the PL class is closed under addition, c.f. Lemma~\ref{lem:PL class}, we deduce that $G_M$ satisfies Assumption~\eqref{eq:PL class}. We can thus apply~\eqref{avdensitybound bis} to the normalized version of $\mubf_M$ and deduce 
\begin{multline}\label{avdensitybound bis M}
\frac{\int_D\mubf_M ^ {(1)}}{\int_{\R ^{2M}} \mubf_M } \leq \frac 1{M\ell\pi}|D| \left(1+o_D (1) + O \left(L^{-2} \right)  \right) 
\\ + C M ^{-1 - 2 \alpha'} \left(\int_{\R ^{2M}} \mubf_M \right)^{-1}\intR |z-a| ^2 \mubf_M ^{(1)} (z)dz  
\end{multline}
for any $\alpha'<\alpha$. Next, we sum the above inequalities from $M= N ^{\gamma}$ to $N$ and insert the result in~\eqref{eq:low terms} to obtain
\begin{align}\label{eq:high terms}
 \int_{D} \chi ^2 \mubf_F ^{(1)} &\leq \sum_{M=N^{\gamma}} ^N {N \choose M} \frac{M}{N} \left( \int_{\R^{2M}} \mubf_{M}\right) \frac{|D|}{\pi \ell M}\left( 1 + o _N(1) \right) \nonumber\\
 &+ C N^{-\gamma - 2 \gamma \alpha'} \sum_{M=N^{\gamma}} ^N {N \choose M} \frac{M}{N} \intR |z-a| ^2 \mubf_M ^{(1)} (z)dz + N ^{\gamma - 1}  
\end{align}
where we have used that we sum only terms where $M\gg 1$ to ensure that the $o_D (1)$ terms are indeed small. We next insert~\eqref{eq:norm loc} in the first sum and use~\eqref{eq:combinatorics} in combination with~\eqref{eq:bound mom loc} to deal with the second one, obtaining
\begin{equation}\label{eq:high terms 2}
 \int_{D} \chi ^2 \mubf_F ^{(1)} \leq \frac{|D|}{\pi \ell N}\left( 1 + o _N(1) \right) + C N^{-\gamma - 2 \gamma \alpha ' + (2+\delta) \beta  } + N^{\gamma - 1}.
 \end{equation}
To optimize over $\gamma$ we take 
\beq \gamma = \frac{1}{2+2\alpha'} + \frac{(1+\delta)\beta}{2+2\alpha'}\eeq
and since we are at liberty to choose $\alpha',\beta$ arbitrarily close to $\alpha$ and $\delta$ arbitrarily small, we can render $\gamma$ arbitrarily close to $1/2$, so that the error terms in~\eqref{eq:high terms 2} become $O (N ^{- 1/2 + \eps})$ for arbitrarily small $\eps >0$. For $\alpha > 1/4$ we deduce
\beq\int_D \mubf_F^{(1)} \leq (1+o(1)) \int_{D} \chi ^2 \mubf_F ^{(1)} \leq \frac{|D|}{\pi \ell N}\left( 1 + o _N(1) \right)\eeq
as desired because the main term is of order $N ^{-1 + 2 \alpha}$. To deduce the general result~\eqref{eq:gen dens bound} we can simply replace indicative functions of disks by indicative functions of more general open sets everywhere in Section~\ref{sec:dens Gibbs}, relying on~\eqref{eq:pert dens bound} for a general open set.
\end{proof}

\section{Energy in confining potentials}\label{sec:ener low bound}

We now provide the 

\begin{proof}[Proof of Corollary~\ref{cor:pot ener}]
This is a variation on the arguments of Section~\ref{sec:cond bound}, but it is more convenient to work with scaled variables as in~\cite[Section~3]{RouYng-15}. We thus write, with the notation of Assumption~\ref{asum:pot} 
\begin{equation}
\intR V \rhoF  = N \intR U  \muF ^{(1)}  
\end{equation}
with the scaled one-particle probability density
\begin{equation}\label{eq:rescale 1p dens}
\muF ^{(1)}  (x) = \rhoF \left( \sqrt{N} x \right). 
\end{equation}
We will deduce the result from the lower bound 
\begin{equation}\label{eq:low bound rescaled}
\intR U  \muF ^{(1)} \geq (1+o(1)) \inf\left\{ \intR U \mu \: | \: 0\leq \mu \leq \frac{1}{\pi \ell}, \: \intR \mu = 1 \right\} + o (1)
\end{equation}
by a simple change of variables. 

Let $L>R$ be a large but fixed number, with $R$ as in Assumption~\ref{asum:pot}. Define a $U_L \leq U$ such that
\begin{equation}
U_L  = \begin{cases}
        U \mbox{ in } D(0,L) \\
        \underline{U} \mbox{ in } D(0,2L)^c
       \end{cases}
\end{equation}
and $U_L$ satisfies the same bounds~\eqref{eq:pot infi 1}-~\eqref{eq:pot infi 2} as $\underline{U}$ outside of $D(0,L)$. We then consider (compare with~\eqref{eq:scale Hamil}) 
\begin{multline}\label{eq:scale Hamil pert}
\tilde{H}_N ^\eps (z_1,\ldots,z_N) = \sum_{j=1} ^N \left(|z_j| ^2 + \eps U_L (z_j) \right)- 2 \frac{\ell}{N} \sum_{1\leq i<j\leq N} \log |z_i-z_j| 
\\ - \frac{2}{N} \log |F ( \sqrt{N} z_1,\ldots, \sqrt{N}z_N)| 
\end{multline}
and estimate (recall the notation~\eqref{eq:scale Gibbs})
\begin{equation}
\cEt_N ^{\eps} [\muF] := \intRN \tilde{H}_N ^\eps \muF + \frac{1}{N} \intRN \muF \log \muF.
\end{equation}
Since $\muF$ is the Gibbs state at temperature $1/N$ associated with $\tilde{H}_N ^\eps$ and $\muF ^{(1)}$ its first marginal we have 
\begin{align}\label{eq:free ener up again}
\cEt_N ^{\eps} [\muF] &= \eps N \intR U_L \muF ^{(1)} + \intRN \tilde{H}_N ^0 \muF + \frac{1}{N} \intRN \muF \log \muF \nonumber\\
&\leq \eps N \intR U_L \muF ^{(1)} + \min_{\R^{2N}} \tilde{H}_N ^0 + C (\log N + 1)
\end{align}
using a trial state similar to~\eqref{eq:trial state} to bound the minimal free energy at temperature $1/N$ associated with $\tilde{H}_N ^0$, exactly as in~\cite[Section~3.1]{RouYng-15}. 

On the other hand, by Jensen's inequality  
\begin{align}
\intRN \muF \log \muF &\geq \intRN \muF \log \left( \frac{e^{-U_L}}{\intR e^{-U_L} }\right) ^{\otimes N}\nonumber\\
&\geq - N \intR U_L \muF ^{(1)} - C N 
\end{align}
where we used~\eqref{eq:pot infi 2} to bound $\intR e^{-U_L}$. Thus 
\begin{align}
\cEt_N ^{\eps} [\muF] &\geq \min_{\R^{2N}} \left\{ \tilde{H}_N ^\eps (z_1,\ldots,z_N) - \frac{1}{N} \sum_{j=1} ^N U_L (z_j)\right\} - C \nonumber \\
&\geq \min_{\R^{2N}}  \tilde{H}_N ^0 + \left(N\eps - 1 \right) \intR U_L \tilde{\mu}^0_\eps -C
\end{align}
where 
\beq 
\tilde{\mu}^0_\eps(z)=\frac 1N\sum_{i=1}^N\delta(z-z_i^0)
\eeq 
is the empirical measure of a configuration $(z_1 ^0,\ldots, z_N ^0)$ minimizing the classical Hamiltonian $\tilde{H}_N ^\eps (z_1,\ldots,z_N) - \frac{1}{N} \sum_{j=1} ^N U_L (z_j)$. We can apply a (re-scaled version of) Proposition~\ref{pro:dens plasma} to $\tilde{\mu}^0_\eps$. Thus, for any domain $\Omega_r$ obtain from a fixed Lipschitz $\Omega$ by scaling lengths by a factor $r = N ^{\alpha'}$, $\alpha' > -1/2$ 
\begin{align} \label{eq:pert dens bound again}
\int_{\Omega_r} \tilde{\mu}^0_\eps &\leq \frac 1{\ell \pi}|\Omega_r|\left( 1 + \frac {\eps}4 \Vert \Delta U_L\Vert_{\infty} \right)(1+o(1)) \nonumber\\
&\leq \frac 1{\ell \pi}|\Omega_r|\left( 1 + C \eps N ^{1 - 2\alpha} \right)(1+o(1))
\end{align}
where we used~\eqref{eq:pot vari} to estimate $\Vert \Delta U_L\Vert_{\infty}$. By a Riemann sum approximation on a grid of side length $N ^{\alpha'}$ we then have 
\begin{equation}
\intR U_L \tilde{\mu}^0_\eps \geq \intR U_L \hat{\mu}^0_\eps - C \norm{\nabla U_L}_{L^{\infty}} N ^{\alpha'}
\end{equation}
where $\hat{\mu}^0_\eps$ is a piecewise constant function satisfying 
\begin{equation}
0 \leq \hat{\mu}^0_\eps \leq \frac{1}{\ell \pi} \left( 1 + C \eps N ^{1-2\alpha} \right)(1+o(1))
\end{equation}
and 
\begin{equation}
\intR  \hat{\mu}^0_\eps = 1.
\end{equation}
Since, using~\eqref{eq:pot vari}, $\norm{\nabla U_L}_{L^{\infty}} \leq C N ^{1/2-\alpha}$ we deduce
\begin{equation}
\cEt_N ^{\eps} [\muF] \geq \min_{\R^{2N}}  \tilde{H}_N ^0 + \left(N\eps - 1 \right) \intR U_L \hat{\mu}^0_\eps - C N^{1/2 + \alpha' - \alpha}
\end{equation}
for any fixed $\alpha > - 1/2$. Comparing with~\eqref{eq:free ener up again} we obtain  
\begin{align}
\intR U \muF ^{(1)} &\geq \left( 1 - \frac{1}{\eps N} \right) \intR U_L \hat{\mu}^0_\eps - \frac{C \log N}{\eps N} - \frac{C N ^{1/2 + \alpha' - \alpha}}{\eps N}  \nonumber\\
&\geq \left( 1 - \frac{1}{\eps N} \right) \inf\left\{ \intR U_L \mu \: | \: 0\leq \mu \leq \frac{1}{\pi \ell} \left( 1 + C \eps N ^{1 - 2 \alpha} \right)(1+o(1)), \: \intR \mu = 1 \right\} \nonumber\\
&- \frac{C \log N}{\eps N} - \frac{C N ^{1/2 + \alpha' - \alpha}}{\eps N}  
\end{align}
where we used that $U \geq U_L$. Since $\alpha > 0$ and $\alpha'$ can be chosen  arbitrarily close to $-1/2$ in the above, we can choose $\eps = N ^{-\delta}$ with some $\delta$ less than, but sufficiently close to, $1$ to get 
\begin{equation}
 \intR U \muF ^{(1)} \geq \left( 1 - o(1) \right) \inf\left\{ \intR U_L \mu \: | \: 0\leq \mu \leq \frac{1}{\pi \ell}, \: \intR \mu = 1 \right\} + o(1)
\end{equation}
The continuity of the bathtub energy as a function of the upper density constraint we have just used is a consequence of the explicit solution of the minimization problem~\cite[Theorem~1.14]{LieLos-01}. To conclude the proof of~\eqref{eq:low bound rescaled} we note that, under our assumptions on $U$, it suffices to take $L$ large enough but fixed to have 
\begin{equation}
\inf\left\{ \intR U_L \mu \: | \: 0\leq \mu \leq \frac{1}{\pi \ell}, \: \intR \mu = 1 \right\} = \inf\left\{ \intR U \mu \: | \: 0\leq \mu \leq \frac{1}{\pi \ell}, \: \intR \mu = 1 \right\}. 
\end{equation}
This follows from the fact that minimizers of the bathtub energy are equal to either $0$ or $(\pi\ell )^{-1}$ almost everywhere, and that for potentials increasing at infinity, their support is compact.
\end{proof}

\section{Conclusions}

By employing a new theorem about multi-particle screening we have derived sharp local upper bounds on the particle density in  ground states of a classical 2D Coulomb gas perturbed by arbitrary many-body potentials that are superharmonic in every variable. We have also derived corresponding bounds for Gibbs states which, via Laughlin's plasma analogy, have direct applications to quantum many-particle states in strong magnetic fields exhibiting the Fractional Quantum Hall Effect (FQHE) and which are fully correlated in order to minimize a repulsive interaction between the particles. Our results hold asymptotically for large particle numbers, on length scales that are small compared to the full system's extension. We expect that there is room for improvement as regards the smallest length scale we can afford. This remains an \emph{open problem}.

The bounds for the classical Coulomb ground states rest on two pillars: First, an \lq\lq exclusion rule\rq\rq\ for ground state configurations which prevents points in such a configuration from entering a screening region defined by the other points in the configuration. Second, a universal density bound that we prove for any configuration of points satisfying the exclusion rule.

For the Gibbs state the modification of the ground state density caused by the entropy term in the free energy has to be estimated. This is done by perturbing the classical Coulomb Hamiltonian locally and bounding the perturbed free energy, obtaining a density bound via the Feynman-Hellmann principle. This procedure is first carried out under the assumption of an a-priori bound on the angular momentum and then extended to the general case by a localization argument.

Besides the density bounds we also derive lower bounds for the energy of fully correlated FQHE states confined in external potentials. In combination with recently obtained upper bounds~\cite{RouYng-17}, this establishes that approximate quantum mechanical ground states in the class of strongly correlated FQHE wave-functions can be obtained by generating uncorrelated quasi-holes on top of the Laughlin state.

\bibliographystyle{siam}

\begin{thebibliography}{}

\end{thebibliography}


\begin{thebibliography}{10}

\bibitem{AlmLie-89}
{\sc F.~J. Almgren} Jr. {\sc and E.~H. Lieb}, {\em Symmetric decreasing rearrangement
  is sometimes continuous}, Journal of the American Mathematical Society, 2
  (1989), pp.~683--773.

\bibitem{Ameur-16}
{\sc Y.~Ameur}, {\em {A density theorem for weighted Fekete sets}}, Int. Math.
  Res. Notices,  (2016).

\bibitem{Ameur-17}
\leavevmode\vrule height 2pt depth -1.6pt width 23pt, {\em Repulsion in low
  temperature $\beta$-ensembles}.
\newblock arXiv:1701.04796, 2017.

\bibitem{AmeOrt-12}
{\sc Y.~Ameur and J.~Ortega-Cerd\'a}, {\em {Beurling-Landau densities of
  weighted Fekete sets and correlation kernel estimates}}, J. Func. Anal., 263
  (2012), pp.~1825--1861.

\bibitem{AndGuiZei-10}
{\sc G.~W. Anderson, A.~Guionnet, and O.~Zeitouni}, {\em An introduction to
  random matrices}, vol.~118 of Cambridge Studies in Advanced Mathematics,
  Cambridge University Press, Cambridge, 2010.

\bibitem{AroSchWil-84}
{\sc S.~Arovas, J.~Schrieffer, and F.~Wilczek}, {\em Fractional statistics and
  the quantum {H}all effect}, Phys. Rev. Lett., 53 (1984), pp.~722--723.

\bibitem{BauBouNikYau-15}
{\sc R.~Bauerschmidt, P.~Bourgade, M.~Nikula, and H.-T. Yau}, {\em Local
  density for two-dimensional one-component plasma}.
\newblock arXiv:1510.02074, 2015.

\bibitem{BauBouNikYau-16}
\leavevmode\vrule height 2pt depth -1.6pt width 23pt, {\em The two-dimensional
  {C}oulomb plasma: quasi-free approximation and central limit theorem}.
\newblock arXiv:1609.08582, 2016.

\bibitem{BerLiu-13}
{\sc E.~J. Bergholtz and Z.~Liu}, {\em Topological flat band models and
  fractional {C}hern insulators}, International Journal of Modern Physics B, 27
  (2013), p.~1330017.

\bibitem{BetSan-14}
{\sc L.~B\'etermin and E.~Sandier}, {\em {Renormalized Energy and Asymptotic
  Expansion of Optimal Logarithmic Energy on the Sphere}}, ArXiv e-prints,
  (2014).

\bibitem{BloDalZwe-08}
{\sc I.~Bloch, J.~Dalibard, and W.~Zwerger}, {\em Many-body physics with
  ultracold gases}, Rev. Mod. Phys., 80 (2008), pp.~885--964.

\bibitem{BonKnuRog-16}
{\sc M.~Bonacini, H.~Kn\"upfer, and M.~R\"oger}, {\em Optimal distribution of
  oppositely charged phases: perfect screening and other properties}, SIAM J.
  Math. Anal., 48 (2016), pp.~1128--1154.

\bibitem{BurChoTop-15}
{\sc A.~Burchard, R.~Choksi, and I.~Topaloglu}, {\em {Nonlocal shape
  optimization via interactions of attractive and repulsive potentials}},
  Indiana Univ. J. Math.,  (2017).

\bibitem{Carlen-91}
{\sc E.~Carlen}, {\em Some integral identities and inequalities for entire
  functions and their application to the coherent state transform}, J. Funct.
  Anal., 97 (1991), pp.~231--249.

\bibitem{CatBriLio-98}
{\sc I.~Catto, C.~{Le Bris}, and P.-L. Lions}, {\em The mathematical theory of
  thermodynamic limits: {T}homas-{F}ermi type models}, Oxford Mathematical
  Monographs, The Clarendon Press Oxford University Press, New York, 1998.

\bibitem{ChaGozZit-13}
{\sc D.~Chafa\"i, N.~Gozlan, and P.-A. Zitt}, {\em First order asymptotics for
  confined particles with singular pair repulsions}, {Annals of Applied
  Probability}, 24 (2014), pp.~2371--2413.

\bibitem{ChaHarMai-16}
{\sc D.~Chafa\"i, A.~Hardy, and M.~Ma\"ida}, {\em {Concentration for Coulomb
  gases and Coulomb transport inequalities}}.
\newblock arXiv:1610.00980, 2016.

\bibitem{Ciftja-06}
{\sc O.~Ciftj{\'{a}}}, {\em Monte {C}arlo study of {B}ose {L}aughlin wave
  function for filling factors $1/2$, $1/4$ and $1/6$}, Europhys. Lett., 74
  (2006), pp.~486--492.

\bibitem{CifWex-03}
{\sc O.~Ciftj{\'{a}} and C.~Wexler}, {\em {Monte Carlo simulation method for
  Laughlin-like states in a disk geometry}}, Phys. Rev. B, 67 (2003),
  p.~075304.

\bibitem{Cooper-08}
{\sc N.~R. {Cooper}}, {\em {Rapidly rotating atomic gases}}, Advances in
  Physics, 57 (2008), pp.~539--616.

\bibitem{MahaluEtal-97}
{\sc R.~de~Picciotto, M.~Reznikov, M.~Heiblum, V.~Umansky, G.~Bunin, and
  D.~Mahalu}, {\em Direct observation of a fractional charge}, Nature, 389
  (1997), pp.~162--164.

\bibitem{Forrester-10}
{\sc P.~J. Forrester}, {\em Log-gases and random matrices}, vol.~34 of London
  Mathematical Society Monographs Series, Princeton University Press,
  Princeton, NJ, 2010.

\bibitem{FraLie-16}
{\sc R.~L. Frank and E.~H. Lieb}, {\em {A liquid-solid phase transition in a
  simple model for swarming}}, Indiana Univ. J. Math.,  (2017).

\bibitem{Friedman-88}
{\sc A.~Friedman}, {\em Variational principles and free-boundary problems},
  Robert E. Krieger Publishing Co., 1988.

\bibitem{Girvin-04}
{\sc S.~Girvin}, {\em Introduction to the fractional quantum {H}all effect},
  S\'eminaire Poincar\'e, 2 (2004), pp.~54--74.

\bibitem{Haldane-83}
{\sc F.~D.~M. Haldane}, {\em Fractional quantization of the {H}all effect: A
  hierarchy of incompressible quantum fluid states}, Phys. Rev. Lett., 51
  (1983), pp.~605--608.

\bibitem{Hardy-12}
{\sc A.~Hardy}, {\em {A Note on Large Deviations for 2D Coulomb Gas with Weakly
  Confining Potential}}, Electronic Communications in Probability, 17 (2012),
  pp.~1--12.

\bibitem{HarKui-12}
{\sc A.~Hardy and A.~Kuijlaars}, {\em {Weakly Admissible Vector Equilibrium
  Problems}}, Journal of Approximation Theory, 164 (2012), pp.~854--868.

\bibitem{Jain-07}
{\sc J.~K. Jain}, {\em {Composite fermions}}, Cambridge University Press, 2007.

\bibitem{Klimek-91}
{\sc M.~Klimek}, {\em Pluripotential theory}, Oxford Univ. Press, 1991.

\bibitem{Laughlin-83}
{\sc R.~B. Laughlin}, {\em Anomalous quantum {H}all effect: An incompressible
  quantum fluid with fractionally charged excitations}, Phys. Rev. Lett., 50
  (1983), pp.~1395--1398.

\bibitem{Laughlin-87}
\leavevmode\vrule height 2pt depth -1.6pt width 23pt, {\em Elementary theory :
  the incompressible quantum fluid}, in The quantum {H}all effect, R.~E. Prange
  and S.~E. Girvin, eds., Springer, Heidelberg, 1987.

\bibitem{Laughlin-99}
\leavevmode\vrule height 2pt depth -1.6pt width 23pt, {\em Nobel lecture:
  Fractional quantization}, Rev. Mod. Phys., 71 (1999), pp.~863--874.

\bibitem{Leble-15b}
{\sc T.~Lebl\'e}, {\em Local microscopic behavior for {2D C}oulomb gases}.
\newblock arXiv:1510.01506, 2015.

\bibitem{LebSer-15}
{\sc T.~Lebl\'e and S.~Serfaty}, {\em Large deviation principle for empirical
  fields of {L}og and {R}iesz gases}.
\newblock arXiv:1502.02970, 2015.

\bibitem{LebSer-16}
\leavevmode\vrule height 2pt depth -1.6pt width 23pt, {\em Fluctuations of
  two-dimensional {C}oulomb gases}.
\newblock arXiv:1609.08088, 2016.

\bibitem{LebSerZeiWu-15}
{\sc T.~Lebl\'e, S.~Serfaty, O.~Zeitouni, and W.~Wu}, {\em Large deviations for
  the two-dimensional two-component plasma}.
\newblock arXiv:1510.01955, 2015.

\bibitem{Lewin-11}
{\sc M.~Lewin}, {\em Geometric methods for nonlinear many-body quantum
  systems}, J. Funct. Anal., 260 (2011), pp.~3535--3595.

\bibitem{LewSei-09}
{\sc M.~Lewin and R.~Seiringer}, {\em Strongly correlated phases in rapidly
  rotating {B}ose gases}, J. Stat. Phys., 137 (2009), pp.~1040--1062.

\bibitem{Lieb-81b}
{\sc E.~H. Lieb}, {\em {Thomas-{F}ermi and related theories of atoms and
  molecules}}, Rev. Mod. Phys., 53 (1981), pp.~603--641.

\bibitem{LieLos-01}
{\sc E.~H. Lieb and M.~Loss}, {\em Analysis}, vol.~14 of Graduate Studies in
  Mathematics, American Mathematical Society, Providence, RI, 2nd~ed., 2001.

\bibitem{LieRouYng-16}
{\sc E.~H. Lieb, N.~Rougerie, and J.~Yngvason}, {\em Rigidity of the {L}aughlin
  liquid}.
\newblock arXiv:1609.03818, 2016.

\bibitem{LieSei-09}
{\sc E.~H. Lieb and R.~Seiringer}, {\em The {S}tability of {M}atter in
  {Q}uantum {M}echanics}, Cambridge Univ. Press, 2010.

\bibitem{LieSim-77b}
{\sc E.~H. Lieb and B.~Simon}, {\em The {T}homas-{F}ermi theory of atoms,
  molecules and solids}, Adv. Math., 23 (1977), pp.~22--116.

\bibitem{LunRou-16}
{\sc D.~Lundholm and N.~Rougerie}, {\em {Emergence of fractional statistics for
  tracer particles in a Laughlin liquid}}, Phys. Rev. Lett., 116 (2016),
  p.~170401.

\bibitem{MarMiz-72}
{\sc Marcus and V.~J. Mizel}, {\em {Absolute continuity on tracks and mappings
  of {S}obolev spaces}}, Arch. Rational Mech. Anal., 45 (1972), pp.~294--320.

\bibitem{YacobiEtal-04}
{\sc J.~Martin, S.~Ilani, B.~Verdene, J.~Smet, V.~Umansky, D.~Mahalu, D.~Schuh,
  G.~Abstreiter, and A.~Yacoby}, {\em Localization of fractionally charged
  quasi-particles}, Science, 305 (2004), pp.~980--983.

\bibitem{PapBer-01}
{\sc T.~Papenbrock and G.~F. Bertsch}, {\em Rotational spectra of weakly
  interacting {B}ose-{E}instein condensates}, Phys. Rev. A, 63 (2001),
  p.~023616.

\bibitem{ParRoySon-13}
{\sc S.~A. Parameswaran, R.~Roy, and S.~L. Sondhi}, {\em Fractional quantum
  hall physics in topological flat bands}, Comptes Rendus Physique, 14 (2013),
  pp.~816--839.

\bibitem{PetRot-16}
{\sc M.~Petrache and S.~Rota-Nodari}, {\em Equidistribution of jellium energy
  for {C}oulomb and {R}iesz interactions}.
\newblock arXiv:1609.03849, 2016.

\bibitem{PetSer-14}
{\sc M.~{Petrache} and S.~{Serfaty}}, {\em {Next Order Asymptotics and
  Renormalized Energy for Riesz Interactions}}, J. Inst. Math. Jussieu,
  (2014).

\bibitem{Rado-37}
{\sc T.~Rad\'o}, {\em Subharmonic functions}, Springer-Verlag, 1937.

\bibitem{RotSer-14}
{\sc S.~Rota~Nodari and S.~Serfaty}, {\em Renormalized energy equidistribution
  and local charge balance in 2d {C}oulomb systems}, Int. Math. Res. Not., 11
  (2015), pp.~3035--3093.

\bibitem{Rougerie-LMU}
{\sc N.~Rougerie}, {\em {De Finetti theorems, mean-field limits and
  Bose-Einstein condensation}}.
\newblock arXiv:1506.05263, 2014.
\newblock LMU lecture notes.

\bibitem{Rougerie-hdr}
\leavevmode\vrule height 2pt depth -1.6pt width 23pt, {\em Some contributions
  to many-body quantum mathematics}.
\newblock arXiv:1607.03833, 2016.
\newblock habilitation thesis.

\bibitem{RouSer-14}
{\sc N.~Rougerie and S.~Serfaty}, {\em Higher-dimensional {C}oulomb gases and
  renormalized energy functionals}, Communications on Pure and Applied
  Mathematics, 69 (2016), p.~519.

\bibitem{RouSerYng-13a}
{\sc N.~Rougerie, S.~Serfaty, and J.~Yngvason}, {\em Quantum {H}all states of
  bosons in rotating anharmonic traps}, Phys. Rev. A, 87 (2013), p.~023618.

\bibitem{RouSerYng-13b}
\leavevmode\vrule height 2pt depth -1.6pt width 23pt, {\em Quantum {H}all
  phases and plasma analogy in rotating trapped {B}ose gases}, J. Stat. Phys.,
  154 (2014), pp.~2--50.

\bibitem{RouYng-14}
{\sc N.~Rougerie and J.~Yngvason}, {\em Incompressibility estimates for the
  {L}aughlin phase}, Comm. Math. Phys., 336 (2015), pp.~1109--1140.

\bibitem{RouYng-15}
\leavevmode\vrule height 2pt depth -1.6pt width 23pt, {\em Incompressibility
  estimates for the {L}aughlin phase, part {II}}, Comm. Math. Phys., 339
  (2015), pp.~263--277.

\bibitem{RouYng-17}
\leavevmode\vrule height 2pt depth -1.6pt width 23pt, {\em The {L}aughlin
  liquid in an external potential}.
\newblock arXiv:1707.05059, 2017.

\bibitem{SafTot-97}
{\sc E.~B. Saff and V.~Totik}, {\em Logarithmic potentials with external
  fields}, vol.~316 of Grundlehren der Mathematischen Wissenschaften
  [Fundamental Principles of Mathematical Sciences], Springer-Verlag, Berlin,
  1997.
\newblock Appendix B by Thomas Bloom.

\bibitem{SamGlaJinEti-97}
{\sc L.~Saminadayar, D.~C. Glattli, Y.~Jin, and B.~Etienne}, {\em Observation
  of the $e/3$ fractionally charged {L}aughlin quasiparticle}, Phys. Rev.
  Lett., 79 (1997), pp.~2526--2529.

\bibitem{SanSer-14a}
{\sc E.~Sandier and S.~Serfaty}, {\em {1D} log gases and the renormalized
  energy: crystallization at vanishing temperature}, Probab. Theory Related
  Fields, 162 (2014), pp.~1--52.

\bibitem{SanSer-14}
\leavevmode\vrule height 2pt depth -1.6pt width 23pt, {\em {2D Coulomb Gases
  and the Renormalized Energy}}, Annals of Proba., 43 (2014), pp.~2026--2083.

\bibitem{Serfaty-15}
{\sc S.~Serfaty}, {\em {Coulomb Gases and Ginzburg-Landau Vortices}}, Zurich
  Lectures in Advanced Mathematics, Euro. Math. Soc., 2015.

\bibitem{Serfaty-17}
{\sc S.~Serfaty}, {\em {Microscopic description of log and Coulomb gases}}.
\newblock arXiv:1709.04089, 2017.

\bibitem{SerVar-69}
{\sc J.~Serrin and D.~E. Varberg}, {\em {A general chain rule for derivatives
  and the change of variables formula for the Lebesgue integral}}, Amer. Math.
  Monthly, 76 (1969), pp.~514--520.

\bibitem{StoTsuGos-99}
{\sc H.~St\"{o}rmer, D.~Tsui, and A.~Gossard}, {\em The fractional quantum
  {H}all effect}, Rev. Mod. Phys., 71 (1999), pp.~S298--S305.

\bibitem{TruKiv-85}
{\sc S.~Trugman and S.~Kivelson}, {\em Exact results for the fractional quantum
  {H}all effect with general interactions}, Phys. Rev. B, 31 (1985), p.~5280.

\bibitem{TsuStoGos-82}
{\sc D.~C. Tsui, H.~L. St\"{o}rmer, and A.~C. Gossard}, {\em Two-dimensional
  magnetotransport in the extreme quantum limit}, Phys. Rev. Lett., 48 (1982),
  pp.~1559--1562.

\bibitem{Viefers-08}
{\sc S.~Viefers}, {\em Quantum {H}all physics in rotating {B}ose-{E}instein
  condensates}, J. Phys. C, 20 (2008), p.~123202.

\bibitem{ZhaSreGemJai-14}
{\sc Y.~Zhang, G.~J. Sreejith, N.~D. Gemelke, and J.~K. Jain}, {\em {Fractional
  angular momentum in cold atom systems}}, Phys. Rev. Lett., 113 (2014),
  p.~160404.

\end{thebibliography}

\end{document}